\documentclass[a4paper, 10pt]{article}
\usepackage[margin=1in, dvips]{geometry}


\usepackage{mathtools,amssymb,amsfonts,mathrsfs,amsthm,thmtools}
\usepackage{graphicx,tikz, tikz-cd}
\usepackage{pdfpages}
\usepackage[final]{microtype}
\usepackage[pdftex]{hyperref}

\usepackage{cleveref}

\usepackage[final]{showkeys}
\usepackage[
    backend=biber,
    style=alphabetic,
    url=true
  ]{biblatex}
\addbibresource{bibliography.bib}
\addbibresource{references.bib}

\usepackage{xurl}
\hypersetup{breaklinks=true}

\usepackage{datetime}

\usepackage{array}
\usepackage[all]{nowidow}
\usepackage[affil-it]{authblk}

\usepackage{eqparbox}
\usepackage{xparse}

\usepackage{enumitem}

\usepackage[def-file=diffcoeff5]{diffcoeff}
\usepackage{tensor}

\usepackage{subfiles}

\usepackage[colorinlistoftodos,prependcaption,textsize=tiny, disable]{todonotes}

\numberwithin{equation}{section}

\allowdisplaybreaks

\theoremstyle{plain}
\newtheorem{theorem}{Theorem}[section]

\newtheorem{lemma}[theorem]{Lemma}
\newtheorem{proposition}[theorem]{Proposition}

\theoremstyle{definition}

\newtheorem{definition}[theorem]{Definition}
\newtheorem*{acknowledgements}{Acknowledgements}

\theoremstyle{remark}
\newtheorem{remark}[theorem]{Remark}



\newcommand{\R}{\mathbb{R}}
\newcommand{\N}{\mathbb{N}}
\newcommand{\sph}{\mathbb{S}}

\newcommand{\mc}[1]{\mathcal{#1}}
\renewcommand{\rm}[1]{\mathrm{#1}}
\newcommand{\mf}[1]{\mathfrak{#1}}

\newcommand{\nT}{\tensor[^0]{T}{}}

\newcommand{\sfG}{\mathsf{G}}
\newcommand{\sR}{\mathscr{R}}

\newcommand{\dd}{\rm{d}}

\DeclareMathOperator{\Diff}{Diff}
\DeclareMathOperator{\Ric}{Ric}

\DeclareMathOperator{\Tr}{tr}
\DeclareMathOperator{\Ker}{ker}

\newcommand{\del}{\partial}
\newcommand{\deld}{\delta \rm{d}}

\DeclarePairedDelimiter\abs{\lvert}{\rvert}


\renewcommand{\epsilon}{\ensuremath\varepsilon}


\renewcommand{\phi}{\ensuremath{\varphi}}

\hyphenation{Lo-rentz-ian}


\begin{document}
\title{Gluing charged black holes into de Sitter space}
\author[1,2]{Markus Verlemann \thanks{mv582@cam.ac.uk}}
\affil[1]{\small University of Cambridge, Center for Mathematical Sciences, Wilberforce Rd, Cambridge CB3 0WA, UK}
\affil[2]{\small ETH Zürich, 
Rämistrasse 10, 8092 Zurich, Switzerland}
    \maketitle
    \begin{abstract}
    We extend Hintz's cosmological black hole gluing result to the Einstein--Maxwell system with positive cosmological constant by gluing multiple Reissner--Nordström or Kerr--Newman--de Sitter black holes into neighbourhoods of points in the conformal boundary of de Sitter space. We determine necessary and sufficient conditions on the black hole parameters -- related to Friedrich's conformal constraint equations -- for this gluing to be possible. We also improve the original gluing method slightly by showing that the construction of a solution in Taylor series may be accomplished using an exactness argument, eliminating the need for an early gauge-fixing.
\end{abstract}
   \section{Introduction}
In the theory of relativity, gravity is modeled as the curvature of a spacetime $M$ with Lorentzian metric $g$ of signature (${-}{+}{+}{+}$). The \emph{Einstein field equations}, given by
\begin{equation}\label{eq:IntroEinstein}
\Ric(g) - \frac{R}{2}g + \Lambda g = 2T,
\end{equation}
relate the curvature components of the metric $g$ to the flux of energy and matter in the spacetime, described by the energy-momentum tensor $T$. This curvature in turn affects the shape of `straight' lines in the spacetime and as such particles' motions. \par
Another fundamental force, electromagnetism, is characterized by the behavior of the electromagnetic tensor $F$. This $2$-form is required to satisfy \emph{Maxwell's equations}, which in the case where there are no sources or currents present on $M$ take the form
\begin{equation}\label{eq:IntroMaxwell}
\dl{F} = 0, \quad \dl \star F= 0.
\end{equation}
Here $\star$ is the Hodge star operator associated to $g$. Such an electromagnetic field comes with an energy-momentum tensor of the form
\[
T_{\mu\nu} = F\indices{_\mu^\alpha}F\indices{_\nu_\alpha } - \frac14 F^{\alpha\beta}F_{\alpha\beta}g.
\]
Taken together, these equations therefore form a system of differential equations coupling $g$ and $F$ to each other. This system is called the \emph{Einstein-Maxwell system}. \par
A class of solutions which is especially of interest are the so-called black hole solutions. In \cite{blackholegluing} Hintz introduced a flexible gluing method by constructing spacetimes with multiple \emph{uncharged} black holes and cosmological constant $\Lambda>0$. Here we extend these results by gluing charged black holes. In particular, our spacetimes are closely related to the Kastor--Traschen spacetimes \cite{PhysRevD.47.5370}, which describe multiple black holes with charge equal to their mass (which are subextremal for $\Lambda > 0$). \par
The simplest solution to the Einstein-Maxwell equations with $\Lambda > 0$, $F=0$ is de Sitter space, which describes an expanding universe. In four dimensions it may be viewed as the manifold
\[
M = \Big(-\frac{\pi}{2}, \frac{\pi}{2}\Big)_s \times \sph^3, \quad g_{dS} = \frac{3}{\Lambda} \frac{-\dl{s}^2 + g_{\sph^3}}{\cos^2(s)}.
\]
To investigate the behavior of such spacetimes in the far future we may look at the partial conformal compactification
\[
M = \Big(-\frac{\pi}{2}, \frac{\pi}{2}\Big]_s \times \sph^3, \quad g = \cos^2(s) g_{dS} 
\]
and consider the future timelike infinity $\{\pi/2\}_s \times \sph^3$ with metric induced by $g$. Such conformal changes do not affect the causal structure of the spacetime. \par
A non-rotating charged black hole is described by the Reissner--Nordström--de Sitter metric. As demonstrated later, such a black hole may be regarded as a black hole in a de Sitter universe, drifting toward some point $p$ at the future conformal boundary of de Sitter space. 
For charge $Q \in \R$ and mass $\mf{m} \in \R$ it is modelled by the metric
\begin{equation}\label{eq:RNdSIntro}
        g_{\mf{m},Q} = -\mu_{\mf{m},Q}(r) \dl{t:2} + \mu_{\mf{m},Q}(r)^{-1}\dl{r:2} + r^2 g_{\sph^2}  
    \end{equation}
    where $\mu_{\mf{m},Q}(r) = 1-\frac{2m}{r} + \frac{Q^2}{r^2} - \frac{\Lambda}{3}r^2$. In addition to the metric, the black hole comes equipped with an electromagnetic potential, $A_Q = -Q/r \dl{t}$, whose exterior derivative gives us the electromagnetic tensor $F= \dl{A_Q}$. Together, the metric and potential make up a solution to the Einstein--Maxwell equations. \par
Taking the charge $Q$ to be $0$, so $F=0$, one obtains the Schwarzschild--de Sitter spacetime. In \cite{blackholegluing}, Hintz found that one may `glue' various Schwarzschild--de Sitter black holes into the far future of de Sitter space, as long as the black hole masses satisfy a \emph{balance condition}. We provide a direct generalization for Reissner--Nordström--de Sitter black holes. Specifically, we derive sufficient -- and under some conditions, necessary -- balance conditions for the masses and chargees under which one is able to `glue' Reissner-Nordström black holes into de Sitter space. More precisely, we prove:
\begin{theorem}
    Let $N \in \N$ and let $(p_i, \mf{m}_i, Q_i) \in \sph^3 \times \R \times \R, 1\leq i \leq N$, satisfy the \emph{charge balance condition}
    \begin{equation}
        \sum_{i=1}^N Q_i = 0,
    \end{equation}
    and the \emph{mass balance condition}
    \begin{equation}
        \sum_{i=1}^N \mf{m}_i p_i = 0 \in \R^4.
    \end{equation}
    Then there exists a metric $g$ and an electromagnetic potential $A$ with the following properties
    \begin{enumerate}
        \item $g$ and $F= \dl{A}$ satisfy the Einstein--Maxwell equations \eqref{eq:IntroEinstein}, \eqref{eq:IntroMaxwell};
        \item in a neighborhood of each $p_i$, the metric $g$ and potential $A$ are equal to the Reissner-Nordström-de Sitter metric and potential with mass $\mf{m}_i$ and charge $Q_i$;
        \item outside a neighborhood of $\{p_1,\dots, p_N\}$, the conformally rescaled metric $\cos^2(s)g$ asymptotes to $\cos^2(s)g_{dS}$ at the rate of $\cos^3(s)$ and $\cos(s)A$ approaches the de Sitter potential $0$ at the rate $\cos^2(s)$.
    \end{enumerate}
\end{theorem}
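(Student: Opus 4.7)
The plan is to adapt the gluing strategy of \cite{blackholegluing} to the Einstein--Maxwell system. Working in the partial conformal compactification of de Sitter space, we perform a real blow-up of the conformal boundary $\{\pi/2\}_s \times \sph^3$ at the $N$ points $p_i$, producing a manifold with corners whose front faces $\mathrm{ff}_i$ carry the near-field geometry of each black hole. On $\mathrm{ff}_i$ the sought-after solution should restrict (in appropriately rescaled coordinates) to the Reissner--Nordström--de Sitter metric and potential with parameters $(\mf{m}_i, Q_i)$, while far from the $p_i$ the conformally rescaled pair $(\cos^2(s)g,\, \cos(s)A)$ should agree to high order with the de Sitter background $(\cos^2(s)g_{dS},\, 0)$.

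The first and central step is to construct a formal Taylor series solution to the coupled Einstein--Maxwell system at the conformal boundary which interpolates between the near-field Reissner--Nordström--de Sitter models and the de Sitter background. Following the refinement noted in the abstract, this is carried out without any early gauge fixing: at each order, the inhomogeneous term produced by substituting the previous approximation into the nonlinear equations lies in a subspace constrained by the linearized second Bianchi identity and by $\mathrm{d}^2 = 0$, and one shows via an exactness argument that the residual obstruction lies in a finite-dimensional space. Precisely two types of classes survive. The first is the total charge, whose vanishing is exactly the condition $\sum_i Q_i = 0$ (this is essentially Gauss's law applied to a large sphere enclosing all $p_i$, using $\dl \star F = 0$). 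The second consists of the $\ell = 1$ spherical harmonic modes on $\sph^3$, canonically identified with $\R^4$ via $\sph^3 \subset \R^4$; these arise from Friedrich's conformal constraint equations in exactly the same way as in the uncharged case, and their vanishing is $\sum_i \mf{m}_i p_i = 0$. Under the two stated balance conditions the Taylor coefficients can therefore be produced to all orders, with the indicated decay rates $\cos^3(s)$ and $\cos^2(s)$ reflecting the leading mass and charge corrections, respectively.

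The second step is to upgrade this formal solution to a true one. After fixing a generalized harmonic gauge for $g$ and a Lorenz-type gauge for $A$, the Einstein--Maxwell system becomes a coupled quasilinear hyperbolic system whose linearization has the same principal structure as the vacuum linearization treated in \cite{blackholegluing}, modified by lower-order couplings through the Maxwell stress-energy $T_{\mu\nu} = F\indices{_\mu^\alpha}F\indices{_\nu_\alpha} - \tfrac14 F^{\alpha\beta}F_{\alpha\beta} g$. Its mapping properties on suitable weighted b-Sobolev spaces on the blown-up manifold are therefore inherited from the vacuum case, and a Nash--Moser or Newton-type iteration started at our formal solution produces an exact smooth solution of \eqref{eq:IntroEinstein} and \eqref{eq:IntroMaxwell} satisfying properties (1)--(3).

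The main obstacle is the formal Taylor series step. One must carefully track how Maxwell's equations interleave with Friedrich's conformal constraints and verify that, after the reorganisation provided by the exactness argument, the charge and mass balance conditions exhaust all obstructions at every order; in particular, no new obstructions may appear at higher orders, and the couplings between electromagnetic and gravitational perturbations must be compatible with the exactness structure. Once the formal solution is in hand, the hyperbolic correction step should be a relatively routine adaptation of the vacuum machinery.
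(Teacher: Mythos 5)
Your outline matches the paper's architecture in its essentials: a formal Taylor-series solution at the conformal boundary whose only obstructions are the total charge and the conformal-Killing ($\ell=1$) modes, an exactness argument (second Bianchi identity for $P$, $\delta^2=0$ for Maxwell) guaranteeing that no new obstructions appear at higher orders, and a final gauge-fixed hyperbolic step. However, there are three points where your description either misplaces the key mechanism or leaves a genuine gap. First, the balance conditions do \emph{not} come out of the exactness argument; that argument only governs orders $\lambda\geq 4$ (Maxwell) and $\lambda\geq 5$ (Einstein), where it shows the error is automatically in the image of the indicial family \emph{with no condition at all}. The obstructions arise one step earlier, and for a specific reason you do not mention: the leading errors ($\mc{O}(\tau^3)$ for $\deld$, $\mc{O}(\tau^4)$ for $P$) sit in components annihilated by the indicial family at their own order, so they must be cancelled by a correction \emph{one order lower} acting through the subleading indicial family; this forces one to solve the underdetermined divergence equations $\delta_h\tilde\alpha = f_2$ and $\delta_h k = f_1$ (with $k$ traceless) by corrections \emph{supported away from the} $p_i$ — otherwise property (2) is destroyed. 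It is this support constraint that brings in compactly supported de Rham cohomology (for the potential) and Delay's theorem (for the metric), and the balance conditions are precisely the resulting solvability conditions ($f_2$ orthogonal to constants, $f_1$ orthogonal to conformal Killing fields of a connected region, which coincide with those of $\sph^3$). Your proposal never addresses the support constraint, which is the heart of the gluing problem. Second, in the final step the paper does not run a Nash--Moser iteration on weighted b-Sobolev spaces; it solves the DeTurck/Lorenz gauge-fixed system as a quasilinear wave equation backwards from the conformal boundary using the local solvability theory of \cite{asy-ds}, and — a step you omit entirely — it must then show that the gauge quantities $\Upsilon(g)$ and $\Upsilon^M(g;A)$ themselves satisfy homogeneous wave equations with infinite-order vanishing data and hence vanish, so that the gauge-fixed solution actually solves Einstein--Maxwell. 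Third, no blow-up of the points $p_i$ and no rescaling of the black holes is performed or needed: the RNdS cosmological region is already a smooth Lorentzian $0$-metric on $[0,1/r_+)_{\tau_s}\times\R_t\times\sph^2$, and the whole construction takes place on a neighborhood of $\del M\setminus\{p_1,\dots,p_N\}$ with cutoffs on $\sph^3$.
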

As in Hintz's gluing procedure, we glue the cosmological region of the Reissner--Nordström--de Sitter black holes into de Sitter space. This is the region $r>r_+$, where $r_+$ is the largest root of $\mu(r)$ (when $\mu$ has 4 distinct roots). As any roots of $\mu(r)$ are only coordinate singularities, gluing only this region suffices, as one may `complete' the black holes later; compare with \cite[Figure 1.3.]{blackholegluing}.\par
We also show how some additional observations on Schwarzschild--de Sitter gluing made in \cite{blackholegluing} generalize to Reissner--Nordström--de Sitter gluing in a straightforward manner:
\begin{itemize}
    \item For very small masses and charges the cosmological horizons of at least two Reissner-Nordström black holes must intersect, see \cref{remark:DomainOfExistence};
    \item Under some vanishing conditions on the corrections to the de Sitter metric and potential, the mass and charge balance conditions are necessary, see \cref{theorem:NecessityOfBalanceConditions};
    \item If one allows $g,A$ to become singular at a point $p_\infty \in \sph^3$, the balance conditions are \emph{not} needed, see \cref{theorem:BlackHoleGluingNonCompact}.
\end{itemize}
We last show in \cref{sec:RotatingChargedBlackHoles} that the gluing procedure works for rotating charged black holes, modeled by the Kerr--Newman--de Sitter metric and electromagnetic potential. These have an additional rotation parameter to keep track of, which changes the balance condition slightly. The charge and mass balance conditions will instead involve a scaled version of the charges and masses, whose factor depends on the speed of rotation. Additionally, the rotation parameters will have to sum to zero in a way specified in \cref{theorem:blackholegluingKNdS}.

\subsection{Gluing at the conformal boundary and the constraints}
In the initial value formulation of general relativity, spacetimes are constructed as developments of initial data. For the Einstein--Maxwell equations, such initial data consist of a tuple $(\Sigma, h, k, \mathbf{E}, \mathbf{B})$, where $\Sigma$ is a $3$-manifold with a Riemannian metric $h$ and $k$ is a symmetric $2$-tensor. The initial data has to satisfy the constraint equations
\begin{gather*}
    R_h + (\Tr_h k)^2 - \abs{k}_h^2 - 2\Lambda = 2(\abs{\mathbf{E}}_h^2 + \abs{\mathbf{B}}_h^2), \\
    \delta_h k + \dl{\Tr_h} k = 2 \star_h (\mathbf{E} \wedge \mathbf{B}), \\
    \delta_h \mathbf{E} = 0, \quad \delta_h \mathbf{B} = 0,
\end{gather*}
where $R_h$ is the scalar curvature and $\delta_h k_\nu = - \nabla^\mu k\indices{_\mu_\nu} $ is the divergence operator of $h$. When $\Sigma \subset M$ is a spacelike hypersurface and $g,F$ are given on $M$, then $h$ is the metric on $\Sigma$ induced by $g$, and $k$ the second fundamental form.  The covectors $\mathbf{E}$ and $\mathbf{B}$ are then electric and magnetic field induced by the electromagnetic tensor $F$ on $\Sigma$. Specifically,
\[
\mathbf{E} = \star_h i_{\Sigma}^* \star_g F, \quad \mathbf{B} = \star_h i_\Sigma^* F,
\]
where $\star_h, \star_g$ are the Hodge stars of the respective metrics and $i_\Sigma^*$ denotes the pullback to $\Sigma$.
The first two constraints are found by evaluating the normal-normal and normal-tangential components of the Einstein field equations on $\Sigma$, the latter two by pulling back Maxwell's equations to $\Sigma$. If $F=\dl{A}$ for some electromagnetic potential $A$, then the constraint $\delta_h \mathbf{B} =0$ is automatically satisfied. \par
As was discovered by Choque-Bruhat in \cite{ChoqueBruhat1952}, for specified initial data satisfying the constraint equations, there exists a unique local solution to the Einstein-Maxwell system. In other words, the Cauchy problem in general relativity is well posed. Moreover, every initial data set admits a unique maximal globally hyperbolic development \cite{Choquet-Bruhat1969Dec}. A more in-depth discussion of this is provided for example \cite[\S 6]{GR-bruhat}. \par
Most gluing constructions prepare initial data so that they model the desired physical phenomena, such a black hole with some asymptotically flat behavior at infinity as in \cite{Cor00, CS06}, wormholes as in \cite{IMP02,IMP03}, or multiple Schwarzschild or Kerr black holes as in \cite{CD02, CD03}. We leave a more detailed discussion of these and how they pertain to Hintz's gluing construction to \cite[\S 1.1.]{blackholegluing}. Also see the work by Carlotto--Schoen\cite{CarlottoSchoen16}, which uses a gluing method to construct initial data localized to cones, and \cite{mao2023localizedinitialdataeinstein, mao-oh-tao} in which gluing methods using explicit solution operators are presented. Other recent results include \cite{hintz2022gluingsmallblackholes} and the series of papers \cite{hintz2024gluingsmallblackholesI, hintz2024gluingsmallblackholesII,hintz2024gluingsmallblackholesIII}, where geometric microlocal techniques are used to glue rescaled black holes into smooth initial data and along timelike geodesics respectively.\par
For studying the Einstein equations with $\Lambda > 0$, conformal methods have proven particularly useful. For example, one may consider the Einstein vacuum equations if one conformally changes the metric. As first noticed by Friedrich in \cite{Friedrich1986}, the resulting conformal Einstein equations and conformal constraints simplify a lot at the conformal boundary in the case $n+1=4$.
To elaborate on this, recall that an asymptotically simple spacetime \cite{Roger1965Feb} with cosmological constant $\Lambda > 0$ is a spacetime with boundary defining function $\tau$ and metric $g$ such that $\tau^2g$ is smooth up to the conformal boundary $\tau = 0$ and such that the metric $h$ induced by $\tau^2g$ at $\tau = 0$ is Riemannian. Both the de Sitter spacetime and the cosmological region of the Reissner--Nordström--de Sitter are asymptotically simple. In the Einstein vacuum case, scattering data at the conformal boundary then consist of the conformal boundary $\Sigma$ with Riemannian metric $h$, together with a traceless symmetric $2$-tensor $k$ satisfying
\begin{equation}\label{eq:introconstraintk}
\delta_h k = 0.
\end{equation}
This is the \emph{only} differential constraint that has to be satisfied at the conformal boundary. \par
Maxwell's equations on the other hand are already conformally invariant. On the level of an electromagnetic potential, the only constraint from Maxwell's equations therefore is
\begin{equation}\label{eq:introconstraintE}
\delta_h \mathbf{E} = 0,
\end{equation}
where $\mathbf{E}$ is the electric field induced at the conformal boundary as before. Now, it will be the case for Reissner--Nordström black holes that the energy-momentum tensor vanishes at the rate $\cos^2(s)$. This will imply that the \emph{rescaled Cotton tensor} vanishes at the conformal boundary $s = \pi/2$, cf. \cite[\S 9.1.2]{Kroon2016Jul}. Important for our purposes is then that this vanishing implies that the constraint \eqref{eq:introconstraintk} remains unchanged when considering the full Einstein-Maxwell system, see [\S 11]\cite{Kroon2016Jul}. \par
In summary, the constraints for the Einstein--Maxwell system with fast vanishing energy-momentum tensor reduce to the two \emph{linear} equations \eqref{eq:introconstraintk} and \eqref{eq:introconstraintE}. This indicates that one may try the following procedure for gluing an asymptotically simple spacetime into de Sitter space. \par
Say we have asymptotically simple metrics $g_i$ and potentials $A_i$ which we aim to glue into neighborhoods of some points $p_i$ at the conformal boundary of de Sitter space. We may start by considering a preliminary glued metric and potential obtained via a simple gluing using a partition of unity. There is no reason this should solve the Einstein--Maxwell equations, and the constraints will only be solved up to some error terms $f_1, f_2$. We may however hope to correct these errors by adding some terms to the glued metric and potential which \emph{do not} change the metric and potential in a region around the $p_i$. This way, we do not affect the black holes except in their cosmological regions.
\par
To accomplish this, one needs to solve the underdetermined elliptic differential equations
\begin{align}
    \delta_h k' = f_1 \label{eq:introdiffeqE} \\
    \delta_h \mathbf{E}' = f_2 \label{eq:introdiffeqk}
\end{align}
for $k'$ symmetric traceless and $\mathbf{E}'$ a covector. Necessary for solvability is that the errors $f_1$ and $f_2$ are orthogonal to the cokernels of $\delta_h$, which are the conformal Killing vector fields in the case of $\delta_h$ acting on traceless symmetric $2$-tensors and the locally constant functions in the case of $\delta_h$ acting on covectors. A result by Delay \cite{Delay2012} shows that these conditions are actually sufficient for solvability with the desired restrictions on the supports -- in other words, that the glued metric and potentials remain unchanged in a region around the $p_i$. In the $\mathbf{E}'$ case, one may alternatively use cohomological methods, which is also the route we will take. As in \cite{blackholegluing}, the solvability condition for the $k'$ equation will lead to the balance condition for the mass, and in our case the condition on $f_2$ will lead to the additional charge balance condition.
\subsection{Gluing in a 0-geometry framework}
In the approach taken by \cite{blackholegluing} one does not work with Friedrich's conformal Einstein equations directly. Rather, one views asymptotically simple metrics as cases of Lorentzian $0$-metrics in Mazzeo--Melrose's \cite{MazzeoMelrose} $0$-framework, which allows for a direct treatment of objects degenerating at the conformal boundary. \par
We will use the same methods and give a brief introduction to this in \cref{section:0-geometry}. As discussed in the previous subsection, we first naively glue Reissner--Nordström--de Sitter black holes into de Sitter space using a partition of unity subordinate to a cover of a neighborhood of the conformal boundary, $\sph^3$, by neighborhoods of the $p_i$. Call the thus obtained metric $g_{(3)}$ and the potential $A_{(2)}$, named this way for the errors to the Einstein--Maxwell system they produce. Indeed, the error to the inhomogeneous Maxwell's equations is a $\mc{O}(\tau^3)$ multiple of the $0$-covector field $d\tau/\tau$ and supported away from the $p_i$. A variant of the Maxwell constraint becomes relevant for correcting this error. Namely, an investigation of the linearization of Maxwell's equations shows that adding an order $\tau^3$ term to $g_{(3)}$ or $A_{(2)}$ does not affect this error. Instead, we correct this error on the $\tau^2$ level by adding such a term to $A_{(2)}$. An investigation of leading order terms of the linearization shows that this does not produce an order $\tau^2$ error; correcting the error then just means solving a differential equation as in \eqref{eq:introdiffeqE}.
We then call the corrected potential $A_{(3)}$, for it solves Maxwell's equations to order $\tau^4$. \par
The error to Einstein's field equations will be seen to be of size $\mc{O}(\tau^4)$ in the $0$-geometry framework. Again, adding an order $\tau^4$ term to the metric or potential does not help in correcting this, and we use a $\tau^3$ correction to the metric instead. To ensure that this does not disturb Einstein's equation on the $\tau^3$ level, the correcting $2$-tensor will need to be traceless and solve an equation as in \eqref{eq:introdiffeqk}. We call the new metric $g_{(4)}$. \par
After this, solving the Einstein-Maxwell system in Taylor series poses fewer problems; in particular no further differential constraints arise. We continue the scheme of alternately finding a correction to the potential which increases the order of vanishing of Maxwell's equations, and a correction to the metric which increases the order of Einstein's field equations. We call the resulting metric $g_{(\infty)}$ and potential $A_{(\infty)}$ -- these solve the Einstein--Maxwell system to infinite order at the conformal boundary. \par
We then finish the construction by finding a true solution $(g,A)$ to the Einstein--Maxwell system by gauge-fixing. We use a DeTurck gauge with background metric $g_{(\infty)}$ (see \cite{DeTurck81} and \cite{GrahamLee}) for the Einstein part and a Lorenz-like gauge with background potential $A_{(\infty)}$ (see \cite[\S 2.2]{HintzKerrNewmanDeSitter}) to transform the system into a set of quasilinear wave equations in a de Sitter background. Results from \cite{asy-ds} then provide the local solvability of this system near the boundary $\sph^3$ by a rapidly vanishing correction to the background values $g_{(\infty)}, A_{(\infty)}$. 

\begin{acknowledgements}
This paper was written as a Master's thesis at ETH Zürich. I express my sincerest gratitude to my advisor, Peter Hintz, for his invaluable ideas, expertise, and feedback. His support in finding a PhD to continue my academic work in general relativity has also helped me tremendously. I also thank my friends for the enthusiastic discussions while working on our theses together. The last revisions of this paper were done with support from EPSRC scholarship EP/W524633/1.
\end{acknowledgements}
   \section{Preliminaries}
   \subsection{0-geometry}\label{section:0-geometry}
    We start by introducing some aspects of Mazzeo--Melrose's \cite{MazzeoMelrose} $0$-framework.
    Throughout this section, let $M$ be an $(n+1)$-dimensional manifold with boundary $\del M \neq 0$. A boundary defining function is a function $\tau \in \mc{C}^\infty(M)$ with $\tau^{-1}(0) = \del M$ and $\dl{\tau} \neq 0$ on $\del M$.
    \begin{definition}
         We define the space of $0$-vector fields to be
        \[
            \mc{V}_0(M) = \{V \in \mc{V}(M) : V(p) = 0 \ \forall \ p \in \del M\},
        \]
        where $\mc{V}(M) = \mc{C}^\infty(M; TM)$ denotes the space of smooth vector fields on $M$. We define the $0$-tangent bundle $\nT M$ to be the smooth vector bundle over $M$ whose frames are given in a boundary chart $(\tau, x^i), 1 \leq i \leq n$ by \[
        \tau\del_\tau,\tau\del_{x^i}, 1\leq i \leq n.
        \]
        $0$-vector fields are then just the sections of this bundle, so $\mc{V}_0(M) = \mc{C}^\infty(M; \nT M)$. \\
        $0$-covector fields are defined in the same fashion as sections of the vector bundle $\nT^*M$, which has frames given by
        \[
        \frac{\dl{}\tau}{\tau}, \frac{\dl{x:i}}{\tau}, 1\leq i \leq n
        \]
        in local coordinates $[0,\infty)_\tau \times \R^n_x$. \\
        We also define a smooth Lorentzian $0$-metric to be a smooth section $\mc{C}^\infty(M; S^2 \nT^*M)$ with signature $(n,1)$ everywhere. One may similarly define $0$-metrics with other degrees of smoothness if needed.
    \end{definition}
    We use the opportunity to give a small lemma giving some intuition about boundary defining functions.
    \begin{lemma}\label{lemma:boundarydefiningfunctions}
        Let $\tau$ and $\Tilde{\tau}$ be two boundary defining functions on $M$. Let $x$ be a chart of the boundary around a point $p \in \del M$. Then in a small neighborhood of $p$ in $M$ we may take $(\tau,x)$ and $(\Tilde{\tau},x)$ as boundary charts near $p$. We also have $\tau\del_\tau \vert_{\del M} = \Tilde{\tau}\del_{\Tilde{\tau}} \vert_{\del M}$ as $0$-vector fields.
    \end{lemma}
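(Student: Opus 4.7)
The plan is to first verify the chart claim via the inverse function theorem, and then use Hadamard's lemma to relate $\tau$ and $\Tilde{\tau}$. For the first part, I would note that since $\tau$ vanishes identically on $\del M$, its differential $\dl{\tau}|_p$ annihilates $T_p\del M$; combined with $\dl{\tau}|_p \neq 0$, this shows that $\dl{\tau}|_p$ together with $\dl{x:1}|_p, \dots, \dl{x:n}|_p$ span $T_p^*M$. The inverse function theorem then shows that the map $q \mapsto (\tau(q), x(q))$ is a local diffeomorphism onto a neighborhood of $(0, x(p))$ in $[0,\infty)_\tau \times \R^n_x$, so $(\tau, x)$ is a valid boundary chart near $p$; the argument for $(\Tilde{\tau}, x)$ is identical.

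For the second claim, I would work in the chart $(\tau, x)$ and apply Hadamard's lemma to $\Tilde{\tau}$, which vanishes on $\{\tau = 0\}$: this yields $\Tilde{\tau} = \tau\, a(\tau, x)$ for some smooth function $a$. Since both $\tau$ and $\Tilde\tau$ are strictly positive in the interior near $p$ and $\dl{\Tilde\tau}|_{\del M} \neq 0$, one necessarily has $a(0, x) > 0$. Applying the chain rule to the overlap between the charts $(\tau, x)$ and $(\Tilde\tau, x)$ gives $\del_{\Tilde\tau} = (a + \tau\, \del_\tau a)^{-1} \del_\tau$, so
\[
\Tilde\tau\, \del_{\Tilde\tau} \;=\; \frac{\tau\, a}{a + \tau\, \del_\tau a}\, \del_\tau \;=\; \tau\, \del_\tau \cdot \frac{a}{a + \tau\, \del_\tau a}.
\]
The correction factor equals $1$ at $\tau = 0$, so the difference $\Tilde\tau\, \del_{\Tilde\tau} - \tau\, \del_\tau$ lies in $\tau \cdot \mc{V}_0(M)$ and therefore vanishes on $\del M$ as a $0$-vector field, giving the claimed equality.

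The main obstacle is purely bookkeeping; the essential input is that any two boundary defining functions differ by a smooth positive factor near the boundary, which is Hadamard's lemma. Once this is available, the $\Tilde\tau$ prefactor in $\Tilde\tau\, \del_{\Tilde\tau}$ precisely absorbs the Jacobian of the change of coordinates to leading order at $\tau = 0$, which is exactly the content of the lemma and explains why the weighted vector field $\tau\del_\tau$ is an intrinsic object at the boundary, independent of the choice of boundary defining function.
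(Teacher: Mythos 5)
Your proof is correct and takes essentially the same route as the paper: both express $\tilde\tau\del_{\tilde\tau}$ in terms of $\tau\del_\tau$ via the chain rule in the shared chart and observe that the conversion factor $\frac{\tau}{\tilde\tau}\diffp{\tilde\tau}{\tau}$ equals $1$ at $\tau=0$ (you justify the smoothness and positivity of $\tilde\tau/\tau$ by Hadamard's lemma where the paper invokes L'H\^opital, which is the same fact). You additionally spell out the inverse-function-theorem argument for the chart claim, which the paper leaves implicit.
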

    \begin{proof}
        From the formula for a change of coordinates we have
        \[
        \tau\del_\tau = \frac{\tau}{\Tilde{\tau}}\diffp{\Tilde{\tau}}{\tau}\Tilde{\tau}\del_{\Tilde{\tau}} + \diffp{x_i}{\tau}\tau\del_{x_i}
        \]
        As $\diffp{x_i}{\tau} = 0$ the claim now follows from using L'Hôpital's rule on the quotient $\frac{\tau}{\Tilde{\tau}}$ which shows that $\frac{\tau}{\Tilde{\tau}}$ is a positive smooth function.
    \end{proof}

     For $m \in \N_0$, we define $\Diff_0^m(M)$, the space of $0$-differential operators of order $m$, to be the vector space of linear combinations up to $m$-fold compositions of 0-vector fields. These may be viewed as acting on functions or, more generally, on sections of vector bundles. We recall some $0$-differential operators in \cref{sec:Differentialoperators}. \par
     We next discuss indicial families, which capture leading order behaviour of differential operators.
    \begin{definition}[Indicial families]
        Let $\tau$ be a boundary defining function. Let \[A = \sum_{i + |\alpha| \leq m} A_{i\alpha}(\tau,x) (\tau\del_\tau)^i(\tau\del_x)^\alpha\in \Diff_0^m(M),\] where the $\alpha$ are multi-indices. By commuting the $\tau$ terms coming from the $\tau\del_x$ terms with the $\tau\del_\tau$ terms we may alternatively write this in the form
        \[
        A = \sum_{i + |\alpha| \leq m} a_{i\alpha}(\tau,x)\tau^{\abs{\alpha}}(\tau\del_\tau)^i\del_x^\alpha.
        \]
        If we additionally expand $a_{i\alpha}(\tau,x) = \sum_{l=0}^n  a_{i\alpha}^{(l)}(x) \tau^l + \mc{O}(\tau^{n+1})$, we can finally write
        \[ 
            A \equiv \sum_{k=0}^n \tau^k\sum_{\genfrac{}{}{0pt}{}{i+|\alpha| \leq m}{|\alpha| \leq k}} a_{i\alpha}^{(k-|\alpha|)}(x) (\tau\del_\tau)^i\del_x^\alpha.
        \]
        modulo $\tau^{n+1} a(\tau,x)(\tau\del_\tau)^i \del_x^{\alpha}$ terms. We then define, for $k \in \N_0$ and $\lambda \in \R$
        \begin{equation}
            I(A[\tau^k],\lambda) \coloneqq \sum_{\genfrac{}{}{0pt}{}{i+|\alpha| \leq m}{|\alpha| \leq k}} a_{i\alpha}^{(k-|\alpha|)}(x) \lambda^i\del_x^\alpha.
        \end{equation}
        Observe that $I(A[\tau^0],\lambda)$ does not depend on the choice of boundary defining function; we henceforth write $I(A,\lambda) \coloneqq I(A[\tau^0],\lambda)$. 
    \end{definition}
    The indicial families are defined in such a way that if $u = u(x) \in \mc{C}^\infty(\del M)$, then
    \[
    A(\tau^\lambda u) = \tau^{\lambda}I(A,\lambda)u + \tau^{\lambda +1}I(A[\tau],\lambda)u + \dots + \tau^{\lambda + n}I(A[\tau^n],\lambda)u + \mc{O}(\tau^{\lambda + n + 1}).
    \]

   \subsection{Differential operators}\label{sec:Differentialoperators}
Let $M$ be an $n$-dimensional oriented manifold with smooth Lorentzian or Riemannian $0$-metric $g$. Many of the geometric operators familiar from differential geometry are $0$-differential operators. For example, taking a Levi--Civita derivative of a $0$-vector field with respect to another $0$-vector field again produces a $0$-vector field. One may directly calculate this using Christoffel symbols, or in a coordinate invariant way by observing that Lie brackets of $0$-vector fields are $0$-vector fields and using Koszul's formula. \par
We give some examples of such differential operators here, all of which will be encountered later. First recall the Hodge star operator $\star$, acting on a $k$-form $\beta$. A choice of orientation of $M$ provides a nowhere vanishing $n$-form $\rm{dvol}_g$. The Hodge star is then defined implicitly by requiring
\[
\alpha \wedge (\star \beta) = g(\alpha,\beta) \rm{dvol}_g
\]
for all $k$ forms $\alpha$, where $g(\alpha,\beta)$ denotes the canonical inner product on $k$-forms defined by $g$. The Hodge star is an isomorphism from the space of $k$-forms $\mc{C}^\infty (M, \Lambda^k T^*M)$ to the space of $n-k$ forms $\mc{C}^\infty (M, \Lambda^{n-k} T^*M)$. The codifferential $\delta_g$ is then the differential operator acting on a $k$-form $\beta$ using the formula
\[
    \delta_g \beta = (-1)^{n(k+1)+1}s\star \rm{d} \star \beta,
\]
where $\rm{d}$ is the exterior derivative and $s$ is the signature of the metric $g$ (so $s=1$ if $g$ is Riemannian and $s=-1$ if it is Lorentzian). This agrees with the alternative definition as the divergence, defined more generally on smooth $k$-tensors,
\[
    \delta_g \beta_{\mu_2, \dots, \mu_k} = - \nabla^{\mu_1}\beta_{\mu_1 \dots \mu_k}.
\]
The formal adjoint of the divergence acting on symmetric $2$-tensors is the symmetric gradient, which maps a $1$-form $\omega$ to the symmetric $2$-form
\[
\delta_g^* \omega_{\mu\nu} = \frac{1}{2}(\nabla_\mu \omega_\nu + \nabla_\nu \omega_\mu).
\]
We will also need the `trace reversal' operator on $2$-tensors, given by
\[
    G_g \dot{g} \coloneqq \dot{g} - \frac{1}{2}(\Tr_g \dot{g})g .
\]
Note that in the case $n=4$ we indeed have $\Tr_g \circ G_g = -\Tr_g$. Lastly we will need the Laplace--Beltrami operator, also called the (tensor) wave operator in the Lorentzian case, which acts on a $k$-tensor $u$ by
\[
\Box_g u_{\mu_1\dots\mu_k} = -\nabla^{\mu_0}\nabla_{\mu_0}u_{\mu_1\dots\mu_k}.
\]

   \subsection{The Einstein--Maxwell equations}\label{section:EinsteinMaxwell}
    In this section, we let $M$ be a $4$-dimensional Lorentzian manifold with metric $g$. The Einstein field equations for $g$ are
    \begin{equation}\label{eq:EinsteinMaxwell}
        \Ric(g) - \frac{R}{2}g + \Lambda g = 2 T,
    \end{equation}
    where $\Ric(g)$ is the Ricci curvature tensor of $g$, $R$ is the scalar curvature, $\Lambda$ is the cosmological constant and $T$ is the energy momentum tensor. We are interested in metrics $g$ satisfying the Einstein--Maxwell equations, which are the equations above with $T$ the electromagnetic energy momentum tensor \begin{equation}
        T(g,F)_{\mu\nu} = F_{\mu\alpha}F\indices{_\nu^\alpha} - \frac{1}{4}F^{\alpha\beta}F_{\alpha\beta} g_{\mu\nu}.
    \end{equation}
    Here $F$ is the Faraday $2$-form, also called the electromagnetic tensor. 
    As this $T$ is traceless, taking traces in \eqref{eq:EinsteinMaxwell} yields $R = 4\Lambda$, so the Einstein field equations reduce to
    \begin{equation}
        \Ric(g) - \Lambda g = 2 T(g,F).
    \end{equation}
    $F$ is required to satisfy Maxwell's equations in a vacuum, which read
    \begin{gather*}
        \dl{F} = 0, \\
        \dl{\star}F = 0.
    \end{gather*}
    The Reissner--Nordström black hole, introduced after this section, possesses an electromagnetic potential $A$. This potential defines an electromagnetic tensor through $F=\dl{A}$. In coordinates
    \[
        F_{\mu\nu} = \nabla_\mu A_\nu - \nabla_\nu A_\mu,
    \]
    and we will glue on this level instead of on the level of the electromagnetic tensor. The existence of a potential automatically implies the homogenous Maxwell equations $\dl{F}= \dl{:2}A = 0$. \par
    Moreover, applying the Hodge star operator to Maxwell's second equation, we see that it is equivalent to $\delta_g \dd {A} = \star \dl{\star} F = 0$. Also note that Maxwell's equations indeed imply that the energy-momentum tensor has zero divergence.
    In summary, we define for a Lorentzian metric $g$ and a covector $A$, 
    \begin{equation}\label{eq:definitionPanddeltad}
    \begin{split}
        P(g,A) &\coloneqq 2(\Ric(g) - \Lambda g - 2T(g,\dd A))  \\
        \delta \dl{}(g,A) &\coloneqq \delta_g\dd A
    \end{split}
    \end{equation}
    A spacetime with metric $g$ and electromagnetic tensor $F=\dl{A}$ is then a solution of the Einstein--Maxwell system if and only if $P(g,A)=0$ and $\deld (g,A)=0$.
   \subsection{De Sitter space and Reissner--Nordström--de Sitter black holes}
\subsubsection{De Sitter space}
    (3+1)-dimensional de Sitter space is the simplest solution to the Einstein vacuum equations with cosmological constant $\Lambda > 0$. It can be constructed as the hyperboloid $H_1 = \{(X_0,X) \subset \R^{1+4} : -X_0^2 + \abs{X}^2  = 1\}$ of radius one equipped with metric induced by the metric $3 \Lambda^{-1}(-\dl[2]{X_0} + \dl[2]{X})$.
    Viewing $\sph^3$ as a subset of $\R^4$ we may put global coordinates on it via
    \begin{align*}    
    \R_l \times \sph^3 \ni (l,\omega) &\mapsto (\sinh(l),\cosh(l)\omega),
    \end{align*}
    with the metric in these coordinates being
    \begin{equation} \label{deSitter:RadOne}
        g_{\rm{dS}} = \frac{3}{\Lambda}\big({-}\dl[2]{l} + \cosh^2{(l)}g_{\sph^3}\big).
    \end{equation}
    Setting $\tan(s/2) = \tanh(l/2)$ yields de Sitter space as $[-\pi/2,\pi/2]_s \times \sph^3$ with metric
    \[
    g_{\rm{dS}} = \frac{3}{\Lambda}\frac{-\dl[2]{s} + g_{\sph^3}}{\cos^2(s)}
    \]
    Here, $s = \pi/2$ corresponds to the future conformal boundary and $s = -\pi/2$ to the past conformal boundary. 
    It will be convenient for our investigations of the future conformal boundary to let $\tau = \cos(s)$ on the upper half $s>0$ of de Sitter space. Then $\dl[2]{\tau}= \sin^2(s)\dl[2]{s} = (1-\tau^2)\dl[2]{s}$, so the metric has the form
    \begin{equation}\label{eq:metricdSdtau}
    \frac{3}{\Lambda}\cdot\frac{-(1-\tau^2)^{-1}\dl[2]{\tau} + g_{\sph^3}}{\tau^2},
    \end{equation}
    on $[0,1)_\tau \times \sph^3$. Alternatively, we may directly take the composition of these maps on $\{(X_0,X) \in H_1 | X_0 > 0\}$, which is just
    \begin{equation} \label{deSitter:TimeTimesSphere}
        (X_0,X) \mapsto \Big(\frac{1}{\sqrt{X_0^2 + 1}}, \frac{X}{\sqrt{X_0^2 + 1}}\Big) =: (\tau, \omega).
    \end{equation}´
        
    Let us now consider coordinates on another part of de Sitter space, namely the \emph{upper half space model}. For this define the map from the upper half space to the hyperboloid of radius 1, $H_1$, via
    \begin{equation} \label{deSitter:UpperHalfSpace}
    \begin{split}
        [0,\infty)_{\tau'} \times \R^3_x &\rightarrow H_1 \\
        (\tau', x) &\mapsto \left(\frac{1 - (\tau'^2 - \abs{x}^2)}{2\tau'}, \frac{1 + (\tau'^2 - \abs{x}^2)}{2\tau'}, \frac{x}{\tau'}\right).
    \end{split}
    \end{equation}
    This map does not cover all of $H_1$, but rather only the $(X_0,X_1,X') \in H_1$ with $X_0 + X_1 > 0$. The metric takes the simple form
    \[
        \frac{3}{\Lambda}\frac{-\dl[2]{\tau'} + \dl[2]{x}}{{\tau'}^2}.
    \]
    Restricting to $\tau' \leq 1$ we have $\frac{1 - (\tau'^2 - \abs{x}^2)}{2\tau'} \geq 0$, so we may compose \cref{deSitter:TimeTimesSphere} with \cref{deSitter:UpperHalfSpace} for the map
    \begin{gather*}
        [0,1)_{\tau'} \times \R_x^3 \longrightarrow [0,1)_\tau \times \sph^3_\omega \\
        \tau = \Big( \Big(\frac{1-(\tau' - \abs{x}^2)}{2\tau'}\Big)^2 + 1\Big)^{-\frac{1}{2}}, \quad  \omega = \frac{\tau}{\tau'}\Big(\frac{1 + (\tau'^2 - \abs{x}^2)}{2}, x\Big)
    \end{gather*}
    Now put polar coordinates $x = R\omega'$, $R>0, \omega \in \sph^2$ on $\R^3_x$. In the \emph{cosmological region}, $R > \tau'$, we may let
    \[
        (t,r,\omega') = \Big(-\frac{1}{2}\sqrt{\frac{3}{\Lambda}}\log(R^2 - {\tau'}^2), \sqrt{\frac{3}{\Lambda}}\frac{R}{\tau'},\omega'\Big)
    \]
    The metric is
    \begin{equation}\label{deSitter:UpperHalfSpaceMetric}
        g_{\rm{dS}} = -\Big(\frac{\Lambda r^2}{3}-1\Big)^{-1}\dl[2]{r} + \Big(\frac{\Lambda r^2}{3}-1\Big)\dl[2]{t} + r^2g_{\sph^2}
    \end{equation}
     These coordinates are defined on $\R_t \times (\sqrt{3/\Lambda},\infty) \times \sph^2$, with $r$ taking the role of our time coordinate.
    We may compactify this by letting 
    \begin{equation}\label{deSitter:tauscoordinates}
    \tau_s = r^{-1} \in [0, \sqrt{\Lambda/3}).
    \end{equation} After this coordinate change the metric is
    \begin{equation}\label{eq:metricdStau_s}
        g_{\rm{dS}} = \frac{-(\Lambda/3 - \tau_s^2)^{-1}\dl[2]{\tau_s} + (\Lambda/3 - \tau_s^2)\dl[2]{t} + g_{\sph^2}}{\tau_s^2}
    \end{equation}
    This is a smooth $0$-metric.
    
\begin{remark}\label{remark:MetricsAtConformalBoundary}
    Let us consider the future conformal boundaries in the various coordinates. In the $\tau$ coordinate, the future conformal boundary $s = \pi/2$ is given by $\tau = 0$. There it induces the metric $h_\tau = \frac{3}{\Lambda} g_{\sph^3}$.
    On the other hand, $\tau'$ is an equivalent boundary defining function. Namely, observe that the future conformal boundary is contained in the area covered by the $(\tau',x)$ coordinates. As 
    \[
        \tau(\tau',x) = \frac{1}{\sqrt{1 + \big(\frac{1 - ({\tau'}^2 - \abs{x}^2)}{2\tau'}\big)^2}},
    \]
    it is easy to see that $\tau = 0$ is equivalent to $\tau' = 0$. We expect that the metric $h_{\tau'}$ induced by $\tau'$ at the conformal boundary is related to $h_\tau$ via $h_{\tau} = \tau^2(\tau')^{-2} h_{\tau'}$. Indeed, at $\tau' = 0$ the induced metric is $h_{\tau'} = \frac{3}{\Lambda} \dl[2]{x}$.
    To see that this is conformally related to $h_\tau$, first observe that at the conformal boundary we have
    \[
        \frac{\tau}{\tau'} = \frac{2}{1 + \abs{x}^2}
    \]
    Thus, the map between $\sph^3$ and $\R^3$ at $\tau' = 0, \tau = 0$ is just the stereographic projection. Namely, it is the map 
    \begin{equation}\label{eq:StereographicProjection}
    x \mapsto (1+\abs{x}^2)^{-1}\cdot(1-\abs{x}^2,2x),
    \end{equation}
    so the pullback of $g_{\sph^3}$ is $4(1+|x|^2)^{-2} \dl{x:2}$, whence $h_\tau = \tau^2(\tau')^{-2}h_{\tau'}$.
    Similarly, as $\tau_s = r^{-1} = \sqrt{\frac{3}{\Lambda}}\frac{\tau'}{R}$ we get $\tau_s = 0$ if and only if $\tau' = 0$.
    At the boundary we thus have
    \begin{gather*}
        \frac{\tau'}{\tau_s} = \sqrt{\frac{\Lambda}{3}}R \\
        \frac{\tau}{\tau_s} = \frac{\tau}{\tau'}\frac{\tau'}{\tau_s} = \sqrt{\frac{\Lambda}{3}}\frac{2R}{1+R^2}
    \end{gather*}
    Factor out the $3/\Lambda$ factor coming from the $\dl[2]{\tau_s}/\tau_s^2$ term at $\tau_s = 0$. The induced metric at the conformal boundary is then $\Lambda^2/9 \dl[2]{t} + \Lambda/3 g_{\sph^2}$; this will be used later in \cref{prop:CorrectionOfObstructionError}. One may verify that this is again conformally related to $h_\tau$ and $h_{\tau'}$ by the factors $\tau^2\tau_s^{-2}$ and $(\tau')^2\tau_s^{-2}$.
 \end{remark}

\subsubsection{The Reissner--Nordström--de Sitter metric}
    The Reissner--Nordström--de Sitter (RNdS) metric with mass $\mf{m} \in \R$ and charge $Q \in \R$ is given by
    \begin{equation}\label{eq:RNdS}
        g_{\mf{m},Q} = -\mu_{\mf{m},Q}(r) \dl{t:2} + \mu_{\mf{m},Q}(r)^{-1}\dl{r:2} + r^2 g_{\sph^2}  
    \end{equation}
    where $\mu_{\mf{m},Q}(r) = 1-\frac{2m}{r} + \frac{Q^2}{r^2} - \frac{\Lambda}{3}r^2$. Equipped with the vector potential $A_Q = -Q/r \dl{t}$ it is a solution to the Einstein-Maxwell system \eqref{eq:EinsteinMaxwell}. \\
    We consider \eqref{eq:RNdS} in the cosmological region $r> r_+$, where $r_+$ is the largest root of $\mu_{\mf{m},Q}(r)$ (if one exists, otherwise choose $r_+ >0$ arbitrarily). \\
    A priori, the metric seems to become singular at points where $\mu_{m,Q}(r) = 0$. These points are however only coordinate singularities. Indeed, one may choose tortoise coordinates
    \[
    t^* = t \mp r^*, \quad (r^*)' = \frac{1}{\mu}
    \]
    defined in a region above or below some root of $\mu$ (e.g. the cosmological region). In these coordinates the metric is $g_{m,Q} = -\mu_{\mf{m},Q}(r) \dl{t:2} \pm \dl{t}^*\dl{r} + r^2 g_{\sph^2}$ which is completely regular for all $r>0$. \\
    Introducing the coordinate $\tau_s \coloneqq 1/r$ makes 
    \begin{equation}\label{eq:RNdSintaus}
    g_{\mf{m},Q} = \frac{\lambda_{\mf{m},Q}(\tau_s)^{-1}\dl{}\tau_s^2 -  \lambda_{\mf{m},Q}(\tau_s)\dl{t:2} + g_{\sph^2}}{\tau_s^2}
    \end{equation}
    a smooth $0$-metric on $[0,1/r_+) \times \R_t \times \sph^2_\omega$, where $\lambda_{\mf{m},Q}(\tau_s) = \tau_s^2 - 2\mf{m}\tau_s^3 + Q^2\tau_s^4 - \Lambda/3$. The vector potential and electromagnetic tensor are
    \begin{equation}\label{eq:electromagneticpotential}
        A_Q = -Q\tau_s^2 \frac{\dl{t}}{\tau_s}, \quad F_Q = -Q\tau_s^2 \frac{\dl{}\tau_s}{\tau_s} \wedge \frac{\dl{t}}{\tau_s}.
    \end{equation}\\
    \subsection{The linearization of the Einstein--Maxwell equations}
    Here we study the leading order behavior of the Einstein--Maxwell equations by finding their linearizations and the indicial families of these. Let $L_{0,g,A}$ and $K_{0,g,A}$ be the linearizations in both arguments $g,A$ of $P(g,A)$ and $\deld(g,A)$, respectively. Let us also define 
    \[
    P_0(g) = P(g,0) = 2(\Ric(g) - \Lambda g),
    \]
    and let $L_{0,g}$ be its linearization at $g$.\\
    In \cite{DeTurck81, GrahamLee, blackholegluing}, it was calculated that
    \begin{equation}
        L_{0,g} = \Box_g - 2\delta_g^*\delta_g\sfG_g + 2\sR_g - 2\Lambda,
    \end{equation}
    where 
    \begin{equation}
        \sR_g(u)_{\mu\nu} = \rm{Riem}\indices{^\alpha_\mu_\nu^\beta}u_{\alpha\beta} + \frac{1}{2}(\Ric\indices{_\mu^\alpha}u_{\alpha\nu} + \Ric\indices{_\nu^\alpha}u_{\alpha\mu}).
    \end{equation}
    Here $\rm{Riem}$ is the Riemann curvature tensor of $g$. To further examine this we follow \cite{blackholegluing} by introducing certain bundle splittings and then calculating the components of the differential operators in these splittings.\\
    Let $M = [0,1)_\tau \times X$, where $X$ is a 3-dimensional manifold. We let 
\begin{equation}
    \label{eq:def0-frame}
    \begin{alignedat}{2}
    &e^0 = \frac{\dl{\tau}}{\tau}, \quad& &e^i = \frac{\dl{x^i}}{\tau}, \\
    &e_0 = \tau \del_\tau, \quad& &e_i = \tau \del_{x^i},
    \end{alignedat}
\end{equation}
be frames for the $0$-cotangent, respectively $0$-tangent, bundle for some chart $x$ of $X$.
\ 
    Split $\nT^*M$ and $S^2 \nT^*M$ according to
\begin{equation}
\begin{split}\label{eq:splittingsymmetric}
    \nT^*M &= \R e^0 \oplus \tau^{-1}T^*X, \\
    S^2 \nT^*M &= \R (e^0)^2 \oplus (2e^0 \otimes_s\tau^{-1}T^*X) \oplus \tau^{-2}S^2 T^*X.
\end{split}
\end{equation}
For a 1-form $\alpha$, we write $\alpha_N$ for its normal part and $\alpha_T$ for its tangential part in this splitting. If we take $h$ to be a Riemannian metric on $X$, the splitting for $S^2 \nT^*M$ may further be refined to
\begin{equation}\label{eq:refinedsplitting}
    S^2 \nT^*M = \R (e^0)^2 \oplus (2e^0 \otimes_s \tau^{-1}T^*X) \oplus \R \tau^{-2}h \oplus \tau^{-2}\Ker \Tr_h.
\end{equation}
That is, we split a symmetric $2$-tensor $t$ on $X$ into its traceless part $t-\frac{1}{3} h \Tr_h t$ and its pure trace part $\frac{1}{3} h \Tr_h t$. We use the notation $t_{NN}, t_{NT}, t_{TT1}, t_{TT0} \in \R$ for normal-normal, normal-tangential, tangential-tangential-pure trace, tangential-tangential-trace-free parts in \eqref{eq:refinedsplitting}.\par
As we will be working with exterior derivatives in Maxwell's equations, we split the antisymmetric $(0,2)$ $0$-tensor bundle $\Lambda^2 \nT^*M$ via
\begin{equation}\label{eq:splittingantisymmetric}
    \Lambda^2 \nT^*M = (e^0 \wedge \tau^{-1}T^*X) \oplus \tau^{-2}\Lambda^2 T^*X.
\end{equation}

    Then, if $M$ is equipped with the product metric
    \begin{equation}\label{eq:baseformproductmetric}
        g = \frac{3}{\Lambda}\frac{-\dl{\tau}^2 + h(x,\dl{x})}{\tau^2},
    \end{equation}
    we obtain matrix representations of 0-differential operators in the splittings \eqref{eq:splittingsymmetric}. We cite:
    \begin{lemma}[{\cite[Lemma 2.4.]{blackholegluing}}]\label{lemma:splittingsDiffOpOnSymmetric}
  \[
    \delta_g^* = \begin{pmatrix} e_0 & 0 \\ \frac{1}{2}\tau \rm{d}_X & \frac{1}{2}(1+e_0) \\ h & \tau\delta_h^* \end{pmatrix}, \quad
    3\Lambda^{-1}\delta_g = \begin{pmatrix} e_0-3 & \tau\delta_h & -\Tr_h \\ 0 & e_0-4 & \tau\delta_h \end{pmatrix}
  \]
  and, as operators on symmetric 2-tensors,
  \[
    \sfG_g = \begin{pmatrix} \frac{1}{2} & 0 & \frac{1}{2}\Tr_h \\ 0 & 1 & 0 \\ \frac{1}{2} h & 0 & \sfG_h \end{pmatrix}, \quad
    3\Lambda^{-1}\Box_g = e_0^2 - 3 e_0 + \tau^2\Delta_h + \begin{pmatrix} -6 & 4\tau\delta_h & -2\Tr_h \\ -2\tau \rm{d}_X & -6 & 2\tau\delta_h \\ -2 h & -4\tau\delta_h^* & -2 \end{pmatrix}.
  \]
  Finally, if $R_{\kappa\lambda\mu\nu}$ and $\Ric_{\mu\nu}$ denote the Riemann curvature tensor and Ricci tensor of $g$, then the operator $\sR_g(u)_{\kappa\mu}=R^\nu{}_{\kappa\mu}{}^\rho u_{\nu\rho}+\frac{1}{2}(\Ric_\kappa{}^\nu u_{\nu\mu}+\Ric_\mu{}^\nu u_{\kappa\nu})$ is equal to
  \[
    3\Lambda^{-1}\sR_g = \begin{pmatrix} 3 & 0 & \Tr_h \\ 0 & 4 & 0 \\ h & 0 & 4-h\Tr_h \end{pmatrix} + \tau^2\begin{pmatrix} 0 & 0 & 0 \\ 0 & \frac{1}{2}\Ric(h) & 0 \\ 0 & 0 & \sR_h \end{pmatrix}.
  \]
    \end{lemma}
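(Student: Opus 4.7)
The plan is to reduce every matrix in the lemma to calculations with the Levi--Civita connection of $g$ in the $0$-frame $(e_0, e_i)$, carried out component-by-component in the splittings \eqref{eq:splittingsymmetric} and \eqref{eq:refinedsplitting}. The starting observation is that in this frame $g$ has $\tau$-independent inner products $g(e_0, e_0) = -3/\Lambda$ and $g(e_i, e_j) = (3/\Lambda) h_{ij}$, and that the Lie brackets read $[e_0, e_i] = e_i$ and $[e_i, e_j] = 0$. Koszul's formula then yields
\[
\nabla_{e_0} e_0 = 0, \qquad \nabla_{e_0} e_i = 0, \qquad \nabla_{e_i} e_0 = -e_i, \qquad \nabla_{e_i} e_j = -h_{ij} e_0 + \tau\, \Gamma(h)^k_{ij}\, e_k,
\]
where $\Gamma(h)$ denotes the Christoffel symbols of $h$ on $X$. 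These four identities are the workhorse of the whole computation.

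With the connection in hand, $\delta_g^*$ and $\sfG_g$ follow almost tautologically. To obtain $\delta_g^*$ I would apply $\nabla$ to a $1$-form $\alpha = \alpha_N e^0 + \alpha_T$ and symmetrize: the $NN$-entry receives $e_0 \alpha_N$; the off-diagonal block receives $\tfrac12 \tau\, \rm{d}_X \alpha_N$ and $\tfrac12(1 + e_0) \alpha_T$ from $\nabla_{e_i} e_0 = -e_i$ together with $\nabla_{e_0} e_i = 0$; and the $TT$-block receives a pure-trace term $h\,\alpha_N$ from $\nabla_{e_i} e_j \supset -h_{ij} e_0$ together with the rescaled symmetric gradient $\tau \delta_h^* \alpha_T$. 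For $\sfG_g = \mathrm{id} - \tfrac12 (\Tr_g \cdot)\, g$ I would insert the frame expression $\Tr_g u = (\Lambda/3)(-u_{NN} + \Tr_h u_{TT})$; the $\Lambda$ factors cancel against the $3/\Lambda$ in $g$ itself and deliver the $3\times 3$ matrix stated.

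The formula for $3 \Lambda^{-1} \delta_g$ comes out the same way by contracting $-\nabla^\mu u_{\mu \nu}$ using $g^{00} = -\Lambda/3$ and $g^{ij} = (\Lambda/3) h^{ij}$, the common $\Lambda/3$ in the inverse metric being precisely what the $3 \Lambda^{-1}$ prefactor on the left absorbs. The diagonal shifts $e_0 - 3$ and $e_0 - 4$ arise from summing the three connection corrections $\nabla_{e_i} e_0 = -e_i$ together with the implicit $\tau$-weights $\tau^{-1}$ and $\tau^{-2}$ carried by the factors $\tau^{-1} T^* X$ and $\tau^{-2} S^2 T^* X$ of the splitting. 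A convenient cross-check is that the resulting $\delta_g$ and $\delta_g^*$ must be formal adjoints with respect to the bundle inner product induced by $g$.

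The principal labour lies in $\Box_g$ and $\sR_g$. For $\Box_g$ I would apply $-\nabla^\mu \nabla_\mu$ componentwise to a symmetric $2$-tensor: the scalar diagonal contribution $e_0^2 - 3 e_0 + \tau^2 \Delta_h$ arises from iterating $e_0$- and $e_i$-derivatives together with the weight shift $-3 = -\dim X$ produced by commutators with $\nabla_{e_i} e_0 = -e_i$, while the constant $3 \times 3$ matrix collects the remaining lower-order connection terms coming from the cross-derivatives. For $\sR_g$ I would compute the Riemann tensor from $R(X, Y) Z = \nabla_X \nabla_Y Z - \nabla_Y \nabla_X Z - \nabla_{[X, Y]} Z$ using the four identities above: $g$ is asymptotically of constant sectional curvature $\Lambda/3$, which produces the $\tau$-independent matrix, while the intrinsic Riemann tensor of $h$ enters only through the $\tau \, \Gamma(h)$ piece of $\nabla_{e_i} e_j$, so that $\Ric(h)$ and $\sR_h$ can only appear at the displayed order $\tau^2$ after one more covariant derivative. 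Substituting into the algebraic expression for $\sR_g$ and projecting onto the splitting yields the asserted decomposition. The hard part of the proof is not conceptual but organisational: the $3 \times 3$ splitting and the careful tracking of $\tau$-weights generate many cross-terms that must be assembled precisely, so in practice I would compute each column of each matrix separately on a test tensor of pure type before gluing the results together.
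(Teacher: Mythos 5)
Your outline is correct: the frame identities $\nabla_{e_0}e_\mu=0$, $\nabla_{e_i}e_0=-e_i$, $\nabla_{e_i}e_j=-h_{ij}e_0+\tau\Gamma(h)^k_{ij}e_k$ follow from Koszul's formula exactly as you say (they are dual to the coframe identities $\nabla_{e_i}e^0=h_{ik}e^k$, $\nabla_{e_i}e^k=\delta_i^k e^0-\tau\Gamma(h)^k_{ij}e^j$ used in the paper's proof of \cref{lemma:splittingdeltagd}), and the entry-by-entry evaluation on the splittings \eqref{eq:splittingsymmetric} then reproduces each matrix. Note that the paper does not prove this lemma itself but cites \cite[Lemma 2.4]{blackholegluing}; your computation is the same direct method that both that reference and the paper's proof of the companion \cref{lemma:splittingdeltagd} employ.
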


In the refined splitting \eqref{eq:refinedsplitting} the linearization $L_{0,g}$ of $P_0$ therefore is, modulo $\tau^2 \Diff_0^2$ terms
\begin{align*}
    3\Lambda^{-1}L_{0,g} \equiv \begin{pmatrix}
3(e_0 - 2) & 2\tau(1-e_0)\delta_h & 3e_0(2-e_0) & 0 \\
2\tau \rm{d}_X& 0 & -2\tau e_0 \rm{d}_X & -\tau e_0 \delta_h\\
6-e_0 & \frac23 \tau (e_0 - 5) \delta_h & e_0(e_0-6)& 0 \\
0 & 2\tau(2-e_0) \delta_{h,0}^* & 0 & e_0(e_0-3)
\end{pmatrix}
\end{align*}
    
    We now additionally calculate the linearization of $-4T$ in the arguments $g$ and $A$:
    \begin{equation}\label{eq:CalcLinearization4T}
        \begin{split}
        D_g[-4T(\cdot, A)](\dot{g})_{\mu\nu} &= 4 F\indices{_\mu^\alpha} F\indices{_\nu^\beta} \dot{g}_{\alpha\beta} + F^{\alpha\beta}F_{\alpha\beta} \dot{g}_{\mu\nu} - 2 F^{\alpha\lambda}F\indices{_{\alpha}^\rho} \dot{g}_{\lambda\rho} g_{\mu\nu}  \\
        D_A [-4T(g,\cdot)](\dot{A})_{\mu\nu} &= -4F\indices{_\mu^\alpha}\dot{F}\indices{_\nu_\alpha} - 4F\indices{_\nu^\alpha}\dot{F}\indices{_\mu_\alpha} + 2\dot{F}^{\alpha\beta}F_{\alpha\beta}g_{\mu\nu},
        \end{split}
    \end{equation}
        where $\dot{F} = d\dot{A}$. 
        Similarly, the linearizations of $\deld$ at $g$ and $A$ are
        \begin{gather}
            D_g [\deld(\cdot,A)](\dot{g})_\mu = F\indices{_\mu^\alpha}\Big(\frac{1}{2}\nabla_\alpha \Tr_g \dot{g} - \nabla_\beta \dot{g}\indices{_\alpha^\beta}\Big) + F^{\alpha\beta} \nabla_\mu \dot{g}_{\alpha\beta} + \dot{g}^{\alpha\beta}\nabla_\beta F_{\alpha\mu} \\
            D_A[\deld(g,\cdot)](\dot{A})_\mu = \delta_g \rm{d}(\dot{A})
        \end{gather}
        In case $g,A$ come from a de Sitter background, so $g=g_{dS}, A=0$, note that all these linearizations, except for $D_A(\delta_g \rm{d})$, vanish.

\begin{lemma}\label{lemma:splittingdeltagd}
In the splitting \eqref{eq:splittingantisymmetric} we have
\[
 3\Lambda^{-1}\delta_g = \begin{pmatrix}
        -\tau \delta_h & 0 \\ e_0-2 & \tau\delta_h \end{pmatrix}, \quad d = \begin{pmatrix}
-\tau \rm{d}_X & e_0-1 \\ 0 & \tau \rm{d}_X
    \end{pmatrix}
\]
so
\[
3\Lambda^{-1}\delta_gd = \begin{pmatrix}
    \tau^2 \delta_h\rm{d}_X & \tau(1-e_0)\delta_h\\
    \tau(1-e_0)\rm{d}_X & (e_0-2)(e_0-1) + \tau^2 \delta_h \rm{d}_X
\end{pmatrix}
\]
\end{lemma}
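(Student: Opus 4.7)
The plan is to compute the matrices for $d$ and $\delta_g$ separately in the $1$-form splitting $\nT^*M = \R e^0 \oplus \tau^{-1}T^*X$ and the $2$-form splitting \eqref{eq:splittingantisymmetric}, working with the product metric \eqref{eq:baseformproductmetric} as in \cref{lemma:splittingsDiffOpOnSymmetric}; the composition $\delta_g d$ is then a direct $2\times 2$ matrix product.

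For $d$ acting on a $1$-form $A = \phi\, e^0 + \tau^{-1}\alpha$ (with $\phi \in \mc{C}^\infty(M)$ and $\alpha$ a $\tau$-dependent section of $T^*X$), I would expand using Leibniz. The term $d(\phi\, e^0) = d\phi \wedge e^0$ reduces to $\rm{d}_X\phi\wedge e^0 = -e^0\wedge \tau^{-1}(\tau\,\rm{d}_X\phi)$, since $de^0 = d(\tau^{-1}\rm{d}\tau) = 0$ and the $\partial_\tau\phi\,\rm{d}\tau$ piece of $d\phi$ is killed by $\rm{d}\tau\wedge\rm{d}\tau = 0$. For $d(\tau^{-1}\alpha) = -\tau^{-2}\rm{d}\tau\wedge\alpha + \tau^{-1}d\alpha$, regrouping using $e_0 = \tau\partial_\tau$ produces an $e^0\wedge\tau^{-1}$-component equal to $(e_0 - 1)\alpha$ and a $\tau^{-2}\Lambda^2 T^*X$-component equal to $\tau\,\rm{d}_X\alpha$. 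Reading off the two components gives the claimed matrix for $d$.

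For $\delta_g$ on $2$-forms, I would exploit conformal covariance to avoid a direct Christoffel-symbol expansion. Write $g = \frac{3}{\Lambda}\tilde g$ with $\tilde g = \tau^{-2}\bar g$ and $\bar g = -\rm{d}\tau^2 + h(x,\rm{d} x)$; the constant factor yields $3\Lambda^{-1}\delta_g = \delta_{\tilde g}$. Since the Hodge star is conformally invariant on $2$-forms in dimension $4$ and scales by $\tau^2$ on $3$-forms under $\tilde g = \tau^{-2}\bar g$, the formula $\delta\beta = \star d\star\beta$ gives $\delta_{\tilde g}\beta = \tau^2\delta_{\bar g}\beta$ on $2$-forms. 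The product metric $\bar g$ has Christoffels that decouple the $\tau$ and $X$ directions, so a short calculation of $(\delta_{\bar g}\beta)_\nu = -\bar g^{\mu\lambda}\nabla_\lambda\beta_{\mu\nu}$ on the $2$-form $\beta = \tau^{-2}(\rm{d}\tau\wedge\eta + \omega)$ produces an $e^0$-coefficient $-\tau\delta_h\eta$ and a tangential part $(e_0 - 2)\eta + \tau\delta_h\omega$, matching the stated matrix for $3\Lambda^{-1}\delta_g$.

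Composing the two matrices gives $\delta_g d$. The only subtlety is the commutator $[e_0,\tau] = \tau$, used (together with $[\delta_h, e_0] = [\rm{d}_X, e_0] = 0$, since both are purely $X$-operators) to rewrite $(e_0 - 2)\tau\rm{d}_X = \tau\rm{d}_X(e_0 - 1)$ and $-\tau\delta_h(e_0 - 1) = \tau(1 - e_0)\delta_h$; collecting terms then yields the stated form. The main potential obstacle is the codifferential computation itself, and the conformal-invariance shortcut is what keeps it manageable, since expanding $\delta_g$ directly via Christoffel symbols of the full conformal metric $g$ would be markedly more laborious.
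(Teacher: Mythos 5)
Your proposal is correct, and its overall skeleton (compute the matrices of $d$ and $\delta_g$ in the splittings, then multiply, minding the commutator $e_0\tau=\tau(e_0+1)$ and the fact that $\rm{d}_X,\delta_h$ commute with $e_0$ and with multiplication by $\tau$) necessarily matches the paper's. The genuine difference is in how you compute $\delta_g$: the paper works directly with the $0$-frame, records the covariant derivatives $\nabla_{e_0}e^\mu=0$, $\nabla_{e_i}e^0=h_{ik}e^k$, $\nabla_{e_i}e^k=\delta_i^k e^0-\tau\Gamma(h)_{ij}^k e^j$ of the full metric $g$, and evaluates the divergence formula $\delta_g\beta_\nu=-\nabla^\mu\beta_{\mu\nu}$ entry by entry (it spells out only the bottom-left entry). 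You instead peel off the constant to get $3\Lambda^{-1}\delta_g=\delta_{\tilde g}$ with $\tilde g=\tau^{-2}\bar g$, and use the conformal weights of the Hodge star ($\Omega^{m-2k}$ on $k$-forms, so invariance on $2$-forms and a factor $\tau^2$ on $3$-forms in dimension $4$) to reduce to the honest product metric $\bar g=-\rm{d}\tau^2+h$, whose Christoffel symbols decouple; I checked that this reproduces $(-\tau\delta_h\eta,\ (e_0-2)\eta+\tau\delta_h\omega)$ once the $\tau^{-2}$ and $\tau^{-1}$ weights of the input and output splittings are accounted for. Your route buys a computation with trivial Christoffels at the cost of invoking the conformal behaviour of $\star$ (which is special to $2$-forms in dimension $4$ and so does not generalize to the symmetric-tensor operators of \cref{lemma:splittingsDiffOpOnSymmetric}); the paper's route is uniform across all the operators it needs but requires the $0$-frame connection coefficients. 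Both are complete and correct here.
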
 
\begin{proof}
    Take a local coordinate chart $(x^1, \dots, x^n)$ on $X$. Then for small $\tau$, $(\tau,x)$ is a boundary chart which gives us a frame $e^0 = \frac{\rm{d}\tau}{\tau}, e^i = \frac{\rm{d}x_i}{\tau}$ for the $0$-cotangent bundle and a frame $e_0 = \tau \del_\tau, e_i = \tau \del_{x^i}$ for the $0$-tangent bundle, as in \eqref{eq:def0-frame}. We calculate that in this frame
    \begin{equation}
        \begin{split}
            \nabla_{e_0} e^\mu = 0, \quad \nabla_{e_i}e^0 = h_{ik} e^k, \quad \nabla_{e_i}e^k = \delta_i^k e^0 - \tau \Gamma(h)_{ij}^k e^j.
        \end{split}
    \end{equation}
    Here $\delta_i^k$ is the Kronecker delta. For demonstration we calculate the bottom left component of $3 \Lambda^{-1} \delta_g$; the others are analogous. Let $\alpha$ be a section of $T^*X$. To show that the bottom left component is $e_0 - 2$, we need to show that
    \[
    \frac{3}{\Lambda}\delta_g(e^0 \wedge \frac{\alpha}{\tau})_{\overline{i}} = (e_0 - 2)\alpha_i,
    \]
    where we denote with an overline components in the $0$-frames, so $g^{\overline{\mu}\overline{\nu}} = \tau^2 g^{\mu\nu}$ etc.
    But this is indeed the case, as
    \begin{align*}
    \frac{3}{\Lambda}\delta_g\Big(e^0 \wedge \tau^{-1}\alpha\Big)_{\overline{i}} &\begin{aligned}[t]
    = - \frac{3}{\Lambda} \big( &\nabla^{\overline{\mu}}(e^0)_{\overline{\mu}} (\tau^{-1}\alpha)_{\overline{i}} + (e^0)_{\overline{\mu}}\nabla^{\overline{\mu}}(\tau^{-1}\alpha)_{\overline{i}} \\ 
    &- \nabla^{\overline{\mu}}(\tau^{-1}\alpha)_{\overline{\mu}} (e^0)_{\overline{i}} - (\tau^{-1}\alpha)_{\overline{\mu}}\nabla^{\overline{\mu}}(e^0)_{\overline{i}}\big) \end{aligned}\\
    &= -3 \alpha_i + e_0 \alpha_i +0 + \alpha_i \\
    &= (e_0 - 2)\alpha_i,
    \end{align*}

    where we have used $\nabla^{\overline{\mu}}(e^0)_{\overline{\mu}} = \Lambda$ and $\nabla^{\overline{k}}(e^0)_{\overline{i}} = \delta_i^k$, $\nabla^{\overline{\tau}}(e^0)_{\overline{i}} = 0$.
\end{proof}
The previous lemmas now give the following matrix representations of the indicial families of $P$ and $\deld$:
\begin{equation}\label{eq:linearizationP}
    \begin{split}
    3\Lambda^{-1}I(L_{0,g,0},\lambda) &= \begin{pmatrix}
        3(\lambda - 2) & 0 & 3\lambda(2-\lambda) & 0 & 0 & 0 \\
        0 & 0 & 0 & 0 & 0 & 0 \\
        6-\lambda & 0 & \lambda(\lambda-6) & 0 & 0 & 0 \\
        0 & 0 & 0 & \lambda(\lambda-3) & 0 & 0
        \end{pmatrix} 
        \\
       3\Lambda^{-1}I(L_{0,g,0}[\tau],\lambda) &= \begin{pmatrix}
        0 & 2(1-\lambda)\delta_h & 0 & 0 & 0 & 0 \\
        2\rm{d}_X & 0 & -2\lambda \rm{d}_X & -\lambda \delta_h & 0 & 0 \\
        0 & \frac23 (\lambda -5)\delta_h & 0 & 0 & 0 & 0\\
        0 & (4-2\lambda)\delta_{h,0}^* & 0 & 0 & 0 & 0
        \end{pmatrix} 
    \end{split}
\end{equation} 
\begin{equation}  \label{eq:linearizationdeltad}
    \begin{split}
        3\Lambda^{-1}I(K_{0,g,0},\lambda) &= 
        \begin{pmatrix}
        0 & 0 & 0 & 0 & 0 & 0 \\
        0 & 0 & 0 & 0 & 0 & (\lambda-2)(\lambda-1) \\
    \end{pmatrix}\\
        3\Lambda^{-1}I(K_{0,g,0}[\tau],\lambda) &=
        \begin{pmatrix}
        0 & 0 & 0 & 0 & 0 & (1-\lambda)\delta_h  \\
        0 & 0 & 0 & 0 & (1-\lambda)\rm{d}_X & 0\\
    \end{pmatrix}
    \end{split}
\end{equation}
Here the first four columns describe the indicial families' action on symmetric $2$-tensors, the last two on potentials. Also note that $L_{0,g}$ consists of the first four columns of $L_{0,g,0}$.\\
We will use two types of argument various times throughout the gluing construction. The first is that the linearization is stable up to order $\tau^m$ under disturbances of order $\tau^{m+1}$:
\begin{lemma}[{\cite[Lemma 2.6]{blackholegluing}}]\label{lemma:StabilityOfLinearization}
    Let $g \in \mc{C}^\infty(M; S^2 \nT^*M)$ be a smooth Lorentzian $0$-metric and $A \in \mc{C}^\infty(M; \nT^*M)$ a smooth potential. Suppose we have perturbations $\Tilde{g} \in \tau^{m_1}\mc{C}^\infty(M;S^2 \nT^*M)$ and $\Tilde{A} \in \tau^{m_2}\mc{C}^\infty(M; \nT^*M)$ for some $m_1,m_2 \in \N$. Then
    \begin{gather*}
        L_{0,g+\Tilde{g}, A+\Tilde{A}} - L_{0,g,A} \in \tau^{\min \{m_1,m_2\}} \mc{C}^\infty, \\
        K_{0,g+\Tilde{g}, A+\Tilde{A}} - K_{0,g,A} \in \tau^{\min \{m_1,m_2\}} \mc{C}^\infty.
    \end{gather*}
    In particular, the indicial families $I(L_{0,g+\Tilde{g}, A+\Tilde{A}}[\tau^k],\lambda), I(L_{0,g+\Tilde{g}, A+\Tilde{A}}[\tau^k],\lambda)$ are independent of the perturbations for $k < \min\{m_1,m_2\}$.
\end{lemma}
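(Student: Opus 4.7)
My plan is to reduce the stability claim to the observation that every geometric quantity entering $L_{0,g,A}$ and $K_{0,g,A}$ is built by a smooth, in fact polynomial, combination of the $0$-frame components of $g$, $g^{-1}$, $A$, $F=\dd A$, and their iterated $0$-derivatives, and that $\tau^k\mc{C}^\infty$ is preserved under every such construction.

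First I write $L_{0,g,A}(\dot g,\dot A) = D_g P(\cdot,A)(\dot g) + D_A P(g,\cdot)(\dot A)$ and likewise for $K_{0,g,A}$. Using the formulas \eqref{eq:CalcLinearization4T} together with the identity $L_{0,g} = \Box_g - 2\delta_g^*\delta_g \sfG_g + 2\sR_g - 2\Lambda$, I can read off that the coefficients of $L_{0,g,A}$ are polynomial in the $0$-frame components of $g$, $g^{-1}$, $F$, the Christoffel symbols of $g$ (entering through $\Box_g$, $\delta_g$, $\delta_g^*$), and the Riemann tensor of $g$ (entering through $\sR_g$). The analogous decomposition for $K_{0,g,A}(\dot g,\dot A) = \delta_g \dd \dot A + D_g\deld(\cdot,A)(\dot g)$ involves only $g$, $g^{-1}$, $F$, $\nabla F$, and Christoffel symbols.

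The main step will then be to verify that in the $0$-frame the following differences all lie in $\tau^{m_1}\mc{C}^\infty$ (respectively $\tau^{m_2}\mc{C}^\infty$): $(g+\tilde g)^{-1} - g^{-1}$, $\Gamma(g+\tilde g) - \Gamma(g)$, $\rm{Riem}(g+\tilde g) - \rm{Riem}(g)$, and $\dd(A+\tilde A) - \dd A$. The estimate for the inverse follows from the Neumann expansion $(g+\tilde g)^{-1} = g^{-1} - g^{-1}\tilde g\, g^{-1} + \cdots$, noting that in the $0$-frame $g^{-1}$ and $\tilde g$ have ordinary smooth (respectively $\tau^{m_1}\mc{C}^\infty$) components. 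The estimates for $\Gamma$ and $\rm{Riem}$ then propagate because these quantities are polynomial in $g$, $g^{-1}$, and their $0$-derivatives, and $e_0 = \tau\del_\tau$ and $e_i = \tau\del_{x^i}$ both preserve $\tau^k\mc{C}^\infty$. For $\dd\tilde A$ the small subtlety worth flagging is that, by \cref{lemma:splittingdeltagd}, the exterior derivative acts in the $0$-frame through the operators $-\tau\dd_X$, $e_0 - 1$, and $\tau\dd_X$; each of these preserves $\tau^k\mc{C}^\infty$, so $\dd\tilde A \in \tau^{m_2}\mc{C}^\infty(M;\Lambda^2\nT^*M)$ rather than losing a power of $\tau$ as naive scaling might suggest.

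Assembling these ingredients via multilinearity and the Leibniz rule yields $L_{0,g+\tilde g, A+\tilde A} - L_{0,g,A}$ and $K_{0,g+\tilde g, A+\tilde A} - K_{0,g,A}$ in $\tau^{\min\{m_1,m_2\}}\mc{C}^\infty$; the minimum arises because the operators depend simultaneously on perturbations of $g$ (controlled by $\tau^{m_1}$) and of $A$ (controlled by $\tau^{m_2}$), so one can only exploit the smaller of the two orders of vanishing when both contributions appear. The indicial-family conclusion is then immediate from the definition: $I(B[\tau^k],\lambda)$ depends only on the coefficients of $B$ modulo $\tau^{k+1}$, so operators differing by a term in $\tau^{\min\{m_1,m_2\}}\mc{C}^\infty$ have identical indicial families at every order $k < \min\{m_1,m_2\}$. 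I do not anticipate any serious obstacle beyond the careful bookkeeping of keeping everything expressed in the $0$-frame; the one genuinely nontrivial point is the observation about $\dd$ described above.
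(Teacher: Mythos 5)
Your proposal is correct and follows essentially the same route as the paper's proof: a Neumann expansion of $(g+\tilde g)^{-1}$, propagation of the $\tau^{m_1}$-vanishing through the Christoffel symbols and the geometric operators built from them, and the observation that $0$-derivatives preserve $\tau^k\mc{C}^\infty$. Your version simply spells out the bookkeeping (curvature, $\dd\tilde A$, the origin of the minimum) that the paper leaves implicit.
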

\begin{remark}
    We have slightly abused notation here and only written $\mc{C}^\infty$ instead of writing out the specific vector bundles of which the maps are sections of. We will do this a lot more; whenever we write $\mc{C}^\infty$ we mean $\mc{C}^\infty$ as sections of the respective $0$-vector bundles.
\end{remark}
\begin{proof}
    We prove it for $L$ with the proof for $K$ being analogous. Start with the expansion
    \[
    (g+\Tilde{g})^{-1} = g^{-1}(id - \Tilde{g}g^{-1}) + \mc{O}(\tau^{2m_1}).
    \]
    This implies $(g+\Tilde{g})^{-1} - g^{-1} \in \tau^{m_1}\mc{C}^\infty$, and using this we may observe that $\Gamma(g+\Tilde{g}) - \Gamma(g) \in \tau^{m_1}\mc{C}^\infty$. Therefore also $\delta_{g+\Tilde{g}}-\delta_g \in \tau^{m_1}\mc{C}^\infty$ and similarly for the other differential operators appearing in $L$.
\end{proof}
According to \eqref{eq:metricdSdtau}, \eqref{eq:metricdStau_s}, the de Sitter metric is, up to order $\tau^2\mc{C}^\infty$, respectively $\tau_s^2\mc{C}^\infty$ terms, in the form of a product metric as in \eqref{eq:baseformproductmetric}. We may therefore use the lemma to see that \eqref{eq:linearizationP} and \eqref{eq:linearizationdeltad} indeed agree with the leading and subleading order indicial families of the linearizations of $P$ and $\deld$ at the de Sitter metric. \\
The second argument we often make use of is that we may use the linearization to calculate lower order error terms of differential operators acting on perturbed elements:
\begin{lemma}[{\cite[Lemma 2.7]{blackholegluing}}]\label{lemma:DifferentialOperatorsUpToLinearization}
    With $P$ and $\deld$ defined in \eqref{eq:definitionPanddeltad}, let $g$ be a smooth Lorentzian $0$-metric and $A \in \mc{C}^\infty(M; \nT^*M)$ a smooth potential. Suppose we have perturbations $\Tilde{g} \in \tau^{m_1}\mc{C}^\infty(M;S^2 \nT^*M)$ and $\Tilde{A} \in \tau^{m_2}\mc{C}^\infty(M; \nT^*M)$ for some $m_1,m_2 \in \N$. Then
    \begin{gather*}
        P(g + \Tilde{g}, A + \Tilde{A}) - P(g, A) - L_{0,g,A}(\Tilde{g},\Tilde{A}) \in \tau^{2\min \{m_1,m_2\}}\mc{C}^\infty \\
        \deld(g + \Tilde{g}, A+ \Tilde{A}) - \deld(g,A) - K_{0,g,A}(\Tilde{g},\Tilde{A}) \in \tau^{\min \{2m_1,m_1 + m_2\}}\mc{C}^\infty
    \end{gather*}
\end{lemma}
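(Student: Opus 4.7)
The plan is a Taylor-theorem argument along the straight-line family $(g_s, A_s) := (g + \tilde g\, s, A + \tilde A\, s)$ for $s \in [0,1]$, combined with \cref{lemma:StabilityOfLinearization}, which provides uniform control on how the linearization shifts under perturbations.

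For the $P$-estimate, the fundamental theorem of calculus (applied to the smooth, formally nonlinear map $(g,A)\mapsto P(g,A)$) gives
\[
P(g+\tilde g, A+\tilde A) - P(g,A) - L_{0,g,A}(\tilde g, \tilde A) = \int_0^1 \bigl(L_{0, g_s, A_s} - L_{0,g,A}\bigr)(\tilde g, \tilde A)\, ds.
\]
Applying \cref{lemma:StabilityOfLinearization} to the perturbation $(s\tilde g, s\tilde A)$, whose $\tau$-orders match those of $(\tilde g, \tilde A)$ uniformly in $s$, places the coefficients of the $0$-differential operator $L_{0, g_s, A_s} - L_{0,g,A}$ into $\tau^{\min\{m_1, m_2\}}\mc C^\infty$ uniformly in $s$. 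Since $0$-differential operators preserve weighted spaces, acting on $(\tilde g, \tilde A) \in \tau^{\min\{m_1, m_2\}}\mc C^\infty$ produces an integrand in $\tau^{2\min\{m_1, m_2\}}\mc C^\infty$, which survives the $s$-integration.

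For $\deld$, the key structural point to exploit is that $\deld(g, A) = \delta_g\, dA$ is \emph{linear} in $A$ for fixed $g$. The difference then decomposes as
\[
\deld(g+\tilde g, A+\tilde A) - \deld(g,A) = \delta_g\, d\tilde A + (\delta_{g+\tilde g}-\delta_g)\, d\tilde A + (\delta_{g+\tilde g} - \delta_g)\, dA,
\]
while $K_{0,g,A}(\tilde g, \tilde A) = \delta_g\, d\tilde A + D_g[\deld(\cdot,A)](\tilde g)$, so the $\delta_g d\tilde A$ terms cancel. The mixed piece $(\delta_{g+\tilde g}-\delta_g)\, d\tilde A$ has operator coefficients of order $\tau^{m_1}$ (again by \cref{lemma:StabilityOfLinearization} applied to $\delta_g$) acting on a $\tau^{m_2}\mc C^\infty$ input, giving $\tau^{m_1+m_2}\mc C^\infty$. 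The remaining piece $(\delta_{g+\tilde g} - \delta_g)\, dA - D_g[\deld(\cdot, A)](\tilde g)$ is the first-order Taylor remainder of the smooth nonlinear map $g \mapsto \delta_g\, dA$, and the same integral rewriting as above yields $\tau^{2m_1}\mc C^\infty$. Taking the minimum delivers the claimed order $\tau^{\min\{2m_1, m_1+m_2\}}\mc C^\infty$; note that $m_2$ does not appear on its own, reflecting the $A$-linearity of $\deld$.

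The only technical point to verify carefully is the uniformity in $s \in [0,1]$ when invoking \cref{lemma:StabilityOfLinearization}. This reduces to the observation that the proof of that lemma rests on the expansion $(g+s\tilde g)^{-1} - g^{-1} \in \tau^{m_1}\mc C^\infty$ together with the analogous expansion of the Christoffel symbols; both depend smoothly on $s \in [0,1]$, so the bounds are uniform and the $s$-integrals above converge at the desired orders.
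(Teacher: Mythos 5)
Your proof is correct and takes essentially the same route as the paper: both arguments amount to observing that after subtracting the zeroth- and first-order Taylor terms the remainder is at least quadratic in the perturbations, and that $0$-differential operators preserve the weights $\tau^{m_1}$, $\tau^{m_2}$, with the $A$-linearity of $\deld$ ruling out $\tilde{A}$-quadratic contributions. Your integral-remainder formulation via \cref{lemma:StabilityOfLinearization} and the explicit three-term decomposition for $\deld$ are simply a more rigorous rendering of the paper's brief sketch.
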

\begin{proof}
    This follows from the same considerations as in the last lemma. We first prove it for $P$. By subtracting $P(g,A)$ and $L_{0,g,A}(\Tilde{g},\Tilde{A})$ we exactly cancel out the zeroth and first order terms in $\Tilde{g}$ and $\Tilde{A}$. Consequently, any remaining terms must be at least square in the disturbances. Since $0$-derivatives preserve the $\tau^{m_1}$ and $\tau^{m_2}$ decay rates respectively, this finishes the proof for $P$. For $\deld$ note that it is linear in $A$. Therefore, the only second-order terms in $\Tilde{g}$ and $\Tilde{A}$ are a product of $\Tilde{A}$ and $\Tilde{g}$, or quadratic terms in $\Tilde{g}$. This yields the stated congruence for $\deld$.
\end{proof}

   \section{The Gluing Construction}
   \subsection{Naive gluing and the balance condition}
    We are now prepared to discuss the gluing in more detail. To start, by comparing \eqref{eq:RNdSintaus} and \eqref{eq:metricdStau_s} one sees that
    \begin{equation} \label{eq:gdS-gRNdSistau3}  
    g_{\rm{dS}} - g_{\mathfrak{m},Q} \in \tau_s^3\mc{C}^\infty(M; S^2 \nT^*M)
    \end{equation}
    on their common domain of definition $[0, \min\{\sqrt{\Lambda/3}, r_+^{-1}\})_{\tau_s} \times \R_t \times \sph^2$. Additionally, we have
    \begin{equation}\label{eq:AdS-A_Qistau2}
        A_Q \in \tau_s^2 \mc{C}^\infty(M; \nT^*M).
    \end{equation}
    
    At the conformal boundary, $\{\tau_s = 0\}$, using the upper half space coordinates \eqref{deSitter:UpperHalfSpace} (and spherical coordinates on $\R^3$) we find that $R = e^{-t\sqrt{3/\Lambda}}$. Therefore, $R = 0$ corresponds to $t \rightarrow \infty$. This observation allows us to interpret the $g_{\mathfrak{m},Q}$ metric as gluing a de Sitter black hole into de Sitter space at the point $\tau_s = 0, R= 0$, which is just the point $p_0 = (1,0,0,0) \in \sph^3$ in the coordinates \eqref{deSitter:TimeTimesSphere}. If instead we want to glue the metric into any other point $p \in \sph^3$, we may choose a rotation $T \in SO(4)$ which maps $p$ to $p_0$ and then pull back $g_{\mf{m},Q}$ and $A_Q$ along $T$. Let us call these metrics $g_{p,\mf{m},Q}$ and potentials $A_{p,Q}$.

    Knowing this, our goal is to show the following theorem:
    \begin{theorem}\label{theorem:BlackHoleGluing}
        Let $N \in \N$ and let $p_i \in \sph^3, \mf{m}_i,Q_i \in \R$ for $1 \leq i \leq N$ satisfy the \emph{charge balance condition} 
        \begin{equation}\label{eq:ChargeBalanceCondition}
         \sum_{i=1}^N Q_i = 0
        \end{equation}
        and the \emph{mass balance condition}
        \begin{equation}\label{eq:MassBalanceCondition}
        \sum_{i=1}^N \mf{m}_i p_i = 0 \in \R^4.
        \end{equation}
        Let $V_{p_i} \subset \sph^3 = \del M$ be a neighborhood of $p_i$ and assume that $\overline{V_{p_i}} \cap \overline{V_{p_j}} = \emptyset$ for all $i \neq j$. Then there exists a neighborhood $U$ of $\del M \setminus \{p_1, \dots, p_N\}$, a Lorentzian $0$-metric $g \in \mc{C}^\infty(U;S^2 \nT_U^*M)$, and a vector potential $A \in \mc{C}^\infty(U;\nT^*M)$ with the following properties:
        \begin{enumerate}
            \item $g$ and $A$ satisfy the Einstein--Maxwell equations, so $\Ric(g) - \Lambda g = 2T$ and $\delta_g \rm{d} A = 0$,
            \item in a punctured neighborhood of $V_{p_i}$ in $M$ we have $g = g_{p_i,\mf{m_i},Q_i}$, $A = A_{p_i,Q_i}$,
            \item $g-g_{dS} \in \tau^3\mc{C}^\infty(U; S^2 \nT_U^*M)$,
            \item $A \in \tau^2 \mc{C}^\infty(U; \nT_U^*M)$.
        \end{enumerate}
    \end{theorem}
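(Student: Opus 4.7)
The plan is to build a Taylor-series solution of the Einstein--Maxwell system at $\sph^3$ by starting from a partition-of-unity gluing and iteratively correcting the errors, with the two balance conditions appearing as solvability criteria at the first two non-trivial correction steps, and then to promote the infinite-order solution to a true one by gauge fixing. For the initial \emph{naive glue}, I fix a partition of unity $\{\chi_0, \chi_1, \dots, \chi_N\}$ subordinate to the cover of a neighbourhood of $\sph^3$ by the $V_{p_i}$ and by $M \setminus \bigcup_i \overline{V'_{p_i}}$, for slightly shrunken $V'_{p_i} \Subset V_{p_i}$, and set
\[
g_{(3)} \coloneqq g_{dS} + \sum_{i=1}^N \chi_i (g_{p_i,\mf{m}_i,Q_i} - g_{dS}), \qquad A_{(2)} \coloneqq \sum_{i=1}^N \chi_i A_{p_i,Q_i}.
\]
By \eqref{eq:gdS-gRNdSistau3} and \eqref{eq:AdS-A_Qistau2} this gives $g_{(3)} - g_{dS} \in \tau^3 \mc{C}^\infty$, $A_{(2)} \in \tau^2 \mc{C}^\infty$, and in a punctured neighbourhood of each $p_i$ it restricts to the exact RNdS data. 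Since $g_{dS}$ solves the Einstein vacuum equations and each $(g_{p_i,\mf{m}_i,Q_i}, A_{p_i,Q_i})$ solves Einstein--Maxwell exactly, the errors are supported on the transition region $\bigcup_i \operatorname{supp} d\chi_i$ (bounded away from every $p_i$); applying \cref{lemma:DifferentialOperatorsUpToLinearization} shows $P(g_{(3)}, A_{(2)}) \in \tau^4 \mc{C}^\infty$ and $\deld(g_{(3)}, A_{(2)}) \in \tau^3 \mc{C}^\infty$, the latter concentrated in the normal $e^0$-component.

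For the \emph{Maxwell correction}, I read off from \eqref{eq:linearizationdeltad} that at $\lambda = 2$ the only nontrivial leading entry $(\lambda{-}2)(\lambda{-}1)$ vanishes, so a correction $\tau^2 \widetilde A$ whose boundary value $\widetilde A|_{\sph^3}$ is a tangential $1$-form produces no $\tau^2$ contribution to $\deld$. The subleading indicial $I(K_{0,g_{dS},0}[\tau], 2)$ in the normal row is $-\tfrac{\Lambda}{3}\delta_h$ on tangential inputs, so the $\tau^3$ Maxwell error is killed by solving $\delta_h(\widetilde A|_{\sph^3}) = f_{\mathrm{Max}}$ on $(\sph^3, g_{\sph^3})$, where $f_{\mathrm{Max}}$ is the leading normal coefficient of the Maxwell error. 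The cokernel of $\delta_h$ on $1$-forms over $\sph^3$ is the one-dimensional space of constants, so solvability reduces to $\int_{\sph^3} f_{\mathrm{Max}} \, d\mathrm{vol}_{g_{\sph^3}} = 0$. Reducing each contribution by $SO(4)$-equivariance to a black hole at the north pole and extracting the leading behaviour of $A_{p_i, Q_i}$ via the $(\tau, \omega)$ chart of \eqref{deSitter:TimeTimesSphere} shows this integral equals a universal nonzero multiple of $\sum_i Q_i$, so \eqref{eq:ChargeBalanceCondition} is precisely the required solvability condition. Extending $\widetilde A|_{\sph^3}$ smoothly into $M$, set $A_{(3)} \coloneqq A_{(2)} + \tau^2 \widetilde A$.

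For the \emph{Einstein correction and iteration}, note that $T(g,A)$ is quadratic in $A$, so the Maxwell correction modifies $P$ only at order $\tau^4$ and $P(g_{(3)}, A_{(3)}) \in \tau^4 \mc{C}^\infty$. The indicial family \eqref{eq:linearizationP} at $\lambda = 3$ has the diagonal entry $\lambda(\lambda{-}3) = 0$ on the trace-free symmetric block and a kernel relation linking $NN$ and $TT1$; thus a correction $\tau^3 \widetilde g$ with $\widetilde g|_{\sph^3}$ in this kernel produces no $\tau^3$ Einstein error. The subleading indicial $I(L_{0,g_{dS},0}[\tau], 3)$, restricted to the $TT0$ component of the kernel, yields $-\Lambda\,\delta_h(\widetilde g^{TT0}|_{\sph^3})$ in the normal-tangential row, reducing removal of that piece of the $\tau^4$ error to $\delta_h(\widetilde g^{TT0}|_{\sph^3}) = f_{\mathrm{Ein}}$ on trace-free symmetric $2$-tensors. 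The cokernel here is the $10$-dimensional space of conformal Killing vector fields on $\sph^3$; spherical symmetry of each $g_{p_i, \mf{m}_i, Q_i}$ about $p_i$ makes the pairing of $f_{\mathrm{Ein}}$ with the $6$-dimensional Killing subalgebra vanish, while the pairing with the $4$-dimensional boost complement equals $\sum_i \mf{m}_i p_i \in \R^4$ (up to a nonzero constant), so \eqref{eq:MassBalanceCondition} is exactly the required condition. The other error components at order $\tau^4$ are unobstructed and killed by adjusting the non-kernel directions of $\widetilde g|_{\sph^3}$ together with a further $\tau^4$ metric correction, using that the relevant entries of $I(L_{0,g_{dS},0}, 4)$ are invertible. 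For orders $k \geq 5$ the relevant indicial entries in \eqref{eq:linearizationP}--\eqref{eq:linearizationdeltad} are invertible (with roots $\lambda \in \{0,2,3,6\}$ and $\lambda \in \{1,2\}$ respectively), so corrections can be chosen order by order without obstruction; Borel summation then yields smooth $(g_{(\infty)}, A_{(\infty)})$ with errors in $\tau^\infty \mc{C}^\infty$, still matching $(g_{p_i, \mf{m}_i, Q_i}, A_{p_i, Q_i})$ in a punctured neighbourhood of each $p_i$.

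To finish, I impose the DeTurck gauge $\Upsilon(g; g_{(\infty)}) = 0$ of \cite{DeTurck81,GrahamLee} together with a Lorenz-type gauge for $A - A_{(\infty)}$ as in \cite{HintzKerrNewmanDeSitter}, converting Einstein--Maxwell into a coupled quasilinear wave system on an asymptotically de Sitter background. The local solvability result of \cite{asy-ds} near $\sph^3$ provides a rapidly vanishing correction $(\dot g, \dot A) \in \dotC$ so that $(g_{(\infty)} + \dot g, A_{(\infty)} + \dot A)$ solves the gauge-fixed system; a standard propagation-of-gauge argument, using the contracted Bianchi identity for the DeTurck part and the second Maxwell equation for the Lorenz part, shows that the gauge conditions are preserved, yielding a true solution $(g, A) \coloneqq (g_{(\infty)} + \dot g, A_{(\infty)} + \dot A)$ of the original Einstein--Maxwell system. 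Properties (2)--(4) then follow from the corresponding properties of $(g_{(\infty)}, A_{(\infty)})$ combined with the infinite-order vanishing of $(\dot g, \dot A)$. The main obstacle I anticipate is the explicit identification of $\int_{\sph^3} f_{\mathrm{Max}}$ and the boost pairing of $f_{\mathrm{Ein}}$ with $\sum_i Q_i$ and $\sum_i \mf{m}_i p_i$, which requires careful tracking of the leading boundary asymptotics of each constituent RNdS configuration in a chart compatible with the $0$-structure, together with verifying that the pairing with the Killing subalgebra vanishes by the spherical symmetry of each black hole around its centre.
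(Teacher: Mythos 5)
Your overall architecture matches the paper's: naive partition-of-unity glue, a $\tau^2$ potential correction whose solvability on $\sph^3$ forces $\sum_i Q_i=0$, a $\tau^3$ traceless metric correction whose orthogonality to the conformal Killing fields forces $\sum_i\mf{m}_ip_i=0$, iteration to a formal solution, Borel summation, and a DeTurck/Lorenz gauge-fixed quasilinear wave system solved backwards from the boundary. However, there is a genuine gap in your iteration step. You claim that for orders $k\ge 5$ ``the relevant indicial entries are invertible (with roots $\lambda\in\{0,2,3,6\}$ and $\lambda\in\{1,2\}$)''. This is false: the indicial operators are \emph{never} surjective. From \eqref{eq:linearizationP}, the normal--tangential row of $I(L_{0,g_{dS},0},\lambda)$ vanishes identically, and the $2\times 2$ block acting on the $(NN,TT1)$ components, $\bigl(\begin{smallmatrix}3(\lambda-2)&3\lambda(2-\lambda)\\ 6-\lambda&\lambda(\lambda-6)\end{smallmatrix}\bigr)$, has third column equal to $-\lambda$ times the first, hence determinant zero for \emph{all} $\lambda$; the image is the codimension-two subspace $\R\bigl(\rm{d}\tau^2+\tfrac{\lambda-6}{3(\lambda-2)}h\bigr)+\Ker\Tr_h$. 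Likewise the normal row of $I(K_{0,g_{dS},0},\lambda)$ in \eqref{eq:linearizationdeltad} vanishes identically. So at every order one must \emph{prove} that the current error lies in these proper images; nothing guarantees this for free. The paper closes this gap with an exactness argument (\cref{lemma:SequencesAreExact}, \cref{prop:correctionDeltaD}, \cref{prop:correctionP}): the second Bianchi identity $\delta_gG_gP_0(g)=0$ together with \cref{lemma:OrderOfdeltaT} forces the Einstein error into $\Ker I(\delta_{g_{dS}}G_{g_{dS}},\lambda)=\rm{Im}\,I(L_{0,g_{dS},0},\lambda)$ for $\lambda\ge 5$, and $\delta_g^2=0$ does the same for the Maxwell error. (Hintz's original alternative is to gauge-fix early so that the gauge-fixed indicial operator \emph{is} invertible.) The same issue already appears in your order-$\tau^4$ Einstein step: the $T$-contributions must be checked to satisfy $R_{NT}=0$ and $R_{NN}-3R_{TT1}=0$, which the paper verifies by a direct computation using that the $\tau^2$ part of $\rm{d}A$ is normal--tangential; you assert this is ``unobstructed'' without argument.

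A secondary, smaller omission: property (2) requires every correction to vanish identically near the $\overline{V_{p_i}}$, so it is not enough that $\delta_h$ be solvable modulo its cokernel on all of $\sph^3$ — you need solutions of the underdetermined divergence equations \emph{supported away from the gluing points}. The paper obtains this from compactly supported de Rham cohomology for the $1$-form equation and from Delay's theorem for the traceless symmetric $2$-tensor equation, on a connected open set $\Omega$ disjoint from the $\overline{V_{p_i}}$ (connectedness being what identifies the cokernel on $\Omega$ with that on $\sph^3$). Your proposal extends $\widetilde A|_{\sph^3}$ ``smoothly into $M$'' without addressing its support, and then asserts the formal solution still matches the RNdS data near each $p_i$; as written this does not follow.
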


    The procedure for proving \cref{theorem:BlackHoleGluing} will be as follows. We first choose cutoff functions $\chi_i$, which are equal to $1$ near $\overline{V_{p_i}}$ and have mutually disjoint supports. We then naively glue the black holes into de Sitter space by defining the metric
    \begin{equation}\label{eq:DefinitionMetricg_(3)} 
    g_{(3)} \coloneqq \chi_0 g_{dS} + \sum_{i=1}^N \chi_i g_{p_i,\mf{m}_i,Q_i},
    \end{equation}
    and vector potential
    \begin{equation}\label{eq:DefinitionPotentialA_(2)}
    A_{(2)} \coloneqq \sum_{i=1}^N \chi_i A_{p_i, Q_i},
    \end{equation}
    where $\chi_0 = 1 - \sum_{i=1}^N \chi_i$. As before, we have $g_{(3)} - g_{dS} \in \tau_s^3 \mc{C}^\infty$ and $A_{(2)} \in \tau_s^2 \mc{C}^\infty$.
    We will then find the correction in three key steps.
    \begin{enumerate}
        \item The first step is to increase the order in $\tau$ up to which the Einstein--Maxwell equations are satisfied by one. To accomplish this, we start by calculating that the error to $\deld(g_{(3)}, A_{(2)})$ is in $\tau^3\mc{C}^\infty$. We then find an improved potential $A_{(3)}$ such that $A_{(3)}- A_{(2)} \in \tau^2\mc{C}^\infty$ has the desired support restrictions by solving an underdetermined divergence equation on $1$-forms using cohomological methods. Existence of such a solution is guaranteed by the charge balance condition. \\
        Similarly, the resulting error to $P(g_{(3)}, A_{(3)})$ will be in $\tau^4 \mc{C}^\infty$ and finding an improved metric $g_{(4)}$ which only changes $g_{(3)}$ away from the $\overline{V_{p_i}}$ requires solving the same underdetermined divergence equation for symmetric $2$-tensors as in \cite{blackholegluing}.
        \item We then improve the newly found $g_{(4)}, A_{(3)}$ to a formal solution of the Einstein--Maxwell system using an iteration scheme. Namely, we will alternate between increasing the order of $\deld$ by one power of $\tau$ by finding a correction to the potential, and increasing the order of $P$ by finding a correction to the metric. We do not fix a gauge yet; instead, an exactness argument using the second Bianchi identity on $P$ and the vanishing of $\delta^2$ on $\deld$ secures the existence of the correction terms. The formal solutions $g_{(\infty)}$ and $A_{(\infty)}$ are then constructed using Borel's lemma. These satisfy $P(g_{(\infty)},A_{(\infty)}) \in \tau^\infty \mc{C}^\infty$, $\deld(g_{(\infty)},A_{(\infty)}) \in \tau^\infty \mc{C}^\infty$.
        \item Lastly, a DeTurck gauge for $P$ and a Lorenz gauge for $\deld$ with background metric $g_{(\infty)}$ and background potential $A_{(\infty)}$ converts the Einstein--Maxwell equations into quasilinear wave operators. Solutions $g,A$ to the resulting system will be found by solving the wave equations backwards from $\del M$. They satisfy $g-g_{(\infty)} \in \tau^\infty \mc{C}^\infty$, $A-A_{(\infty)} \in \tau^\infty \mc{C}^\infty$ and vanish near the $p_i$. The added gauge terms will then themselves solve a quasilinear wave equation and by uniqueness vanish. The constructed solutions $g,A$ will therefore solve the Einstein--Maxwell system. 
    \end{enumerate}

\subsection{Leading order correction - the obstruction}
In this section we calculate the error terms of the Einstein--Maxwell equations for a naive gluing of RNdS black holes into de Sitter space. We show the following proposition:
\begin{proposition}\label{prop:CorrectionOfObstructionError}
    Let $g_{(3)}$ and $A_{(2)}$ be the naively glued metric and potential defined in \eqref{eq:DefinitionMetricg_(3)} and \eqref{eq:DefinitionPotentialA_(2)} and assume that they satisfy the requirements of \cref{theorem:BlackHoleGluing}. Then there exist $g_{(4)} \in \mc{C}^\infty(M; S^2 \nT^*M)$ and $A_{(3)} \in \tau^2\mc{C}^\infty(M; \nT^*M)$ such that 
    \begin{enumerate}
        \item $g_{(4)}-g_{(3)} \in \tau^3\mc{C}^\infty$ and $g_{(4)} = g_{(3)}$ near $\bigcup_{i=1}^N \overline{V_{p_i}}$;
        \item $A_{(3)}-A_{(2)} \in \tau^2\mc{C}^\infty$ and $A_{(3)} = A_{(2)}$ near $\bigcup_{i=1}^N \overline{V_{p_i}}$;
        \item 
        $P(g_{(4)}, A_{(3)}) \in \tau^5\mc{C}^\infty$;
        \item $\deld(g_{(4)}, A_{(3)}) \in \tau^4\mc{C}^\infty$.
    \end{enumerate}
\end{proposition}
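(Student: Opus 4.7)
The plan is to analyze the leading errors of $P(g_{(3)}, A_{(2)})$ and $\deld(g_{(3)}, A_{(2)})$ using the indicial family calculations \eqref{eq:linearizationP}--\eqref{eq:linearizationdeltad}, then to cancel them by adding a $\tau^2$ tangential correction $\dot{A}_T$ to the potential and a $\tau^3$ traceless symmetric $2$-tensor correction $\dot{g}_{TT0}$ to the metric. In each case the cancellation reduces to an underdetermined divergence equation on $\sph^3$, whose solvability identifies precisely the charge and mass balance conditions.

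First I would localize and size the initial errors. Since both $(g_{p_i,\mf{m}_i,Q_i}, A_{p_i,Q_i})$ and $(g_{dS}, 0)$ solve the Einstein--Maxwell system, the errors are supported in the transition regions where $\dl{\chi_i}$ is nonzero, and in particular vanish near each $\overline{V_{p_i}}$. Applying \cref{lemma:DifferentialOperatorsUpToLinearization} with $(g_{dS}, 0)$ as background yields $\deld(g_{(3)}, A_{(2)}) \in \tau^3\mc{C}^\infty$ and $P(g_{(3)}, A_{(2)}) \in \tau^4\mc{C}^\infty$. The improved vanishing order for $\deld$ reflects that $I(K_{0,g,0}, 2) = 0$ on tangential potentials---the entry $(\lambda-2)(\lambda-1)$ vanishes at $\lambda = 2$, while each $A_{p_i,Q_i}$ is purely tangential to leading order---so the leading $\tau^3$ error lives entirely in the normal slot through the subleading entry $-\tfrac{\Lambda}{3}\delta_h A_T^{(0)}$ of $I(K_{0,g,0}[\tau], 2)$.

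For the potential correction, a $\tau^3$ modification of $A$ outputs only into the tangential slot (the $(2,6)$ entry of $I(K_{0,g,0}, 3)$ is $2$) and so cannot reach the normal error; instead I would add $\tau^2 \dot{A}_T$ in the tangential slot. This produces no $\tau^2$ error and contributes $-\tfrac{\Lambda}{3}\delta_h \dot{A}_T$ to the normal slot at order $\tau^3$, so cancellation reduces to the underdetermined divergence equation
\[
\delta_h \dot{A}_T = f_1
\]
on $\sph^3$, to be solved with $\dot{A}_T$ supported away from $\bigcup_i \overline{V_{p_i}}$. Solvability requires orthogonality to the cokernel of $\delta_h$ on $1$-forms, i.e.\ to constants on the connected boundary $\sph^3$: $\int_{\sph^3} f_1 \, \dd\mathrm{vol}_h = 0$. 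Computing this integral via Stokes' theorem reduces it to a sum of contributions supported where $\dl \chi_i \neq 0$, each of which evaluates to a nonzero multiple of $Q_i$; the total thus vanishes precisely by the charge balance condition \eqref{eq:ChargeBalanceCondition}. A cohomological argument, as outlined in the introduction, then produces a solution with the required support restriction.

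For the metric correction, \cref{lemma:StabilityOfLinearization} ensures that $I(L_{0,g,A}, \lambda)$ is unaffected by the order-$\tau^2$ potential correction, and since $T$ is quadratic in $F$ the electromagnetic coupling stays in $\tau^4\mc{C}^\infty$, giving $P(g_{(3)}, A_{(3)}) \in \tau^4\mc{C}^\infty$. The bottom-right entry $\lambda(\lambda - 3)$ of $I(L_{0,g}, \lambda)$ in the refined splitting \eqref{eq:refinedsplitting} vanishes at $\lambda = 3$, so a traceless-TT correction $\tau^3 \dot{g}_{TT0}$ introduces no $\tau^3$ error and contributes $-3\delta_h \dot{g}_{TT0}$ to the NT slot at order $\tau^4$ via the $-\tau e_0 \delta_h$ entry of $L_{0,g}$. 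Cancellation reduces to an underdetermined divergence equation
\[
\delta_h \dot{g}_{TT0} = f_2
\]
on $\sph^3$ for a traceless symmetric $2$-tensor. The cokernel of $\delta_h$ on such tensors is the ten-dimensional space of conformal Killing fields on $(\sph^3, g_{\sph^3})$, parametrized by $p \in \R^4$ via restrictions of translations and boosts from $\R^{1,4}$, and the pairing of $f_2$ with the CKF determined by $p$ should reduce, as in \cite{blackholegluing}, to a nonzero multiple of $\sum_i \mf{m}_i(p_i \cdot p)$; solvability is then equivalent to the mass balance condition \eqref{eq:MassBalanceCondition}, with the support restriction once more secured cohomologically. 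The main obstacle will be the explicit identification of this latter pairing: one must carefully extract the NT component of the leading $\tau^4$ coefficient of $P(g_{(3)}, A_{(3)})$ and verify that the $Q^2/r^2$ term in $\mu_{\mf{m},Q}$ (of order $\tau_s^4$) and the electromagnetic stress--energy $T(g_{(3)}, \dl{A_{(3)}})$ (quadratic in $F_Q$) both contribute nothing at the obstruction level, so that the computation reduces to the Schwarzschild--de Sitter case already treated in \cite{blackholegluing}.
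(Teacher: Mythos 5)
Your treatment of the $\deld$ part matches the paper's: compute that $I(K_{0,g_{dS},0},2)=0$ so the leading error sits at order $\tau^3$ in the normal slot, correct it with a tangential $\tau^2\dot A_T$ solving $\delta_h\dot A_T=f_1$, and obtain solvability with support control from compactly supported de Rham cohomology, the integral obstruction being exactly the charge balance condition. The identification of the relevant indicial entries and the reduction of the mass obstruction to orthogonality against conformal Killing fields is also as in the paper.

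The gap is in the metric step. Your only metric correction is $\tau^3\dot g_{TT0}$, which through $I(L_{0,g_{dS},0}[\tau],3)$ outputs \emph{only} into the normal--tangential slot at order $\tau^4$. But the order-$\tau^4$ error of $P(g_{(3)},A_{(3)})$ is not purely normal--tangential in the charged case: it equals $[I(L_{0,g_{dS}}[\tau_s],3),\chi]\gamma_3 + 4\chi T(g_{dS},\alpha)_4 - 4T(g_{dS},A_{(3)})_4$, and since $T$ is quadratic the last two terms combine to something like $4\chi(1-\chi)T(g_{dS},\alpha)_4$ plus cross terms with $\tilde\alpha$, which is nonzero in the transition region and lives in the $NN$ and $TT1$ slots (similarly the $\gamma_4=-Q^2(\cdots)$ contribution). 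Saying these terms ``contribute nothing at the obstruction level'' only addresses solvability of the divergence equation for the $NT$ part; it does not cancel them, so your $g_{(4)}$ would leave a nonzero $\tau^4$ remainder and conclusion (3) would fail. The paper closes this by adding a \emph{second}, order-$\tau^4$ correction $k'$ and verifying the algebraic conditions $R_{NT}=0$ and $R_{NN}-3R_{TT1}=0$ that characterize $\mathrm{im}\,I(L_{0,g_{dS},0},4)=\R(3\tau_s^{-2}\dd\tau_s^2+h_s)+\tau_s^{-2}\ker\Tr_{h_s}$; these hold because the $\tau_s^4$ part of $T$ is built from the normal--tangential $\tau_s^2$ part of $\dd\alpha$. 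This extra step is precisely where the charged case differs from Schwarzschild--de Sitter, so the claim that ``the computation reduces to the Schwarzschild--de Sitter case'' is not accurate as stated. A secondary imprecision: for the traceless symmetric $2$-tensor divergence equation the support restriction is not obtained cohomologically but from Delay's theorem on underdetermined elliptic operators.
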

In other words, we may find corrections to the naively glued metric and potential that do not affect the black hole regions and satisfy the Einstein--Maxwell equations to one order higher. \par
We first consider the case of gluing a single black hole. Here, there will be a non-trivial obstruction to the gluing, from which we derive the charge balance condition later. We also start with a cutoff function depending only on the coordinate $t$. Thus let $\chi \in \mc{C}^\infty(\R_t)$ be some function which is $1$ for $t \gg 0$ and $0$ for $t \ll 0$. As in \eqref{eq:DefinitionMetricg_(3)} and \eqref{eq:DefinitionPotentialA_(2)} define the naively glued metric and electromagnetic potential
    \begin{equation}\label{eq:singleblackholegluing}
        g_{(3)} \coloneqq \chi(t)g_{\mf{m},Q} + (1-\chi(t))g_{\rm{dS}} \quad A_{(2)} = \chi(t)A_Q.
    \end{equation}
We may write $\gamma \coloneqq g_{\mf{m},Q} - g_{\rm{dS}} = \tau_s^3 \gamma_3 + \tau_s^4 \gamma_4 + \mc{O}(\tau_s^5)$ with $\gamma_i$ independent of $\tau_s$. We then have
\[
    g_{(3)} = g_{\rm{dS}} + \chi(t)\gamma
\]
and we calculate that in the refined splitting \eqref{eq:refinedsplitting}, with boundary defining function $\tau_s$, $e^0 = \dl{\tau_s}/\tau_s$ and spatial metric $h_s = (\Lambda^2/9)\dl{t}^2 + (\Lambda/3)g_{\sph^2}$,
\begin{equation}\label{eq:gamma3andgamma4}
    \begin{split}
        \gamma_3 &= 2\mf{m}\Big(\frac{9}{\Lambda^2}, 0, \frac{3}{\Lambda^2}, \frac{2}{3}\dl{t:2} - \frac{1}{\Lambda}g_{\sph^2}\Big) \\
        \gamma_4 &= -Q^2\Big(\frac{9}{\Lambda^2}, 0, \frac{3}{\Lambda^2}, \frac{2}{3}\dl{t:2} - \frac{1}{\Lambda}g_{\sph^2}\Big)
    \end{split}
\end{equation}
Let us similarly (for consistency in notation) write $A_{(2)} = \chi(t) A_{p,Q} = \chi(t) \alpha$, where $\alpha = A_{p,Q} = \tau_s^2\alpha_2 = \tau_s^2(0, -Q\dl{t})$ in the splitting \eqref{eq:splittingsymmetric}.
\begin{lemma}\label{lemma:ErrorDeltaD}
    We have
    \begin{align*}
    \deld(g_{(3)},A_{(2)}) &\equiv \tau_s^3 \ \rm{Err}_{\deld,s} \mod \tau_s^4 \mc{C}^\infty,
    \end{align*}
    where $\rm{Err}_{\deld,s} = -\frac{3}{\Lambda}Q \chi'(t) \frac{\dl{\tau_s}}{\tau_s}$. 
\end{lemma}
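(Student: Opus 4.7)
The plan is to linearize $\deld$ around the de Sitter background $(g_{dS},0)$, where it vanishes trivially. Applying Lemma~\ref{lemma:DifferentialOperatorsUpToLinearization} to the perturbations $\tilde g=\chi\gamma\in\tau_s^3\mc{C}^\infty$ and $\tilde A=\chi\alpha\in\tau_s^2\mc{C}^\infty$ (so $m_1=3$, $m_2=2$) yields
\[
    \deld(g_{(3)},A_{(2)}) \equiv K_{0,g_{dS},0}(\chi\gamma,\chi\alpha) \pmod{\tau_s^5\mc{C}^\infty},
\]
which is more than enough for the claimed $\tau_s^4\mc{C}^\infty$ accuracy. By the remark just after \eqref{eq:linearizationdeltad}, every piece of $K_{0,g_{dS},0}$ other than $D_A(\delta_g\dd)\cdot=\delta_{g_{dS}}\dd\cdot$ carries a factor of $F_{dS}=0$, so the right-hand side reduces to $\delta_{g_{dS}}\dd(\chi A_Q)$.

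I would then compute $\dd(\chi A_Q)$ and apply $\delta_{g_{dS}}$ directly. Since $A_Q=-Q\tau_s\dl{t}$ and $\dd\chi=\chi'(t)\dl{t}$, the wedge $\dd\chi\wedge A_Q$ vanishes, so $\dd(\chi A_Q)=\chi F_Q$ with $F_Q=-Q\dl{\tau_s}\wedge\dl{t}$. The product rule for the codifferential acting on a $2$-form then gives
\[
    \delta_{g_{dS}}(\chi F_Q) = \chi\,\delta_{g_{dS}}F_Q - \iota_{(\dd\chi)^\sharp}F_Q.
\]
For the first summand I claim $\delta_{g_{dS}}F_Q=0$: a short calculation using the diagonal form \eqref{eq:metricdStau_s} shows $\star_{g_{dS}}F_Q=Q\,\rm{dvol}_{\sph^2}$, which is closed, so $\delta_{g_{dS}}F_Q=\pm\star\dd\star F_Q=0$. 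Equivalently, $A_Q$ happens to be a Maxwell potential on the de Sitter background as well, not only on Reissner--Nordstr\"om--de Sitter space, so the entire error comes from the cutoff.

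For the second summand, since $\chi$ depends only on $t$ one has $(\dd\chi)^\sharp=g_{dS}^{tt}\chi'(t)\del_t=\frac{\tau_s^2}{\Lambda/3-\tau_s^2}\chi'(t)\del_t$, and a one-line contraction produces
\[
    -\iota_{(\dd\chi)^\sharp}F_Q = -\frac{3Q\chi'(t)}{\Lambda}\tau_s^2\dl{\tau_s} + \mc{O}(\tau_s^4)
\]
as a coordinate $1$-form. Rewriting $\dl{\tau_s}=\tau_s\cdot(\dl{\tau_s}/\tau_s)$ converts this into $\tau_s^3\,\rm{Err}_{\deld,s}+\mc{O}(\tau_s^4\mc{C}^\infty)$ as a section of $\nT^*M$, which is the stated result. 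The only real obstacle here is the bookkeeping between coordinate and $0$-frame normalizations: the extra power of $\tau_s$ picked up when passing from $\dl{\tau_s}$ to the $0$-covector $\dl{\tau_s}/\tau_s$ is precisely what turns the apparent $\tau_s^2$ error in coordinates into the claimed $\tau_s^3$ decay in the $0$-geometry framework.
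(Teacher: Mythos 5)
Your argument is correct, and after the shared first step it takes a genuinely different route from the paper's. Both proofs begin identically, by invoking \cref{lemma:DifferentialOperatorsUpToLinearization} at the background $(g_{dS},0)$ to reduce the error to $K_{0,g_{dS},0}(\chi\gamma,\chi\alpha)$ modulo $\tau_s^5$, which equals $\delta_{g_{dS}}\rm{d}(\chi A_Q)$ since every term of the $D_g$ component of $K_{0,g_{dS},0}$ carries a factor of $F_{dS}=0$. From there the paper stays in the $0$-calculus: it expands in indicial families, uses the fact that $(g_{\mf{m},Q},A_Q)$ solves the full system to conclude that the $\chi\equiv 1$ contributions cancel order by order, and is left with the commutator $[I(K_{0,g_{dS},0}[\tau_s],2),\chi](0,\alpha_2)$, evaluated via the matrix representation of \cref{lemma:splittingdeltagd}. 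You instead perform a direct four-dimensional computation: $\rm{d}(\chi A_Q)=\chi F_Q$ because $\rm{d}\chi\wedge A_Q$ is proportional to $\rm{d}t\wedge\rm{d}t=0$, the Leibniz rule for the codifferential gives $\chi\,\delta_{g_{dS}}F_Q-\iota_{(\rm{d}\chi)^\sharp}F_Q$, and the first term vanishes because $\star_{g_{dS}}F_Q$ is a constant multiple of the round area form of $\sph^2$ (the computation of $\star(\rm{d}r\wedge\rm{d}t)$ only uses the static form of the metric, not the value of $\mu$) --- that is, you prove the \emph{exact} identity $\delta_{g_{dS}}\rm{d}A_Q=0$ rather than inferring its vanishing to the required order from the RNdS equations. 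The surviving contraction $-\iota_{(\rm{d}\chi)^\sharp}F_Q=-Q g^{tt}\chi'(t)\,\rm{d}\tau_s$ with $g^{tt}=\tau_s^2(\Lambda/3-\tau_s^2)^{-1}=\frac{3}{\Lambda}\tau_s^2+\mc{O}(\tau_s^4)$ reproduces $\tau_s^3\,\rm{Err}_{\deld,s}$ on the nose once the coordinate $1$-form is rewritten as $\tau_s^3\cdot\rm{d}\tau_s/\tau_s$; your signs check out against the convention $\delta_g\beta=-\nabla^\mu\beta_{\mu\cdot}$. Your version is more elementary and self-contained (it needs neither the indicial-family bookkeeping nor \cref{lemma:splittingdeltagd}) and isolates the conceptual point that the entire error is the contraction of $F_Q$ with the gradient of the cutoff; the paper's version is less explicit here but establishes exactly the template that is reused for the Einstein part in \cref{lemma:CalculationErrP0s} and for the Kerr--Newman--de Sitter case, where a closed-form potential computation would be considerably messier.
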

\begin{proof}
    Since $\delta_{g_{(3)}}$ and $\rm{d}$ are $0$-differential operators, and $A_{(2)} = \mc{O}(\tau_s^2)$, we know that the error is at least $\mc{O}(\tau_s^2)$. Now because $g_{\rm{dS}}$ with zero electromagnetic potential solves Maxwell's equations $\deld(g_{\rm{ds}},0) = 0$, we have according to \cref{lemma:DifferentialOperatorsUpToLinearization}
    \begin{align*}
    \deld(g_{(3)},A_{(2)}) &\equiv K_{0,g_{dS},0}(\chi\gamma, \chi \alpha) \mod \tau_s^5 \mc{C}^\infty
    \end{align*}
    We may calculate the terms for the different orders of $\tau_s$ using the indicial families of $K_{0,g,A}$ given in \eqref{eq:linearizationdeltad}, with spatial metric $h_s = \frac{\Lambda^2}{9} \dl{t:2} + \frac{\Lambda}{3}g_{\sph^2}$. First, observe that for $\chi \equiv 1$ the left side of the above equation vanishes (as the RNdS metric and electromagnetic potential are a solution of the Einstein Maxwell equations). Thus,
    \begin{gather*}
    I(K_{0,g_{\rm{dS}},0},2)(0,\alpha_2) = 0 \\
    I(K_{0,g_{\rm{dS}},0}[\tau_s],2)(0,\alpha_2) + I(K_{0,g_{\rm{dS}},0},3)(\gamma_3,0) = 0
    \end{gather*}
    which is also simple to check directly.
    The order $\tau_s^2$ error is hence
    \begin{align*}
    I(K_{0,g_{\rm{dS}},0},2)(0,\chi \alpha_2) &= \chi I(K_{0,g_{\rm{dS}},0},2)(0,\alpha_2) \\ &=0.
    \end{align*}
     The $\tau_s^3$ error is
    \begin{align*}
    I(K_{0,g_{\rm{dS}},0}[\tau_s],2)(0,\chi \alpha_2) + &I(K_{0,g_{\rm{dS}},0},3)(\chi\gamma_3,0) \\
    &= I(K_{0,g_{\rm{dS}},0}[\tau_s],2)(0,\chi \alpha_2) + \chi  I(K_{0,g_{\rm{dS}},0},3)(\gamma_3,0) \\
    &=I(K_{0,g_{\rm{dS}},0}[\tau_s],2)(0,\chi \alpha_2) - \chi I(K_{0,g_{\rm{dS}},0}[\tau_s],2)(0,\alpha_2) \\
    &= [I(K_{0,g_{\rm{dS}},0}[\tau_s],2), \chi](0,\alpha_2) \\
    &= \frac{\Lambda}{3}(-[\delta_{h_s},\chi](-Q) \dl{t},0) \\
    &= \Big(-\frac{3}{\Lambda} Q \chi'(t) , 0\Big)
    \end{align*}
    where we have used that $-[\delta_{h_s}, \chi] = \iota_{({}^{h_s}\nabla \chi)} = 9 \Lambda^{-2} \chi'(t)\iota_{\del_t}$, with $\iota$ denoting contraction, in the last step.
\end{proof}
Because the normal part of $I(K_{0,g_{\rm{dS}},0},3)$ vanishes, we cannot solve away this error by adding an order $\tau_s^3$ term. Instead, considering the form of the subleading indicial family of $\deld$, we want to find a 1-form $\tilde{\alpha} \in \mc{C}^\infty(\del M, T^*\del M)$ which solves $-\delta_{h_s} \tilde{\alpha} = -(\rm{Err}_{\deld,s})_N$ and vanishes near $p_0$. Because $I(K_{0,g_{dS},0},2) = 0$, this does not produce an order $\tau^2$ error. As $\delta_{h_s} = \star_{h_s} \rm{d} \star_{h_s}$, this is equivalent to finding a $2$-form $\omega = \star_{h_s} \tilde{\alpha}$ supported away from $p_0$ which satisfies
\[
d \omega = - \star_{h_s} (\rm{Err}_{\deld,s})_N = - (\rm{Err}_{\deld,s})_N \rm{d}h_s.
\]
A necessary condition for such an $\omega$ to exist is that
\begin{equation}\label{eq:chargeconditionOne}
    \begin{split}
    0 &= \int_{\del M} (\rm{Err}_{\deld,s})_N \rm{d}h_s \\
    &= -\frac{3}{\Lambda}Q \int_{\R_t \times \sph^2} \chi'(t) \rm{d}h_s \\
    &= -Q \frac{\Lambda}{3}\rm{vol}(\sph^2)
    \end{split}
\end{equation}

We note here that this condition is conformally invariant; indeed, passing to the boundary defining coordinate $\tau$ the boundary metric is $g_{\sph^3}$ and we have
$\rm{d}h_s = \tau^3\tau_s^3\rm{d}g_{\sph^3}$. On the other hand the $\tau^3$ error is 
\begin{align*}
(\rm{Err}_{\deld})_N &= \tau^{-3}\deld(g_{(3)},A_{(2)})(\tau\del_\tau)\vert_{\del M} \\
&= \tau^{-3}\tau_s^{3}\tau_s^{-3}\deld(g_{(3)},A_{(2)})(\tau_s\del_{\tau_s})\vert_{\del M} \\
&= \tau^{-3}\tau_s^{3}(\rm{Err}_{\deld, s})_N.
\end{align*}
As a result of \eqref{eq:chargeconditionOne}, gluing a single RNdS black hole into de Sitter space is not possible in this manner. For multiple black holes, the situation is different, however. We may proceed with the gluing as long as the black holes satisfy the charge balance condition:
\begin{lemma}\label{lemma:chargebalancecondition}
    Let $\chi_i$ be (arbitrary) cutoff functions on $\sph^3$ which are identically $1$ near $p_i$ and $0$ near $-p_i$. Set $\rm{Err}_{\deld, p_i, Q_i} = \tau^{-3} \deld(\chi_i g_{p_i,\mf{m}_i, Q_i} + (1-\chi_i) g_{dS}, \chi_i A_{Q_i})(\tau\del_\tau)\vert_{\tau = 0}$ and $\rm{Err}_{\deld} = \sum_{i=1}^N \rm{Err}_{\deld,p_i, Q_i}$. Then $\rm{Err}_{\deld}$ is a normal $1$-form and the naively glued metric and potential $g_{(3)},A_{(2)}$ from \cref{eq:DefinitionMetricg_(3),eq:DefinitionPotentialA_(2)}
    satisfy \[
    \deld(g_{(3)},A_{(2)}) = \tau^3 \rm{Err}_{\deld} \mod \tau^4 \mc{C}^\infty.\]
    We have $\int_{\sph^3} (\rm{Err}_{\deld})_N \rm{d}g_{\sph^3} = 0$ if and only if the charge balance condition
    $\sum_{i=1}^N Q_i = 0$
    is satisfied.
\end{lemma}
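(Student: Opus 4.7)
The plan is to reduce the multi-black-hole computation to the single-black-hole case already handled in \cref{lemma:ErrorDeltaD}, by exploiting the disjointness of the supports of the $\chi_i$. Since $\overline{V_{p_i}} \cap \overline{V_{p_j}} = \emptyset$ for $i \neq j$, at every point of $M$ at most one $\chi_i$ is nonzero, so the naively glued pair $(g_{(3)}, A_{(2)})$ coincides locally either with $(g_{dS}, 0)$ (outside all $\chi_i$-supports) or with $(\chi_i g_{p_i,\mf{m}_i,Q_i} + (1-\chi_i)g_{dS},\ \chi_i A_{Q_i})$ inside a single $\chi_i$-support. Since $\deld$ is a local differential operator annihilating the pure de Sitter background, the errors split:
\[
\deld(g_{(3)}, A_{(2)}) = \sum_{i=1}^N \deld\bigl(\chi_i g_{p_i,\mf{m}_i,Q_i} + (1-\chi_i)g_{dS},\ \chi_i A_{Q_i}\bigr)
\]
pointwise. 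Extracting the $\tau^3$ leading coefficient at $\tau = 0$ yields $\deld(g_{(3)},A_{(2)}) = \tau^3 \rm{Err}_{\deld} \mod \tau^4 \mc{C}^\infty$ with $\rm{Err}_{\deld} = \sum_i \rm{Err}_{\deld, p_i, Q_i}$.

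To verify that each $\rm{Err}_{\deld, p_i, Q_i}$ is purely normal, I would use the $SO(4)$-symmetry of de Sitter space to reduce to $p_i = p_0$ and work in the $(\tau_s, t, \omega')$ chart of \eqref{eq:metricdStau_s}, repeating the indicial-family computation of \cref{lemma:ErrorDeltaD} with the general cutoff $\chi_i$ pulled back from $\sph^3$ in place of $\chi(t)$. Only the normal output of the subleading indicial family survives: the commutator term $-[\delta_{h_s}, \chi_i] = \iota_{\nabla^{h_s} \chi_i}$ contributes a purely normal error proportional to $Q_i\,\del_t \chi_i$ once paired with $\alpha_2 = (0, -Q_i\,\dl{t})$, while the tangential-output entry $(1-\lambda)\rm{d}_X$ in \eqref{eq:linearizationdeltad} acts on the normal component of $\chi_i \alpha_2$, which vanishes. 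Thus the full $\tau_s^3$ error for a single $p_i$ is normal, and hence so is $\rm{Err}_{\deld}$.

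For the integral characterization, I would integrate $(\rm{Err}_{\deld})_N$ against $\rm{d}g_{\sph^3}$ and invoke the conformal invariance observed right after \cref{lemma:ErrorDeltaD} to rewrite each of the $N$ summand integrals in the $(\tau_s^{(i)}, t^{(i)}, \omega'^{(i)})$ chart around $p_i$. By the computation in \eqref{eq:chargeconditionOne} each summand reduces to a fixed universal nonzero constant times $Q_i$: the $t$-integral of $\del_t \chi_i$ along each ray equals $\chi_i(p_i) - \chi_i(-p_i) = 1$ by the boundary conditions on the cutoff, so the specific shape of $\chi_i$ drops out and one is left with a constant times $Q_i \cdot \rm{vol}(\sph^2)$. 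Summing over $i$ gives a nonzero multiple of $\sum_i Q_i$, which vanishes precisely when $\sum_{i=1}^N Q_i = 0$; both directions of the biconditional then follow immediately.

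The main obstacle is the conformal bookkeeping: making sure that the ``universal constant'' arising in each local integral really is the same across all $i$, despite the differing boundary charts and cutoff functions, so that the summed integral is genuinely a nonzero multiple of $\sum_i Q_i$ and the implication is a true equivalence. Once this invariance is nailed down --- which the calculation immediately preceding \cref{lemma:chargebalancecondition} already strongly suggests --- everything else follows cleanly from \cref{lemma:ErrorDeltaD} together with the disjoint-support decomposition.
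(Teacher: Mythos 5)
Your proposal is correct and its core is the same as the paper's: reduce to the single-black-hole computation of \cref{lemma:ErrorDeltaD}, observe that only the commutator $[I(K_{0,g_{dS},0}[\tau_s],2),\chi]$ survives and lands in the normal slot, and then integrate. Two steps differ in route from the paper. First, for summing over the black holes you use locality of $\deld$ plus disjointness of supports, whereas the paper simply invokes linearity of the indicial families; note that your justification quotes $\overline{V_{p_i}}\cap\overline{V_{p_j}}=\emptyset$, which by itself does not force the supports of the $\chi_i$ to be disjoint --- you need the construction's explicit choice of mutually disjoint $\mathrm{supp}\,\chi_i$ (or the linearity argument, which needs no disjointness at all). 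Second, for arbitrary cutoffs the paper writes $\chi_i=\tilde\chi_i+(\chi_i-\tilde\chi_i)$ with $\tilde\chi_i$ radial and kills the difference term's integral by Stokes' theorem, while you integrate $\iota_{\nabla^{h_s}\chi_i}(-Q_i\,\rm{d}t)\propto\del_t\chi_i$ directly in $t$ and use $\int_{-\infty}^{\infty}\del_t\chi_i\,\rm{d}t=\chi_i|_{p_i}-\chi_i|_{-p_i}=1$ for each $\omega'\in\sph^2$; both are valid, and your version is arguably more self-contained since it never needs the auxiliary radial cutoff. The ``universal constant'' worry you raise is resolved exactly as you suspect: each $g_{p_i,\mf{m}_i,Q_i}$ is a pullback along an isometry of $\sph^3$, and the conformal-invariance computation after \eqref{eq:chargeconditionOne} shows the integral in the $\tau_s$ chart equals the intrinsic integral over $(\sph^3,g_{\sph^3})$, so all summands carry the same nonzero factor.
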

\begin{proof}
We have just shown this for the case $\chi = \chi(t)$ and a single black hole. For multiple black holes, it follows from linearity of the indicial families. \\
If now the cutoff functions $\chi_i$ are arbitrary, we may again choose additional cutoff functions $\tilde{\chi}_i = \tilde{\chi}_i(t)$ and write $\chi_i = \tilde{\chi}_i + (\chi_i - \tilde{\chi}_i)$. As before, the $\mc{O}(\tau_s^3)$ error produced by the $(\chi_i - \tilde{\chi}_i)A_{p_i,Q_i}$ terms is now $(-\delta_{h_s} (\chi_i - \tilde{\chi}_i) (\alpha_2)_T,0)$, where $\alpha_2$ is the order $\tau_s^2$ part of $A_{p_i,Q_i}$. But because $\chi_i - \tilde{\chi}_i$ is $0$ near $\pm p_i$ integrating this over $\del M$ will give $0$ by Stokes' theorem.
\end{proof}
Let us now show that the charge balance condition is also sufficient for the correction term $\tilde{\alpha}$ to exist.
\begin{proof}[Proof of the $\deld$ part of \cref{prop:CorrectionOfObstructionError}]
    We do this using cohomology with compact support. Because $\chi_i\equiv 1$ on a neighborhood $\overline{V_{p_i}}$ of $p_i$, $\rm{Err}_{\deld}$ vanishes on $\overline{V_{p_i}}$. We may therefore choose an open \emph{connected} $\Omega \subset \sph^3$ with $\mathrm{supp}\  \mathrm{Err}_{\deld} \subset \Omega$ and $\Omega \cap \bigcup_{i=1}^N \overline{V_{p_i}} = \emptyset$. Then standard theory in de Rham cohomology (see e.g. \cite[Theorem 17.30]{LeeManifolds}) implies that there exists a $2$-form $\omega = \star \tilde{\alpha}$ supported in $\Omega$ with $\rm{d}\omega = - (\rm{Err}_{\deld})_N \rm{d}g_{\sph^3}$ if and only if
\begin{align*}
0 &= \int_{\Omega} (\rm{Err}_{\deld})_N \rm{d}g_{\sph^3} \\
&= \int_{\sph^3} (\rm{Err}_{\deld})_N \rm{d}g_{\sph^3}
\end{align*}
which is exactly the condition from before. \\
We consequently let $A_{(3)} \coloneqq A_{(2)} + \tau^2\tilde{\alpha}$ and show that this indeed cancels the error: 
\begin{align*}
    \deld(g_{(3)},A_{(3)}) &= \deld(g_{(3)},A_{(2)}) + \deld(g_{(3)}, \tau^2\tilde{\alpha}) \\
    &\equiv \rm{Err}_{\deld} + I(K_{0,g_{(3)},0}[\tau], 2)(0,\tau^2\tilde{\alpha}) \mod \tau^4\mc{C}^\infty \\
    &\equiv \rm{Err}_{\deld} + I(K_{0,g_{dS},0}[\tau], 2)(0,\tau^2\tilde{\alpha}) \mod \tau^4\mc{C}^\infty \\
    &= 0;
\end{align*}
here we have used \cref{lemma:StabilityOfLinearization} and \cref{lemma:DifferentialOperatorsUpToLinearization} in lines two and three respectively.
\end{proof}
We now turn towards proving the Einstein part of \cref{prop:CorrectionOfObstructionError}. We show that $P(g_{(3)}, A_{(3)}) = \mc{O}(\tau^4)$ and that this order $4$ error may be corrected using a $\mc{O}(\tau^3)$ correction to $g_{(3)}$ supported away from the $p_i$. A small issue we face is that because $A_{(3)} = \mc{O}(\tau^2)$, we may not calculate the order $\tau_s^4$ error using the linearization of $P$ at $A=0$, as \cref{lemma:DifferentialOperatorsUpToLinearization} then only provides agreeance up to order $\tau_s^4$. \\
Instead, we will split $P = P_0 - 4T$, where $P_0(g) = 2(\Ric(g) - \Lambda g)$ as before, use the linearization of $P_0$ and work with $T$ directly. Recall also our notation $L_{0,g}$ for the linearization of $P_0$ at $g$. \\
Let us pretend again that we want to glue a single black RNdS black hole into de Sitter space at $p_0=(1,0,0,0)$, with a radial cutoff function $\chi = \chi(t)$ as in \eqref{eq:singleblackholegluing}. Of course, we cannot hope (and are not trying) to correct $g_{(3)}$ to a solution of $P_0$ with terms supported away from $p_0$, but knowing the error to $P_0$ will help us in calculating the error to $P$ later.
\begin{lemma}\label{lemma:CalculationErrP0s}
    We have
    \[
    P_0(g_{(3)}) \equiv \tau_s^4 \rm{Err}_{P_0,s} \mod \tau_s^5 \mc{C}^\infty
    \]
    where $Err_{P_0,s} = 2 \frac{\rm{d}\tau_s}{\tau_s} \otimes_s \frac{12\mf{m}}{\Lambda}\frac{\rm{d}\chi}{\tau_s} + 4 \chi T(g_{dS}, \alpha)_4$. Here, a subscript 4 denotes the order $\tau_s^4$ part of a tensor.
\end{lemma}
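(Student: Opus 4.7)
The strategy is to reduce to the two known exact solutions (pure de Sitter and unperturbed RNdS with $\chi\equiv 1$) via the linearization, handling the $\chi$-dependence by a single commutator computation. First, since $g_{(3)} = g_{\rm{dS}} + \chi\gamma$ with $\chi\gamma \in \tau_s^3 \mc{C}^\infty$ and $P_0(g_{\rm{dS}}) = 0$, the proof of \cref{lemma:DifferentialOperatorsUpToLinearization} applied to $P_0$ (i.e.\ $P(\cdot, 0)$ in the metric variable only) yields
\[
P_0(g_{(3)}) \equiv L_{0,g_{\rm{dS}}}(\chi\gamma) \mod \tau_s^6 \mc{C}^\infty.
\]
Expanding $\chi\gamma = \tau_s^3 \chi\gamma_3 + \tau_s^4 \chi\gamma_4 + \mc{O}(\tau_s^5)$ and using the definition of the indicial families,
\[
L_{0,g_{\rm{dS}}}(\chi\gamma) \equiv \tau_s^3 I(L_{0,g_{\rm{dS}}},3)(\chi\gamma_3) + \tau_s^4\big[I(L_{0,g_{\rm{dS}}}[\tau_s],3)(\chi\gamma_3) + I(L_{0,g_{\rm{dS}}},4)(\chi\gamma_4)\big]
\]
modulo $\tau_s^5 \mc{C}^\infty$, so only three indicial contributions need to be computed.

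Next, I bootstrap from the special cases $\chi\equiv 1$. For the Schwarzschild--de Sitter metric ($Q=0$, $\chi\equiv 1$), $\gamma_4$ vanishes and $P_0(g_{\mf{m},0}) = 0$; matching powers of $\tau_s$ in the display above forces $I(L_{0,g_{\rm{dS}}},3)(\gamma_3) = 0$ and $I(L_{0,g_{\rm{dS}}}[\tau_s],3)(\gamma_3) = 0$. For the full RNdS metric ($\chi \equiv 1$), $P_0(g_{\mf{m},Q}) = 4T(g_{\mf{m},Q}, A_Q)$; since $T$ is quadratic in $F = \dl\alpha = \mc{O}(\tau_s^2)$ and $g_{\mf{m},Q} - g_{\rm{dS}} = \mc{O}(\tau_s^3)$, one has $T(g_{\mf{m},Q}, A_Q) \equiv \tau_s^4 T(g_{\rm{dS}}, \alpha)_4 \mod \tau_s^5 \mc{C}^\infty$, and combining with the Schwarzschild identities yields $I(L_{0,g_{\rm{dS}}},4)(\gamma_4) = 4T(g_{\rm{dS}}, \alpha)_4$. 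The leading indicial family $I(L_{0,g_{\rm{dS}}},\lambda)$ contains no tangential derivatives, hence commutes with multiplication by the scalar $\chi$; so the $\tau_s^3$ coefficient of $L_{0,g_{\rm{dS}}}(\chi\gamma)$ vanishes, and $I(L_{0,g_{\rm{dS}}},4)(\chi\gamma_4) = 4\chi T(g_{\rm{dS}},\alpha)_4$, reproducing the second summand of $\mathrm{Err}_{P_0,s}$.

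It remains to evaluate
\[
I(L_{0,g_{\rm{dS}}}[\tau_s],3)(\chi\gamma_3) = \chi I(L_{0,g_{\rm{dS}}}[\tau_s],3)(\gamma_3) + [I(L_{0,g_{\rm{dS}}}[\tau_s],3), \chi](\gamma_3),
\]
the first summand vanishing by the Schwarzschild identity. Reading off the NT row of the matrix \eqref{eq:linearizationP} at $\lambda = 3$, namely $(2\rm{d}_X,\, 0,\, -6\rm{d}_X,\, -3\delta_h)$ up to the global factor $3\Lambda^{-1}$, the first and third entries produce a commutator contribution $(2\cdot 18\mf{m}/\Lambda^2 - 6 \cdot 6\mf{m}/\Lambda^2)\,\dl\chi = 0$, by perfect cancellation between the NN and TT1 coefficients of $\gamma_3$ from \eqref{eq:gamma3andgamma4}. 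The surviving term is $-3[\delta_{h_s},\chi](\gamma_3)_{TT0} = 3\iota_{\nabla^{h_s}\chi}(\gamma_3)_{TT0}$, which with $h_s = (\Lambda^2/9)\dl{t:2} + (\Lambda/3)g_{\sph^2}$ and $(\gamma_3)_{TT0} = 2\mf{m}\bigl(\tfrac{2}{3}\dl{t:2} - \tfrac{1}{\Lambda}g_{\sph^2}\bigr)$ evaluates to $(36\mf{m}/\Lambda^2)\,\dl\chi$. Restoring the $3\Lambda^{-1}$ prefactor and identifying the NT slot in the refined splitting \eqref{eq:refinedsplitting} with its 2-tensor representation $2e^0 \otimes_s \tau_s^{-1}(\cdot)$ recovers exactly the first summand $2\tfrac{\dl\tau_s}{\tau_s}\otimes_s \tfrac{12\mf{m}}{\Lambda}\tfrac{\dl\chi}{\tau_s}$ of $\mathrm{Err}_{P_0,s}$.

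The conceptual content is modest -- two bootstrap identities and one commutator -- but the main obstacle is the bookkeeping of normalizations: the $3\Lambda^{-1}$ prefactor in \eqref{eq:linearizationP}, the $\tau_s^{-1}$ rescaling within the NT slot of the refined splitting, and the precise cancellation of the two $\rm{d}_X$ entries in the NT row, which is what ensures that the answer lies in the NT component only and is proportional to $\dl\chi$ rather than to a more complicated combination of $\chi$-derivatives.
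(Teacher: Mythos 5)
Your proof is correct and follows essentially the same route as the paper's: reduce $P_0(g_{(3)})$ to $L_{0,g_{dS}}(\chi\gamma)$ via \cref{lemma:DifferentialOperatorsUpToLinearization}, extract the identities satisfied by $\gamma_3,\gamma_4$ from the exact solution at $\chi\equiv 1$, and evaluate the commutator $[I(L_{0,g_{dS}}[\tau_s],3),\chi](\gamma_3)$. The only (harmless) differences are that you additionally invoke the Schwarzschild--de Sitter case to split the order-$\tau_s^4$ identity into the two separate statements $I(L_{0,g_{dS}}[\tau_s],3)(\gamma_3)=0$ and $I(L_{0,g_{dS}},4)(\gamma_4)=4T(g_{dS},\alpha)_4$, and that you make explicit the cancellation of the two $\rm{d}_X$ commutator contributions in the normal-tangential row, which the paper uses implicitly.
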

\begin{proof}
    Because the de Sitter metric solves the Einstein vacuum equations
    \begin{align*}
        P(g_{(3)}, A_{(2)}) &= P_0(g_{(3)}) - 4 T(g_{(3)}, A_{(2)}) \\
        &\equiv L_{0,g_{dS}}(\chi \gamma) - 4 T(g_{dS},A_{(2)}) \mod \tau_s^6\mc{C}^\infty.
    \end{align*}
    Obtaining the order of congruence uses \cref{lemma:DifferentialOperatorsUpToLinearization} again, this time for $P_0(g) = P(g,0)$. We have also used that $T(g_{(3)}, A_{(2)}) - T(g_{dS},A_{(2)}) \in \tau_s^7 \mc{C}^\infty$, which is a consequence of the fact that $T$ is quadratic in its $A$ argument, $A_{(2)} \in \tau_s^2\mc{C}^\infty$ and $\gamma \in \tau_s^3\mc{C}^\infty$.\\
    For $\chi \equiv 1$ the right side will vanish, this time because the RNdS metric and potential solve the Einstein equations. Hence, collecting terms of different orders:
    \begin{gather*}
        I(L_{0,g_{dS}},3)(\gamma_3) = 0, \\
        I(L_{0,g_{dS}}[\tau_s],3)(\gamma_3) + I(L_{0,g_{dS}},4)(\gamma_4) - 4 T(g_{dS},\alpha)_4 = 0.
    \end{gather*}
    It follows that the order $\tau_s^3$ error to $P_0$ is
    $I(L_{0,g_{dS}},3)(\chi \gamma_3) = \chi I(L_{0,g_{dS}},3)(\gamma_3) = 0$.
    The order $\tau_s^4$ error is
    \begin{align*}
        I(L_{0,g_{dS}}[\tau_s],3)(\chi\gamma_3) &+ I(L_{0,g_{dS}},4)(\chi\gamma_4)\\
        &= I(L_{0,g_{dS}}[\tau_s],3)(\chi\gamma_3) + \chi I(L_{0,g_{dS}},4)(\gamma_4)\\
        &=I(L_{0,g_{dS}}[\tau_s],3)(\chi\gamma_3) - \chi I(L_{0,g_{dS}}[\tau_s],3)(\gamma_3) + 4 \chi T(g_{dS},\alpha)_4 \\
        &=[I(L_{0,g_{dS}}[\tau_s],3),\chi](\gamma_3) + 4 \chi T(g_{dS},\alpha)_4 \\
        &= \frac{\Lambda}{3}(0,[-3\delta_{h_s},\chi](\gamma_3)_{TT0},0,0) + 4 \chi T(g_{dS},\alpha)_4\\
        &= (0,\frac{12\mf{m}}{\Lambda}\rm{d}\chi,0,0)+ 4 \chi T(g_{dS},\alpha)_4
    \end{align*}
    where we have used the same calculation of ${[\delta_{h_s},\chi}]$ as in the end of \cref{lemma:ErrorDeltaD}.
\end{proof}
Continue to pretend that we want to glue a single RNdS black hole and that we have added some (for now arbitrary) correction term $\tilde{A}=\mc{O}(\tau_s^2)$ supported away from $p_0$ to $A_{(2)}$.
Then as $A_{(2)} + \tilde{A} = \mc{O}(\tau_s^2)$, $T(g_{(3)}, A_{(2)} + \tilde{A})$ will be $\mc{O}(\tau_s^4)$. \par
The order $\tau_s^3$ error to $P(g_{(3)},A_{(2)} + \tilde{A}) = P_0(g_{(3)}) -4T(g_{(3)}, A_{(2)} + \tilde{A})$ will therefore vanish; the order $\tau_s^4$ error is 
\[\rm{Err}_{P,s} := 2 \frac{\rm{d}\tau_s}{\tau_s} \otimes_s \frac{12\mf{m}}{\Lambda}\frac{\rm{d}\chi}{\tau_s} + 4\chi T(g_{dS},\alpha)_4 - 4T(g_{dS}, A_{(2)} + \tilde{A})_4\]
We will now illustrate how to go about finding a correction to this error.
    Let us start by trying to find an order $\tau_s^4$ correction to $g_{(3)}$ which cancels with the two terms involving $T$. We treat the $-4\chi T(g_{dS},\alpha)_4 =: R$ term; the other one may be treated in the same manner. \par
    For such a correction to $g_{(3)}$ to exist, $R$ needs to be in the image of $I(L_{0,g_{dS},0},4)$, which according to \eqref{eq:linearizationP} is $\R(3 \tau_s^{-2}\rm{d}\tau_s^2 + h_s) + \tau_s^{-2} \rm{ker} \Tr_{h_s}$. We hence need to show that the normal-tangential part of $R$ vanishes and that $R_{NN} - 3 R_{TT1} = 0$. \par
    We have $T(g_{dS},\alpha)_{\mu\nu} = \rm{d}\alpha\indices{_\mu^\rho}\rm{d}\alpha\indices{_\nu_\rho} - \rm{d}\alpha\indices{^\rho^\lambda}\rm{d}\alpha\indices{_\rho_\lambda}(g_{dS})_{\mu\nu}$. Now, as $\alpha = \mc{O}(\tau_s^2)$ and $T$ is square in the $\rm{d}\alpha$ terms, the order $\tau_s^4$ terms of $T$ are made up only of the $\tau_s^2$ terms of $\rm{d}\alpha$. A look at the indicial family of $\rm{d}$, see \cref{lemma:splittingdeltagd}, shows that these must be normal-tangential. It follows from this and a short calculation that $(T(g_{dS},\alpha)_4)_{NT} = 0$ and $(T(g_{dS},\alpha)_4)_{NN} - 3(T(g_{dS},\alpha)_4)_{TT1} = 0$. Hence, $R$ has the required properties as well. Observe that, because we required $\tilde{A}$ to be supported away from $p_0$, the error term vanishes near $p_0$. Therefore, our correction to $g_{(3)}$ we found to cancel the $T$ terms will vanish near $p_0$ as well, so we do \emph{not} change the metric near the gluing point. \par
    The $(0,\frac{12\mf{m}}{\Lambda} \rm{d}\chi,0,0) =: S$ error is just the error for the gluing of a single Schwarzschild-de Sitter black hole. As $T$ does not contribute to the linearization of $P$, this error term may be treated in the same way as in the Schwarzschild-de Sitter case with the Einstein vacuum equations. We thus refer to \cite[\S 3.1.]{blackholegluing} for the detailed calculations and only give a summary here.\par
    Because $S$ is normal-tangential and $I(L_{0,g_{dS},0},4)_{NT}=0$, it cannot be corrected by an order $\tau_s^4$ term; we will instead try to correct it by a term one order lower. We are hence looking for a symmetric $2$-tensor vanishing near $p_0$ (a correction to $g_{(3)}$), which is in the kernel of $I(L_{0,g_{dS},0},3)$ and whose image under $I(L_{0,g_{dS},0}[\tau_s],3)$ is exactly the negative of $S$. An inspection of the form of these indicial families shows that this means finding a symmetric $2$-tensor $k \in \mc{C}^\infty(\del M; \rm{ker} \Tr_{h_s})$ which solves the underdetermined elliptic equation $-(\Lambda/3) 3 \delta_{h_s} k = -12\mf{m}\Lambda^{-1} \rm{d}\chi$. A necessary condition is that $-12\mf{m}/\Lambda \rm{d}\chi$ is $L^2(\del M, \abs{\rm{d}h_s})$ orthogonal to the kernel of formal adjoint of $\delta_{h_s}$. \par
    This kernel consists of the conformal Killing 1-forms of $\del M$, which are in one-to-one correspondence to the conformal Killing vector fields of $\del M$. The condition becomes
    \begin{equation}\label{eq:masscondition}
    \int_{\del M}V\Big(\frac{12 \mf{m}}{\Lambda} \rm{d}\chi\Big) |\rm{d}h_s| = 0
    \end{equation}
    for all conformal Killing vector fields $V$ of $\del M$. Changing to the $\tau$ coordinates \eqref{deSitter:TimeTimesSphere} and using conformal invariance, one may directly calculate what this means using the conformal Killing vector fields of the $3$-sphere. Namely, the condition becomes
    \begin{equation}\label{eq:MassBalanceCondtionPrelim}
    \left<\mf{m}p_0,q\right>_{\R^4} = 0 \ \forall \ q \in \sph^3.
    \end{equation}
    As this is impossible (unless we are in the trivial case $\mf{m} = 0$), we again get the result that gluing a single black hole into de Sitter space in this fashion is impossible. For multiple black holes, we are on the other hand led to the mass balance condition:
\begin{lemma}
        Consider the naively glued metric and potential $g_{(3)},A_{(2)}$, along with the corrected potential $A_{(3)}$ from \cref{prop:CorrectionOfObstructionError}. Here the $\chi_i$ are (arbitrary) cutoff functions on $\sph^3$ identically $1$ near $p_i$ and $0$ near $-p_i$. Set $\rm{Err}_{P_0,p_i, \mf{m}_i,Q_i} := \tau^{-4}P_0(\chi_i g_{p_i,\mf{m}_i, Q_i} + (1-\chi_i)g_{dS})(\tau \cdot, \tau \cdot)\vert_{\tau = 0}$ and $\rm{Err}_{P_0} = \sum_{i=1}^N \rm{Err}_{P_0,p_i, \mf{m}_i,Q_i}$. Then
        \[
        P(g_{(3)}, A_{(3)}) \equiv \tau^4 (\rm{Err}_{P_0} -4T(g_{(3)}, A_{(3)})_4) \mod \tau^5\mc{C}^\infty.
        \]
        Moreover we have
        \[
        \int_{\sph^3} V((\rm{Err}_{P_0})_{NT}) |\rm{d}g_{\sph^3}| = 0
        \]
        if and only the $(p_i, \mf{m_i})$ satisfy the mass balance condition \eqref{eq:MassBalanceCondition}.
    \end{lemma}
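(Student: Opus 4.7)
The plan is to decompose $P = P_0 - 4T$ and handle the two pieces separately. Since $A_{(3)} = A_{(2)} + \tau^2 \tilde{\alpha} \in \tau^2 \mc{C}^\infty$ and $T$ is quadratic in $\dl{A}$, the tensor $T(g_{(3)}, A_{(3)})$ automatically lies in $\tau^4 \mc{C}^\infty$; moreover, because $g_{(3)} - g_{dS} \in \tau^3 \mc{C}^\infty$ only enters $T$ multiplied by two copies of $\dl{A_{(3)}} \in \tau \mc{C}^\infty$, replacing $g_{(3)}$ by $g_{dS}$ in the first slot of $T$ affects only orders $\tau^5$ and higher. For $P_0$, we use that $P_0(g_{dS}) = 0$ together with \cref{lemma:DifferentialOperatorsUpToLinearization} applied to the perturbation $g_{(3)} - g_{dS} \in \tau^3 \mc{C}^\infty$ to obtain $P_0(g_{(3)}) \equiv L_{0,g_{dS}}(g_{(3)} - g_{dS})$ modulo $\tau^6 \mc{C}^\infty$. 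By linearity of the indicial families and the decomposition $g_{(3)} - g_{dS} = \sum_i \chi_i (g_{p_i, \mf{m}_i, Q_i} - g_{dS})$, the $\tau^4$-coefficient is exactly $\sum_i \rm{Err}_{P_0, p_i, \mf{m}_i, Q_i} = \rm{Err}_{P_0}$. Assembling these pieces gives the stated congruence.

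For the integral condition, the strategy is to reduce each summand to the single-black-hole Schwarzschild calculation already carried out in \cref{lemma:CalculationErrP0s} and the discussion following it. First note that $T(g_{dS}, A_{(3)})_4$ has vanishing $NT$-component: the leading $\tau^2$-coefficient of $\dl{A_{(3)}}$ is purely normal-tangential by \cref{lemma:splittingdeltagd}, and two such factors produce only normal-normal and tangential-tangential contributions to $T$. Thus the $T$-term does not affect the pairing with $V$, and it suffices to analyze $(\rm{Err}_{P_0})_{NT}$. For each $i$, pick a rotation $T_i \in SO(4)$ with $T_i p_i = p_0$ and split $\chi_i = \tilde{\chi}_i + (\chi_i - \tilde{\chi}_i)$, where $\tilde\chi_i$ is a radial cutoff in the rotated $t$-coordinate centered at $p_i$. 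The $\tilde\chi_i$-piece produces, via \cref{lemma:CalculationErrP0s} and pullback by $T_i$, an $NT$-contribution proportional to $\frac{12\mf{m}_i}{\Lambda}\dl{\tilde\chi_i}$ (in the rotated coordinates), while the $(\chi_i - \tilde\chi_i)$-piece is a gradient of a function supported away from $\pm p_i$ and integrates to zero against any conformal Killing $V$ by the Stokes-type argument of \cref{lemma:chargebalancecondition}. Passing via conformal invariance from the $\tau_s$- to the $\tau$-picture (using \cref{remark:MetricsAtConformalBoundary}), the Schwarzschild calculation from \cite[\S 3.1.]{blackholegluing} identifies the remaining integral as a nonzero constant multiple of $\sum_i \langle \mf{m}_i p_i, q\rangle_{\R^4} = \langle \sum_i \mf{m}_i p_i, q\rangle_{\R^4}$, where $q \in \R^4$ parametrizes the $4$-dimensional family of gradient conformal Killing fields of $(\sph^3, g_{\sph^3})$; the remaining $6$-dimensional family of isometric Killing fields contributes nothing, since the relevant integrand is a total derivative and these fields are divergence-free. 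Vanishing for all $q$ is equivalent to the mass balance condition $\sum_i \mf{m}_i p_i = 0$.

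The main obstacle is largely bookkeeping rather than a single sharp estimate: one must track conformal factors when switching between the two boundary defining functions $\tau$ and $\tau_s$, verify that the electromagnetic term $T(g_{dS},A_{(3)})_4$ does not leak into the $NT$-component, and reduce arbitrary cutoffs to radial ones via the $\tilde{\chi}_i$-trick already used for the charge balance. The one subtle geometric input is that only the $4$-dimensional family of gradient-type conformal Killing fields of $\sph^3$ produces nontrivial obstructions, which is precisely what couples the analytic condition to the algebraic balance \eqref{eq:MassBalanceCondition} on the $\R^4$-valued sum of weighted positions.
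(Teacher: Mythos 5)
Your proof is correct and follows essentially the same route as the paper: split $P = P_0 - 4T$, observe that the $T$-terms contribute nothing to the normal--tangential component of the order-$\tau^4$ error, reduce arbitrary cutoffs to radial ones exactly as in the charge-balance lemma, and invoke the single-black-hole Schwarzschild--de Sitter computation to identify the obstruction with $\langle \sum_i \mf{m}_i p_i, q\rangle_{\R^4}$ for the gradient-type conformal Killing fields. The only slip is the parenthetical claim that $\dd A_{(3)} \in \tau\mc{C}^\infty$, which should read $\tau^2\mc{C}^\infty$ (as is consistent with your own conclusion that $T(g_{(3)},A_{(3)}) \in \tau^4\mc{C}^\infty$); this does not affect the argument.
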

    \begin{proof}
        $\rm{Err}_{P_0}$ consists of two parts. The first part consists of the terms stemming from the $T$ terms of the $\rm{Err}_{P_0,p_i,\mf{m}_i,Q_i}$ together with the $-4T(g_{(3)},A_{(3)})_4$ term. As discussed before this proof, these do not have any normal-tangential terms. The second part, the normal-tangential part of $\rm{Err}_{P_0}$, stems from the Schwarzschild-de Sitter part of the $\rm{Err}_{P_0,p_i,\mf{m}_i,Q_i}$. Similar to the proof of \cref{lemma:chargebalancecondition} we may reduce to the case where every $\chi_i$ is a radial cutoff function centered at $p_i$; then \cref{eq:MassBalanceCondtionPrelim} will be
        \[
    \left<\sum_{i=1}^N\mf{m}_ip_i,q\right>_{\R^4} = 0 \ \forall \ q \in \sph^3.
        \]
        instead, which is possible if and only if $\sum_{i=1}^N\mf{m}_ip_i = 0$.
    \end{proof}
    It remains to show that the mass balance condition is not only necessary, but also sufficient for a correction to exist. Again, this was already done in \cite[Proof of Proposition 3.5]{blackholegluing}. We repeat the necessary theorem, which is due to Delay, here.
    \begin{theorem}[{\cite[Theorem 3.10]{blackholegluing}, \cite{Delay2012}}]
        Let (X,h) be a smooth Riemannian manifold and let $\Omega \subset X$ be open. Let $f \in \mc{C}^\infty(X;T^* X)$ satisfy $\rm{supp}(f) \Subset \Omega$ and $\int_\Omega V(f) |\rm{d}h| = 0$ for all conformal Killing vector fields $V$ of $(\Omega,h)$. Then there exists a traceless $k \in \mc{C}^\infty(X; S^2 T^*X)$ with $\rm{supp}\ k \subset \overline{\Omega}$ with $\delta_h k = f$.
    \end{theorem}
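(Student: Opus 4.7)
The plan is to attack this via the Corvino--Schoen--Delay strategy of converting the underdetermined divergence equation into a determined elliptic equation for an auxiliary vector field, while enforcing the support condition through a weight that vanishes outside $\Omega$.

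First I would identify the formal adjoint of $\delta_h$ regarded as a map $\mc{C}^\infty(X; S^2_0 T^*X) \to \mc{C}^\infty(X; T^*X)$ from traceless symmetric $2$-tensors to $1$-forms; this is (after identifying vector fields with $1$-forms via $h$) the conformal Killing operator
\[
L V \coloneqq \delta_h^{*,0} V = \delta_h^* V - \tfrac{1}{n}(\Div_h V)\, h,
\]
whose kernel on $(\Omega,h)$ is exactly the finite-dimensional space of conformal Killing vector fields appearing in the orthogonality hypothesis. I would then look for $k$ of the form $k = \phi^2 L V$, where $\phi \in \mc{C}^\infty(X)$ is a fixed positive function on $\Omega$ vanishing to infinite order at $\partial\Omega$ and identically $0$ on $X \setminus \Omega$. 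Any such $k$ is automatically traceless and compactly supported in $\overline{\Omega}$, so all that remains is to solve the determined fourth-order equation
\[
P V \coloneqq \delta_h(\phi^2 L V) = f.
\]

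Second, I would analyze $P$ on a suitable weighted Hilbert space on $\Omega$. The operator $P$ is formally self-adjoint on $L^2(\Omega,|\dd h|)$ since $\langle P V, W\rangle = \int_\Omega \phi^2 h(L V, L W)\,|\dd h|$, and it is elliptic away from $\partial\Omega$. The natural energy functional is $J(V) = \tfrac12 \int_\Omega \phi^2 |L V|_h^2\,|\dd h| - \int_\Omega V(f)\,|\dd h|$, whose critical points solve $P V = f$. Coercivity modulo conformal Killing fields follows from a weighted Korn-type inequality for $L$ on $\Omega$; the orthogonality hypothesis ensures that the linear part of $J$ annihilates $\Ker L$, so $J$ descends to a coercive quadratic on the quotient and admits a minimizer $V$.

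Third, I would upgrade regularity of the minimizer to $\mc{C}^\infty$ and verify the support claim for $k = \phi^2 L V$. Interior smoothness of $V$ on $\Omega$ follows from interior elliptic regularity for $P$ because $f$ is smooth and compactly supported in $\Omega$. The delicate point is the behavior of $V$ near $\partial\Omega$: $V$ itself need not vanish there, but the product $\phi^2 L V$ does, and in fact extends by zero to a $\mc{C}^\infty$ traceless symmetric $2$-tensor on $X$, because $\phi$ vanishes to infinite order at $\partial\Omega$ and $L V$ satisfies a priori weighted estimates controlling its growth. Since $f \equiv 0$ in a neighborhood of $\partial\Omega$, one obtains $\delta_h(\phi^2 L V) = 0$ near $\partial\Omega$, which, combined with the weight, forces $\phi^2 L V$ to vanish smoothly up to the boundary.

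The principal obstacle I anticipate is the weighted functional-analytic setup in step two: choosing the weight $\phi$ so that one simultaneously obtains a coercive energy (a weighted Korn-type inequality for $L$ with this weight), smoothness of $V$ in the interior, and the smooth vanishing of $\phi^2 L V$ across $\partial\Omega$. Delay's construction handles this with a carefully chosen weight of the form $e^{-s/\rho}$ for a boundary defining function $\rho$ of $\Omega$ and a large parameter $s$; the estimates are standard in spirit but technically intricate. Everything else --- the identification of the obstruction with conformal Killing fields and the reduction via the ansatz $k = \phi^2 L V$ --- is formal.
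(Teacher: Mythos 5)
The paper does not actually prove this statement: it is quoted verbatim from \cite[Theorem 3.10]{blackholegluing} and ultimately from Delay \cite{Delay2012}, so there is no in-paper argument to compare against. Your sketch correctly reproduces the strategy of the cited reference --- dualizing $\delta_h$ on trace-free symmetric $2$-tensors to the conformal Killing operator $L$, the weighted ansatz $k=\phi^2 L V$, and the variational solution of the resulting fourth-order equation $\delta_h(\phi^2 L V)=f$, with the orthogonality hypothesis exactly accounting for the cokernel $\Ker L$ --- so the approach is the right one and the formal reductions are all sound. The caveat is that the two analytically substantive steps, namely the weighted coercivity (Korn/Poincar\'e-type) estimate that makes the energy functional coercive modulo $\Ker L$ and the smooth extension of $\phi^2 L V$ by zero across $\partial\Omega$, are asserted rather than proved; this is where essentially all of Delay's work lies (including the specific choice of exponential weight), and while you correctly identify it as the principal obstacle, as written the proposal defers rather than supplies that content.
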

    \begin{proof}[Proof of the $P$ part of \cref{prop:CorrectionOfObstructionError}]
        The $\chi_i$ and hence $(\rm{Err}_P)_{NT}$ vanish on a neighborhood $\overline{V_{p_i}}$. As in the proof of the $\deld$ part, we may therefore choose an open \emph{connected} $\Omega \subset \sph^3$ with $\overline{\Omega}$ \emph{disjoint} from the $\overline{V_{p_i}}$ such that $\rm{supp}\ (\rm{Err}_P)_{NT} \Subset \Omega$. Because we chose $\Omega$ to be connected, the conformal Killing vector fields of $\Omega$ and $\sph^3$ coincide. This follows from the fact that the maximum number of independent conformal Killing vector fields on any connected $n$-dimensional Riemannian manifold is at most $(n+1)(n+2)/2$, and $\sph^n$ attains this number.
        Thus
        \begin{align*}
        &\int_\Omega V((\rm{Err}_P)_{NT}) |\rm{d}g_{\sph^3}|= 0 \ \forall V \ \text{conformal Killing on } \Omega \\
        \Leftrightarrow &\int_{\sph^3} V((\rm{Err}_P)_{NT}) |\rm{d}g_{\sph^3}| = 0 \ \forall V \ \text{conformal Killing on } \sph^3
        \end{align*}
        which is just the mass balance condition. Together with the discussion after \cref{lemma:CalculationErrP0s}, we thus find symmetric $2$-tensors $k, k' \in \mc{C}^\infty(\del M, S^2 T^*M\vert_{\del M})$, supported away from the $p_i$, such that \begin{gather*}
            I(L_{0,g_{dS},0},3)k = 0 \\
            I(L_{0,g_{dS},0}[\tau],3)k + I(L_{0,g_{dS},0},4)k' = -(\rm{Err}_{P_0}-4T(g_{(3)},A_{(3)})
        \end{gather*} We therefore define $g_{(4)} = g_{(3)} + \tau^3 k + \tau^4 k'$ and show that this indeed cancels out the error to $P(g_{(3)},A_{(3)})$:
        \begin{align*}
            P(g_{(4)},A_{(3)}) &\equiv P(g_{(3)}, A_{(3)}) + L_{0, g_{(3)}, A_{(3)}}(\tau^3 k + \tau^4 k',0) \mod \tau^6 \mc{C}^\infty \\
            &\equiv  P(g_{(3)}, A_{(3)}) + L_{0, g_{dS}, 0}(\tau^3 k + \tau^4 k',0) \mod \tau^5 \mc{C}^\infty \\
            &= 0
        \end{align*}
        Here we have used \cref{lemma:DifferentialOperatorsUpToLinearization} and \cref{lemma:StabilityOfLinearization}.
        By the same lemmas the order $4$ error to $\deld$ is also not changed if we make the change from $g_{(3)}$ to $g_{(4)}$:
        \begin{align*}
            \deld(g_{(4)},A_{(3)}) &\equiv \deld(g_{(3)}, A_{(3)}) + L_{0, g_{(3)}, A_{(3)}}(g_{(4)} - g_{(3)},0) \mod \tau^6 \mc{C}^\infty \\
            &\equiv  \deld(g_{(3)}, A_{(3)}) + L_{0, g_{dS}, 0}(g_{(4)} - g_{(3)},0) \mod \tau^5 \mc{C}^\infty \\
            &= 0 \qedhere
        \end{align*}
    \end{proof}
    \begin{remark}
        We could have also applied Delay's results to find the correction to $A_{(2)}$. According to \cite[\S 9.1]{Delay2012}, a necessary and sufficient condition for a solution $\tilde{\alpha} \in \mc{C}^\infty(\del M;T^*\del M)$ to $-\delta_{h_s}\tilde{\alpha} = -(\rm{Err}_{\deld, s})_N$ with support contained in the $\Omega$ from the proof of the $\deld$ part to exist, $(\rm{Err}_{\deld, s})_N$ has to be $L^2(\Omega, |\rm{d}h_s|)$ orthogonal to the kernel of the formal adjoint of $\delta_{h_s}$. This formal adjoint is the exterior derivative $d$, so the kernel consists of the locally constant functions. As $\Omega$ was chosen connected, these are just the constant functions, so we need
        \[
        0 = \int_{\Omega}(\rm{Err}_{\deld, s})_N |\rm{d}h_s|
        \]
        which is just the same condition as before.
    \end{remark}
   \subsection{Construction of the formal solution}
We now find additional perturbations to the metric and potential constructed in \cref{prop:CorrectionOfObstructionError} to give a formal solution of the Einstein--Maxwell equations. Our strategy will be to alternate between adding a correction to the potential to increase the order of $\deld$ and adding a correction to the metric to increase the order of $P$. We will construct these corrections using an exactness argument. Namely, we start with the general fact that $\delta_g^2 = 0$ on differential forms for any Lorentzian metric $g$; in our case $g = g_{dS}$. Considering only leading order terms of $0 = \delta_{g_{dS}} \circ \delta_{g_{dS}}\rm{d}$, and using the fact that $I(\delta_{g_{dS}}\rm{d},\lambda)$ and $I(K_{0,g_{dS},0},\lambda)$ are the same maps when acting on covectors, we obtain the sequence 
\begin{equation}\label{eq:sequenceK_0}
    \begin{tikzcd}[column sep=large]
    	{\mc{C}^\infty(\del M; \nT^*M\vert_{\del M})} & {\mc{C}^\infty(\del M; \nT^*M\vert_{\del M})} & {\mc{C}^\infty(\del M).}
    	\arrow["{I(K_{0,g_{dS},0},\lambda)}", from=1-1, to=1-2]
    	\arrow["{I(\delta_{g_{dS}},\lambda)}", from=1-2, to=1-3]
    \end{tikzcd}
\end{equation}
We find a similar sequence for $P$. Indeed, the second Bianchi identity implies $0 = \delta_g G_g P_0(g)$ for any Lorentzian metric $g$. Therefore, if we let $g_s = g_{dS} + s\tilde{g}$ for some $\tilde{g} \in \mc{C}^\infty(M; S^2 \nT^*M)$, then because the de Sitter metric solves $P_0(g_{dS})=0$, differentiating yields
\begin{align*}
    0 &= \frac{d}{ds}\Big\vert_{s=0} \delta_{g_s}G_{g_s}P_0(g_s)\\
        &= \delta_{g_{dS}}G_{g_{dS}} L_{0,g_{dS}}(\tilde{g}).
\end{align*}
Using that the linearization of $P_0$ at $g=g_{dS}$ and $P$ at $g=g_{dS}, A=0$ acting on symmetric 2-tensors are the same maps, we again obtain a sequence
\begin{equation}\label{eq:sequenceL_0}
    \begin{tikzcd}[column sep=large]
    	{\mc{C}^\infty(\del M; S^2 \nT^*M\vert_{\del M})} & {\mc{C}^\infty(\del M; S^2 \nT^*M\vert_{\del M})} & {\mc{C}^\infty(\del M; \nT^*M\vert_{\del M}).}
    	\arrow["{I(L_{0,g_{dS},0},\lambda)}", from=1-1, to=1-2]
    	\arrow["{I(\delta_{g_{dS}}G_{g_{dS}},\lambda)}", from=1-2, to=1-3]
    \end{tikzcd}
\end{equation}
\begin{lemma}\label{lemma:SequencesAreExact}
The sequence \eqref{eq:sequenceK_0} is exact for $\lambda \geq 4$ and \eqref{eq:sequenceL_0} is exact for $\lambda \geq 5$. Moreover, the resulting preimages for the sequences' first maps, ${I(K_{0,g_{dS},0},\lambda)}$ and $ {I(L_{0,g_{dS},0},\lambda)}$, may be chosen to have the same support as their images.
\end{lemma}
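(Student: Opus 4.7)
My plan is to reduce both exactness claims to fibrewise finite-dimensional linear algebra on the relevant $0$-bundles over $\del M$: all indicial families written down in \eqref{eq:linearizationP} and \eqref{eq:linearizationdeltad} contain no spatial derivatives at leading order, so in the splittings \eqref{eq:splittingsymmetric}, \eqref{eq:refinedsplitting}, and \eqref{eq:splittingantisymmetric} they act as constant-coefficient matrices, and the entire problem can be handled pointwise. For the sequence \eqref{eq:sequenceK_0}, a brief Levi--Civita computation in the $0$-frame, in the same style as the proof of \cref{lemma:splittingdeltagd}, yields $3\Lambda^{-1} I(\delta_{g_{dS}}, \lambda)(\alpha_N, \alpha_T) = (\lambda - 3)\alpha_N$ on covectors. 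From \eqref{eq:linearizationdeltad}, the image of $I(K_{0,g_{dS},0}, \lambda)$ acting on a covector $\dot A = (\dot A_N, \dot A_T)$ is the purely tangential $(0,\, (\Lambda/3)(\lambda-1)(\lambda-2) \dot A_T)$, which is automatically annihilated by $I(\delta_{g_{dS}}, \lambda)$; the kernel of $I(\delta_{g_{dS}}, \lambda)$, whenever $\lambda \neq 3$, equals $\{(0, \beta_T)\}$. Exactness is therefore equivalent to the non-vanishing of $(\lambda - 1)(\lambda - 2)$ combined with $\lambda \neq 3$, which holds for every $\lambda \geq 4$.

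For the sequence \eqref{eq:sequenceL_0}, I use \cref{lemma:splittingsDiffOpOnSymmetric} to write $I(\delta_{g_{dS}}, \lambda)$ and $G_{g_{dS}}$ in the refined splitting \eqref{eq:refinedsplitting}, and then multiply them to obtain
\[
3\Lambda^{-1} I(\delta_{g_{dS}} G_{g_{dS}}, \lambda) = \begin{pmatrix} \frac{\lambda - 6}{2} & 0 & \frac{3(\lambda-2)}{2} & 0 \\ 0 & \lambda - 4 & 0 & 0 \end{pmatrix}.
\]
The second Bianchi identity derivation given in the paragraph preceding the statement forces the composition with $I(L_{0,g_{dS},0},\lambda)$ to vanish, and this can also be verified directly against \eqref{eq:linearizationP}. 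Setting $X := u_{NN} - \lambda u_{TT1}$, the image of $I(L_{0,g_{dS},0}, \lambda)$ on symmetric $2$-tensors is the two-parameter family $(3(\lambda-2) X,\ 0,\ (6-\lambda) X,\ \lambda(\lambda-3) u_{TT0})$, while, for $\lambda \neq 4$, the kernel of the matrix displayed above consists of tuples $(a_{NN}, 0, a_{TT1}, a_{TT0})$ satisfying $(\lambda - 6) a_{NN} + 3(\lambda - 2) a_{TT1} = 0$. Direct comparison shows that these two sets coincide for all $\lambda \notin \{0, 3, 4\}$, with the borderline value $\lambda = 6$ (where one of the free parameters degenerates) checked by substitution; in particular exactness holds for every $\lambda \geq 5$.

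The support statement then follows from the fibrewise nature of the construction. Because all matrix entries above are independent of the point of $\del M$, the explicit right inverses appearing in the previous paragraphs --- for instance $\dot A_T = 3\Lambda^{-1}((\lambda-1)(\lambda-2))^{-1} f_T$ in the $K$-case, and an analogous explicit formula in $X$ and $u_{TT0}$ in the $L$-case --- define smooth global sections of the corresponding $0$-bundles on $\del M$. By construction the preimage $u$ they produce satisfies $\mathrm{supp}(u) \subseteq \mathrm{supp}(f)$, and the kernel relations --- e.g.\ $(\lambda-6)f_{NN} + 3(\lambda-2)f_{TT1} = 0$ forces $f_{NN}$ and $f_{TT1}$ to share their zero sets outside $\lambda = 6$ --- guarantee the reverse inclusion. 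The main obstacle is organizational: tracking the refined splitting and the $3/\Lambda$ conventions consistently across the various indicial families; once this bookkeeping is settled, the exactness claim reduces to inspecting which of the polynomial factors $\lambda - k$ fail to vanish.
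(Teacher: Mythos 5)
Your proposal is correct and follows essentially the same route as the paper: both compute the matrix representations of $I(\delta_{g_{dS}},\lambda)$ and $I(\delta_{g_{dS}}G_{g_{dS}},\lambda)$ in the splittings and match their kernels against the images of $I(K_{0,g_{dS},0},\lambda)$ and $I(L_{0,g_{dS},0},\lambda)$ read off from \eqref{eq:linearizationdeltad} and \eqref{eq:linearizationP}. Your explicit treatment of the support claim via pointwise right inverses (the maps being zeroth-order bundle maps) spells out what the paper leaves implicit, but it is the same argument.
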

\begin{proof}
    This is straightforward to check. According to \cref{lemma:splittingsDiffOpOnSymmetric} together with a short manual calculation for $I(\delta_g,\lambda)$ we get that in the splitting \eqref{eq:refinedsplitting}
    \begin{gather*}
        \frac{3}{\Lambda}I(\delta_gG_g, \lambda) = \begin{pmatrix}
            \frac{1}{2}(\lambda - 6) & 0 & \frac{3}{2}(\lambda - 2) & 0\\
            0 & \lambda - 4 & 0 & 0\\
        \end{pmatrix}
        \\
        \frac{3}{\Lambda}I(\delta_g, \lambda) = \begin{pmatrix}
            \lambda - 3 & 0
        \end{pmatrix}
    \end{gather*}
    Therefore, using the matrix representations of $I(L_{0,g_{dS},0}, \lambda)$ and $I(K_{0,g_{dS},0}, \lambda)$ obtained in \eqref{eq:linearizationP}, \eqref{eq:linearizationdeltad}, we see for $\lambda \geq 5$:
    \begin{align*}
        \Ker I(\delta_gG_g, \lambda) &= \R\tau^{-2}\Big(\rm{d}\tau_s^2 + \frac{\lambda-6}{3(\lambda-2)}h_s\Big) + \tau^{-2}\Ker \Tr_{h_s} \\ 
        &= \rm{Im} I(L_{0,g_{dS},0},\lambda)\vert_{\mc{C}^\infty(\del M; S^2 \nT^*M\vert_{\del M})},
    \end{align*}
    and for $\lambda \geq 4$:
    \[
    \Ker I(\delta_g, \lambda) = \tau^{-1}T^*X = \rm{Im} I(K_{0,g_{dS},0},\lambda)\vert_{\mc{C}^\infty(\del M; \nT^*M\vert_{\del M})}.
    \]
\end{proof}
Because the sequences are exact, we are handed a way of proving the existence of error corrections. Namely, if we show that the current error to $\deld$, respectively $P$, is in the kernel of $I(\delta_{g_{dS}}, \lambda), \lambda \geq 4$, respectively $I(\delta_{g_{dS}}G_{g_{dS}},\lambda),\lambda \geq 5$, then by exactness these error terms (and their negatives) will be in the image of the indicial families of the respective linearization. We may thus choose an element in the preimage, supported on the same set as the error, as a candidate for the correction. To make this rigorous, we need a simple lemma ensuring that the electromagnetic tensor will not interfere.
\begin{lemma}\label{lemma:OrderOfdeltaT}
    Let $M$ be a $(n+1)$ dimensional manifold with boundary defining function $\tau$ and Lorentzian $0$-metric $g$. Suppose $A \in \tau^2\mc{C}^\infty(M; \nT^*M)$ is a vector potential with $\deld(g, A) \in \tau^k\mc{C}^\infty(M; \nT^*M)$. Then \[
    \delta_g G_g T(g,A) \in \tau^{k+2}\mc{C}^\infty(M; \nT^*M). 
    \]
\end{lemma}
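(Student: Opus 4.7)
The plan is to exploit the standard electromagnetic conservation identity, which expresses $\delta_g T(g,A)$ as a contraction of $F = dA$ with $\deld(g,A)$, and then simply read off the order in $\tau$.

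The first step is to observe that the electromagnetic energy-momentum tensor $T(g,A)$ is traceless (as already noted in \cref{section:EinsteinMaxwell}), so $G_g T(g,A) = T(g,A)$ and hence $\delta_g G_g T(g,A) = \delta_g T(g,A)$. It therefore suffices to control $\delta_g T(g,A)$ alone.

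Next, I would derive the pointwise identity
\[
(\delta_g T(g,A))_\nu = (\deld(g,A))_\alpha\, F\indices{_\nu^\alpha}
\]
(up to sign). This follows from a direct computation: expanding $\nabla^\mu T_{\mu\nu}$ from the definition of $T$ produces three terms. The divergence piece $(\nabla^\mu F_{\mu\alpha}) F\indices{_\nu^\alpha}$ is rewritten using $\nabla^\mu F_{\mu\alpha} = -(\delta_g F)_\alpha = -(\deld(g,A))_\alpha$. The remaining two terms, $F_{\mu\alpha}\nabla^\mu F\indices{_\nu^\alpha}$ and $-\tfrac14 \nabla_\nu(F^{\alpha\beta}F_{\alpha\beta})$, cancel against each other after applying the Bianchi identity $\nabla_\mu F_{\nu\alpha} + \nabla_\nu F_{\alpha\mu} + \nabla_\alpha F_{\mu\nu} = 0$, which is automatic because $F = dA$ is closed. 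This is the classical derivation of the fact that the divergence of the electromagnetic stress-energy tensor vanishes modulo the inhomogeneous Maxwell equation.

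The final step is a simple order count. From the matrix representation of $d$ in \cref{lemma:splittingdeltagd}, the exterior derivative maps $\tau^m\mc{C}^\infty(M;\nT^*M)$ into $\tau^m\mc{C}^\infty(M;\Lambda^2\nT^*M)$, so $F = dA \in \tau^2\mc{C}^\infty$. Raising an index with the $0$-metric $g$ preserves this order, and contracting the smooth $0$-tensor $F\indices{_\nu^\alpha} \in \tau^2\mc{C}^\infty$ against $\deld(g,A) \in \tau^k\mc{C}^\infty$ lands in $\tau^{k+2}\mc{C}^\infty$, which is the claim. I expect the argument to be essentially routine; the substance is the tracelessness of $T$ and the classical divergence identity. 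The only bookkeeping to watch is the paper's sign convention $\delta_g k_\nu = -\nabla^\mu k_{\mu\nu}$ and the fact that all operations respect the $0$-tensor framework, but neither constitutes a genuine obstacle.
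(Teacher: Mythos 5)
Your proposal is correct and follows essentially the same route as the paper: reduce to $\delta_g T$ via tracelessness of $T$, expand the divergence into three terms, cancel two of them using the Bianchi identity $\dl{:2}A = 0$, and read off the order $\tau^{k+2}$ from the remaining term $(\deld(g,A))_\alpha F\indices{_\nu^\alpha}$ with $F = \dl{A} \in \tau^2\mc{C}^\infty$. No gaps.
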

\begin{proof}
    This is a direct calculation. First, observe that $T$ is traceless, so $G_g T = T$. Therefore,
    \begin{equation}\label{eq:calculationT}
    \begin{split}
        \delta_g G_g T(g,A)_\nu &= \delta_g T(g,A)_\nu \\ &= -\nabla^\mu \rm{d}A\indices{_\mu^\alpha}\rm{d}A\indices{_\nu_\alpha} - \rm{d}A\indices{_\mu^\alpha}\nabla^\mu \rm{d}A\indices{_\nu_\alpha} + \frac{1}{2}  \rm{d}A^{\alpha\beta}\nabla_\nu \rm{d}A_{\alpha\beta}
    \end{split}
    \end{equation}
    Written out in coordinates the vanishing of the second exterior derivative means that $\nabla_\nu \rm{d}A_{\alpha\beta} + \nabla_\alpha \rm{d}A_{\beta\nu} + \nabla_\beta \rm{d}A_{\nu\alpha} = 0$. Using this on the third term above gives
    \begin{align*}
         \rm{d}A^{\alpha\beta} \nabla_\nu \rm{d}A_{\alpha\beta} &= -  \rm{d}A^{\alpha\beta} \nabla_\alpha \rm{d}A_{\beta\nu} - \rm{d}A^{\alpha\beta}\nabla_\beta \rm{d}A_{\nu\alpha} \\
        &= - 2 \rm{d}A^{\alpha\beta} \nabla_\beta \rm{d}A_{\nu\alpha}
    \end{align*}
    where we have used antisymmetry of $\rm{d}A$ in the second line. The second and third term in \eqref{eq:calculationT} will hence cancel. As the first term is just $-\delta_g \rm{d}A^\alpha \rm{d}A_{\nu\alpha}$ the claim now follows from the prescribed orders of vanishing for $A$ and $\delta_g \rm{d}A$.
\end{proof}

The next three propositions illustrate the procedure for increasing the order of the error term in detail. Recall the already constructed $g_{(3)},g_{(4)}$ and $A_{(3)}$ from \cref{prop:CorrectionOfObstructionError}.

\begin{proposition}\label{prop:correctionDeltaD}
    Let $\lambda \geq 4$ and let $g_{(\lambda-1)}$ be some smooth Lorentzian 0-metric with $g_{(\lambda-1)} - g_{dS} \in \tau^3 \mc{C}^\infty$. Suppose $A_{(\lambda-1)} \in \tau^2\mc{C}^\infty(M; \nT^*M)$ satisfies $\deld(g_{(\lambda - 1)}, A_{(\lambda-1)}) \in \tau^\lambda\mc{C}^\infty$. Then there exists $A_{(\lambda)} \in \tau^2\mc{C}^\infty(M; \nT^*M)$ such that $A_{(\lambda)} - A_{(\lambda -1)} \in \tau^{\lambda}\mc{C}^\infty$ and $\deld(g_{(\lambda-1)}, A_{(\lambda)}) \in \tau^{\lambda+1}\mc{C}^\infty$.
\end{proposition}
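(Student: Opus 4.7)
The plan is to cancel the leading order error by adding a correction of the form $\tau^\lambda \tilde a$ to $A_{(\lambda-1)}$, where $\tilde a$ smoothly extends some $a \in \mc{C}^\infty(\del M; \nT^*M|_{\del M})$. Write
\[
\deld(g_{(\lambda-1)}, A_{(\lambda-1)}) = \tau^\lambda f + \mc{O}(\tau^{\lambda+1}), \qquad f_0 \coloneqq f|_{\del M}.
\]
Since $\deld$ is linear in its second argument, and since $g_{(\lambda-1)} - g_{dS} \in \tau^3\mc{C}^\infty$, \cref{lemma:StabilityOfLinearization} (together with the definition of the indicial family) gives
\[
\deld(g_{(\lambda-1)}, A_{(\lambda-1)} + \tau^\lambda \tilde a) \equiv \tau^\lambda \bigl(f_0 + I(K_{0,g_{dS},0}, \lambda)(0, a)\bigr) \mod \tau^{\lambda+1}\mc{C}^\infty.
\]
The task therefore reduces to the algebraic problem of finding $a$ with $I(K_{0,g_{dS},0}, \lambda)(0, a) = -f_0$ on $\del M$.

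The main obstacle is that, by \eqref{eq:linearizationdeltad}, the image of $I(K_{0,g_{dS},0}, \lambda)$ on covector inputs lies entirely in the tangential component $\tau^{-1}T^*X$ of $\nT^*M|_{\del M}$, while a generic $f_0$ need not be tangential. The crucial input is the identity $\delta_g^2 = 0$ on differential forms, valid for any Lorentzian metric $g$ as a consequence of $\delta_g = \pm \star_g \rm{d} \star_g$ and $\rm{d}^2 = 0$. Applied to $\rm{d} A_{(\lambda-1)}$ with $g = g_{(\lambda-1)}$, this yields the pointwise identity
\[
\delta_{g_{(\lambda-1)}} \deld(g_{(\lambda-1)}, A_{(\lambda-1)}) = 0.
\]
Since $\delta_{g_{(\lambda-1)}} \in \Diff_0^1$ differs from $\delta_{g_{dS}}$ by an element of $\tau^3 \Diff_0^1$ (by the proof of \cref{lemma:StabilityOfLinearization}), reading off the $\tau^\lambda$ coefficient of the identity above forces $I(\delta_{g_{dS}}, \lambda)(f_0) = 0$, i.e.\ $f_0 \in \Ker I(\delta_{g_{dS}}, \lambda)$.

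Since $\lambda \geq 4$, the exactness of the sequence \eqref{eq:sequenceK_0} provided by \cref{lemma:SequencesAreExact} gives $\Ker I(\delta_{g_{dS}}, \lambda) = \rm{Im} I(K_{0, g_{dS}, 0}, \lambda)|_{\mc{C}^\infty(\del M; \nT^*M|_{\del M})}$, so we may pick $a \in \mc{C}^\infty(\del M; \nT^*M|_{\del M})$ with $I(K_{0,g_{dS},0}, \lambda)(0, a) = -f_0$. Extend $a$ arbitrarily to a smooth $\tilde a \in \mc{C}^\infty(M; \nT^*M)$ and set $A_{(\lambda)} \coloneqq A_{(\lambda-1)} + \tau^\lambda \tilde a$. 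The computation of the first paragraph then gives $\deld(g_{(\lambda-1)}, A_{(\lambda)}) \in \tau^{\lambda+1}\mc{C}^\infty$; the difference $A_{(\lambda)} - A_{(\lambda-1)} = \tau^\lambda \tilde a$ lies in $\tau^\lambda \mc{C}^\infty$, and since $\lambda \geq 4 > 2$, $A_{(\lambda)}$ still lies in $\tau^2 \mc{C}^\infty$, completing the proof.
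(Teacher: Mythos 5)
Your proposal is correct and follows essentially the same route as the paper: expand the perturbed $\deld$ via the stability of the linearization, use $\delta_g^2=0$ to show the leading error lies in $\Ker I(\delta_{g_{dS}},\lambda)$, and invoke the exactness of \eqref{eq:sequenceK_0} from \cref{lemma:SequencesAreExact} to produce the correcting covector. The only cosmetic difference is that you exploit the linearity of $\deld$ in $A$ directly where the paper cites \cref{lemma:DifferentialOperatorsUpToLinearization}; the substance is identical.
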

\begin{proof}
    Let $f_\lambda \coloneqq \tau^{-\lambda}\deld (g_{(\lambda -1)},A_{(\lambda-1)})\vert_{\tau=0}$ be the order $\tau^\lambda$ error to $\deld$.
    Suppose $A' \in \tau^\lambda \mc{C}^\infty(M; \nT^*M)$ is some arbitrary correction term. Then, by \cref{lemma:DifferentialOperatorsUpToLinearization} and \cref{lemma:StabilityOfLinearization},
    \begin{align*}
    \deld(g_{(\lambda-1)},A_{(\lambda-1)} + A') &\equiv \deld(g_{(\lambda -1)},A_{(\lambda -1)}) + K_{0, g_{(\lambda-1)},A_{(\lambda -1)}}(0, A') \mod \tau^{2\lambda}\mc{C}^\infty \\
    &\equiv \deld(g_{(\lambda -1)},A_{(\lambda -1)}) + K_{0, g_{dS},0}(0, A') \mod \tau^{\lambda + 1}\mc{C}^\infty \\
    \end{align*}
    On the other hand, making use of $\delta \circ \delta = 0$, we obtain
    \begin{align*}
        0 &= \delta \deld(g_{(\lambda-1 )},A_{(\lambda-1)}) \\
        &\equiv \delta_{g_{(\lambda-1)}}\tau^\lambda f_\lambda \mod \tau^{\lambda + 1}\mc{C}^\infty \\
        &\equiv \delta_{g_{dS}}\tau^\lambda f_\lambda \mod \tau^{\lambda + 1}\mc{C}^\infty \\
        &\equiv \tau^\lambda I(\delta_{g_{dS}},\lambda) f_\lambda \mod \tau^{\lambda + 1}\mc{C}^\infty.
    \end{align*}
    In other words $f_\lambda \in \rm{ker}I(\delta_{g_{dS}},\lambda) = \rm{im}I(K_{0,g_{dS},0},\lambda)$.
    We may thus indeed choose $A'$ to exactly cancel out the $\mc{O}(\tau^\lambda)$ error to $\deld$. The proof is finished upon defining $A_{(\lambda)} \coloneqq A_{(\lambda-1)} + A'$.
\end{proof}
\begin{proposition}\label{prop:correctionP}
    Let $\lambda \geq 5$ and $A_{(\lambda-1)} \in \tau^2\mc{C}^\infty(M; \nT^*M)$ be a potential. Suppose $g_{(\lambda-1)}$ is a smooth Lorentzian $0$-metric with $g_{(\lambda-1)} - g_{dS} \in \tau^3 \mc{C}^\infty$. Assume $P(g_{(\lambda-1)}, A_{(\lambda - 1)}) \in \tau^{\lambda}\mc{C}^\infty$ and $\deld(g_{(\lambda -1)},A_{(\lambda-1)}) \in \tau^{\lambda}\mc{C}^\infty$. Then there exists a smooth Lorentzian $0$-metric $g_{(\lambda)}$ such that $g_{(\lambda)} - g_{(\lambda-1)} \in \tau^\lambda\mc{C}^\infty$ and $P(g_{(\lambda)}, A_{(\lambda-1)}) \in \tau^{\lambda +1}\mc{C}^\infty$.
\end{proposition}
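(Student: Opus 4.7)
The plan is to mirror the proof of \cref{prop:correctionDeltaD}, this time exploiting exactness of the sequence \eqref{eq:sequenceL_0} at $\lambda \geq 5$ from \cref{lemma:SequencesAreExact}. Setting $f_\lambda \coloneqq \tau^{-\lambda} P(g_{(\lambda-1)}, A_{(\lambda-1)})\vert_{\tau=0} \in \mc{C}^\infty(\del M; S^2\nT^*M\vert_{\del M})$, I would produce a symmetric $2$-tensor $\tilde{g}_\lambda$ on $\del M$ with $I(L_{0,g_{dS},0},\lambda)\tilde{g}_\lambda = -f_\lambda$, extend it smoothly into $M$, and define $g_{(\lambda)} \coloneqq g_{(\lambda-1)} + \tau^\lambda \tilde{g}_\lambda$. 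Since $\lambda \geq 5$ this is a small perturbation, and $g_{(\lambda)}$ remains a smooth Lorentzian $0$-metric with $g_{(\lambda)} - g_{dS} \in \tau^3\mc{C}^\infty$.

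The key step is showing $f_\lambda \in \Ker I(\delta_{g_{dS}}G_{g_{dS}},\lambda)$. For this I would apply the second Bianchi identity $\delta_g G_g P_0(g) = 0$ at $g = g_{(\lambda-1)}$ and use $P = P_0 - 4T$ to rewrite
\[
\delta_{g_{(\lambda-1)}} G_{g_{(\lambda-1)}} P(g_{(\lambda-1)}, A_{(\lambda-1)}) = -4 \delta_{g_{(\lambda-1)}} G_{g_{(\lambda-1)}} T(g_{(\lambda-1)}, A_{(\lambda-1)}).
\]
The right-hand side lies in $\tau^{\lambda+2}\mc{C}^\infty$ by \cref{lemma:OrderOfdeltaT}, since $\deld(g_{(\lambda-1)}, A_{(\lambda-1)}) \in \tau^\lambda\mc{C}^\infty$ by hypothesis. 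On the left, because $g_{(\lambda-1)} - g_{dS} \in \tau^3\mc{C}^\infty$, the operator $\delta_{g_{(\lambda-1)}} G_{g_{(\lambda-1)}} - \delta_{g_{dS}}G_{g_{dS}}$ has coefficients in $\tau^3\mc{C}^\infty$. Expanding $P(g_{(\lambda-1)}, A_{(\lambda-1)}) = \tau^\lambda f_\lambda + \mc{O}(\tau^{\lambda+1})$ and isolating the leading $\tau^\lambda$-coefficient then yields $I(\delta_{g_{dS}} G_{g_{dS}}, \lambda) f_\lambda = 0$, as desired.

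With $\tilde{g}_\lambda$ in hand, verification that the error has been reduced is standard. \cref{lemma:DifferentialOperatorsUpToLinearization} (applied with $\tilde{A} = 0$) gives
\[
P(g_{(\lambda)}, A_{(\lambda-1)}) \equiv P(g_{(\lambda-1)}, A_{(\lambda-1)}) + L_{0, g_{(\lambda-1)}, A_{(\lambda-1)}}(\tau^\lambda \tilde{g}_\lambda, 0) \mod \tau^{2\lambda}\mc{C}^\infty,
\]
and \cref{lemma:StabilityOfLinearization} lets me replace the linearization by $L_{0, g_{dS}, 0}$ modulo $\tau^{\lambda+2}\mc{C}^\infty$ (the $\tau^2$ loss coming from $A_{(\lambda-1)} \in \tau^2\mc{C}^\infty$, which is harmless since $\lambda \geq 5$). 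The leading $\tau^\lambda$-coefficient of the right-hand side reads $f_\lambda + I(L_{0,g_{dS},0},\lambda)\tilde{g}_\lambda = 0$ by construction, so $P(g_{(\lambda)}, A_{(\lambda-1)}) \in \tau^{\lambda+1}\mc{C}^\infty$.

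The only substantive obstacle relative to \cref{prop:correctionDeltaD} is the stress–energy contribution: the second Bianchi identity alone supplies $\delta G P_0 = 0$ but not $\delta G P = 0$, so the term $\delta_g G_g T(g,A)$ must be controlled separately. This is precisely what \cref{lemma:OrderOfdeltaT} provides, and it is the reason the iteration scheme alternates — correcting $\deld$ at order $\tau^\lambda$ first so that, at the following Einstein correction step, \cref{lemma:OrderOfdeltaT} can absorb the $T$ contribution into the $\tau^{\lambda+2}$ remainder.
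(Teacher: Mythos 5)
Your proposal is correct and follows essentially the same route as the paper: both identify the leading error coefficient, use the second Bianchi identity together with \cref{lemma:OrderOfdeltaT} to show it lies in $\Ker I(\delta_{g_{dS}}G_{g_{dS}},\lambda)$, invoke exactness of \eqref{eq:sequenceL_0} to find a correcting tensor, and verify the improvement via \cref{lemma:DifferentialOperatorsUpToLinearization} and \cref{lemma:StabilityOfLinearization}. The only cosmetic difference is that the paper first isolates $P_0(g_{(\lambda-1)}) \equiv 4T + \tau^\lambda s_\lambda$ and applies $\delta G$ to that expression, whereas you apply $\delta G$ directly to $P = P_0 - 4T$; these are equivalent.
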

\begin{proof}
    Let $s_\lambda = \tau^{-\lambda}P(g_{(\lambda-1)}, A_{(\lambda - 1)})\vert_{\tau =0}$ be the order $\tau^\lambda$ error to $P$.
    Then 
    \begin{align*}
        P_0(g_{(\lambda -1)}) \equiv 4T(g_{(\lambda - 1)}, A_{(\lambda-1)}) + \tau^\lambda s_\lambda \mod \tau^{\lambda + 1}\mc{C}^\infty
    \end{align*}
    Therefore, using $0 = \delta G \circ P_0$ and \cref{lemma:OrderOfdeltaT}, 
    \begin{align*}
        0 &= \delta_{g_{(\lambda-1)}} G_{g_{(\lambda-1)}}P_0(g_{(\lambda-1)}) \\
        &\equiv \delta_{g_{(\lambda-1)}} G_{g_{(\lambda-1)}}(4T(g_{(\lambda - 1)}, A_{(\lambda-1)}) + \tau^\lambda s_\lambda) \mod \tau^{\lambda + 1}\mc{C}^\infty \\
        &\equiv \delta_{g_{(\lambda-1)}} G_{g_{(\lambda-1)}}\tau^\lambda s_\lambda \mod \tau^{\lambda + 1}\mc{C}^\infty \\ 
        &\equiv \delta_{g_{dS}} G_{g_{dS}} \tau^\lambda  s_\lambda\mod \tau^{\lambda + 1}\mc{C}^\infty \\
        &\equiv  \tau^\lambda  I(\delta_{g_{dS}} G_{g_{dS}}, \lambda) s_\lambda\mod \tau^{\lambda + 1}\mc{C}^\infty \\
    \end{align*}
    Thus $s_\lambda \in \ker I(\delta_{g_{dS}} G_{g_{dS}}, \lambda) = \rm{im} I(L_{0,g_{dS},0},\lambda)$ and we may choose some $k \in \mc{C}^\infty(\del M; S^2 \nT^*M)$ to cancel $I(\delta_{g_{dS}} G_{g_{dS}}, \lambda) s_\lambda$. Defining $g_{(\lambda)} \coloneqq g_{(\lambda - 1)} + \tau^\lambda k$ we obtain
    \begin{align*}
        P(g_{(\lambda-1)}, A_{(\lambda)}) &\equiv P(g_{(\lambda-1)}, A_{(\lambda)}) + L_{0,g_{(\lambda-1)}, A_{(\lambda)}}(\tau^\lambda k, 0) \mod \tau^{2\lambda}\mc{C}^\infty \\
        &\equiv P(g_{(\lambda-1)}, A_{(\lambda - 1)}) + L_{0,g_{dS}, 0}(\tau^\lambda k, 0) \mod \tau^{\lambda +1}\mc{C}^\infty\\
        &\equiv 0 \mod \tau^{\lambda +1}\mc{C}^\infty. \qedhere
    \end{align*}
\end{proof}

We may now finally prove:

\begin{proposition}\label{prop:CorrectionToInfiniteOrder}
    Under the assumptions of \cref{theorem:BlackHoleGluing}, let $g_{(4)}$ and $A_{(3)}$ be as in \cref{prop:CorrectionOfObstructionError}. Then there exist a smooth Lorentzian $0$-metric $g_{(\infty)} \in \mc{C}^\infty(M;S^2 \nT^*M)$ and $A_{(\infty)} \in \mc{C}^\infty(M; \nT^*M)$ such that
    \begin{enumerate}
        \item $g_{(\infty)} - g_{(4)} \in \tau^5 \mc{C}^\infty$ and $g_{(\infty)} = g_{(4)}$ near $\bigcup_{i=1}^N \overline{V_{p_i}}$;
        \item $A_{(\infty)} - A_{(3)} \in \tau^4 \mc{C}^\infty$ and $A_{(\infty)} = A_{(3)}$ near $\bigcup_{i=1}^N \overline{V_{p_i}}$;
        \item $P(g_{(\infty)},A_{(\infty)}) \in \tau^\infty \mc{C}^\infty$;
        \item $\deld(g_{(\infty)},A_{(\infty)}) \in \tau^\infty \mc{C}^\infty$.
    \end{enumerate}
    Here $\tau^\infty \mc{C}^\infty = \bigcap_{m=1}^\infty \tau^m \mc{C}^\infty$.
\end{proposition}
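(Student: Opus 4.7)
The plan is to iteratively refine $g_{(4)}$ and $A_{(3)}$ by alternately invoking \cref{prop:correctionDeltaD} and \cref{prop:correctionP} to drive the orders of vanishing of $\deld$ and $P$ to infinity, and then to pass to a true smooth solution via Borel's lemma while preserving the support constraint.

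Specifically, I would argue by induction on $\lambda \geq 4$ that there exist a smooth Lorentzian $0$-metric $g_{(\lambda)}$ and a potential $A_{(\lambda-1)} \in \tau^2\mc{C}^\infty$ with
\[
    P(g_{(\lambda)}, A_{(\lambda-1)}) \in \tau^{\lambda+1}\mc{C}^\infty, \qquad \deld(g_{(\lambda)}, A_{(\lambda-1)}) \in \tau^{\lambda}\mc{C}^\infty,
\]
and such that every increment $g_{(\lambda+1)} - g_{(\lambda)}$ and $A_{(\lambda)} - A_{(\lambda-1)}$ is supported in a fixed closed set $\Omega \subset M$ disjoint from a neighborhood of $\bigcup_i \overline{V_{p_i}}$. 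The base case $\lambda = 4$ is \cref{prop:CorrectionOfObstructionError}. For the inductive step I first apply \cref{prop:correctionDeltaD} with parameter $\lambda$ to $(g_{(\lambda)}, A_{(\lambda-1)})$ to obtain $A_{(\lambda)}$ with $\deld(g_{(\lambda)}, A_{(\lambda)}) \in \tau^{\lambda+1}\mc{C}^\infty$, then invoke \cref{prop:correctionP} with parameter $\lambda+1$ applied to $(g_{(\lambda)}, A_{(\lambda)})$ to obtain $g_{(\lambda+1)}$ with $P(g_{(\lambda+1)}, A_{(\lambda)}) \in \tau^{\lambda+2}\mc{C}^\infty$. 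The key point to verify is that each half-step does not degrade the other equation below the required threshold: combining \cref{lemma:StabilityOfLinearization} and \cref{lemma:DifferentialOperatorsUpToLinearization} with the explicit formulas \eqref{eq:CalcLinearization4T}, one observes that both $D_A[-4T(g,\cdot)]$ and $D_g[\deld(\cdot, A)]$ carry an overall factor of $F \in \tau^2\mc{C}^\infty$, so a $\tau^\lambda$ perturbation of $A$ changes $P$ by only $\mc{O}(\tau^{\lambda+2})$ and a $\tau^{\lambda+1}$ perturbation of $g$ changes $\deld$ by only $\mc{O}(\tau^{\lambda+3})$. The support condition propagates because \cref{lemma:SequencesAreExact} lets us choose preimages with the same support as their images, and the errors to be corrected are by induction supported in $\Omega$.

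To finish, I would apply Borel's lemma for smooth sections of $S^2\nT^*M$ and $\nT^*M$, implemented with cutoffs supported in shrinking neighborhoods of $\{\tau=0\}$ so as to preserve the common support $\Omega$ of the increments, to obtain $g_{(\infty)}$ and $A_{(\infty)}$ with $g_{(\infty)} - g_{(\lambda)} \in \tau^{\lambda+1}\mc{C}^\infty$ and $A_{(\infty)} - A_{(\lambda)} \in \tau^{\lambda+1}\mc{C}^\infty$ for every $\lambda$. These are smooth, agree with $g_{(4)}$ and $A_{(3)}$ near $\bigcup_i\overline{V_{p_i}}$, and one more application of \cref{lemma:DifferentialOperatorsUpToLinearization} at each stage shows that $P(g_{(\infty)}, A_{(\infty)})$ and $\deld(g_{(\infty)}, A_{(\infty)})$ lie in $\tau^{\lambda+1}\mc{C}^\infty$ for every $\lambda$, hence in $\tau^\infty \mc{C}^\infty$. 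The main delicacy will be the order-bookkeeping in the inductive step just described; since the cohomological obstructions are entirely absorbed into the supporting lemmas, no further constraints on the black hole parameters appear at any stage.
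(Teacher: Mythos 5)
Your proposal is correct and follows essentially the same route as the paper: alternating applications of \cref{prop:correctionDeltaD} and \cref{prop:correctionP}, with the cross-check that each half-step degrades the other equation only at higher order, followed by Borel's lemma. The only cosmetic difference is that you bound the cross-terms by noting the explicit factor of $F \in \tau^2\mc{C}^\infty$ in \eqref{eq:CalcLinearization4T}, whereas the paper gets the same gain by combining \cref{lemma:StabilityOfLinearization} with the vanishing of $L_{0,g_{dS},0}$ on covectors and of $K_{0,g_{dS},0}$ on symmetric $2$-tensors; these are equivalent since both rest on $F_{dS}=0$.
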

\begin{proof}
    We may apply \cref{prop:correctionDeltaD} to find $A_{(4)} \in \tau^2\mc{C}^\infty$ with $A_{(4)} - A_{(3)} \in \tau^4\mc{C}^\infty$ and $\deld(g_{(3)},A_{(4)}) \in \tau^5 \mc{C}^\infty$.
    Then
    \begin{align*}
        \deld(g_{(4)}, A_{(4)}) &\equiv \deld(g_{(3)}, A_{(4)}) + K_{0,g_{(3)}, A_{(4)}}(g_{(4)} - g_{(3)}, 0) \mod \tau^{2\cdot 3}\mc{C}^\infty\\
        &\equiv \deld(g_{(3)}, A_{(4)}) + K_{0,g_{dS}, 0}(g_{(4)} - g_{(3)}, 0) \mod \tau^{3+2}\mc{C}^\infty\\
        &\equiv 0 \mod \tau^5\mc{C}^\infty,
    \end{align*}
    where we have used that $K_{0,g_{dS},0}$ vanishes when acting on symmetric $2$ tensors in the last line. \\
    Similarly 
    \begin{align*}
        P(g_{(4)}, A_{(4)}) &\equiv P(g_{(4)}, A_{(3)}) + L_{0,g_{(4)}, A_{(3)}}(0,A_{(4)} - A_{(3)}) \mod \tau^{2\cdot 4}\mc{C}^\infty\\
        &\equiv P(g_{(4)}, A_{(3)}) + L_{0,g_{dS}, 0}(0,A_{(4)} - A_{(3)}) \mod \tau^{4+2}\mc{C}^\infty\\
        &\equiv 0 \mod \tau^5\mc{C}^\infty ,
    \end{align*}
    where we have used that $L_{0,g_{dS},0}$ vanishes when acting on covectors. \\
    We may therefore apply \cref{prop:correctionP} to produce $g_{(5)}$ with $g_{(5)}-g_{(4)} \in \tau^5 \mc{C}^\infty$ and $P(g_{(5)}, A_{(4)}) \in \tau^6\mc{C}^\infty$. \\
    Then inductively repeating this procedure for higher orders gives us, for $\lambda \geq 4$, $g_{(\lambda)},A_{(\lambda)}$ such that
    \begin{gather*}
        \deld (g_{(\lambda)}, A_{(\lambda)}) \in \tau^{\lambda+1} \mc{C}^\infty, \\
        P (g_{(\lambda)}, A_{(\lambda)}) \in \tau^{\lambda+1} \mc{C}^\infty.
    \end{gather*}
    Borel's lemma produces $g_{(\infty)}$ and $A_{(\infty)}$ with 
    \begin{equation}
        \begin{split}
            g_{(\infty)} \equiv g_{(\lambda)} \mod \tau^{\lambda + 1}, \\
            A_{(\infty)} \equiv A_{(\lambda)} \mod \tau^{\lambda + 1} \\
        \end{split}
    \end{equation}
    for all $\lambda \geq 4.$
    It remains to check that these solve the Einstein--Maxwell equations to infinite order. But
    \begin{align*}
        P(g_{(\infty)}, A_{(\infty)}) &\equiv P(g_{(\lambda)}, A_{(\lambda)}) + L_{0, g_{(\lambda)}, A_{(\lambda)}}(g_{(\infty)} - g_{(\lambda)}, A_{(\infty)}-A_{(\lambda)}) \mod \tau^{2\lambda +2}\mc{C}^\infty \\
        &\equiv 0 \mod \tau^{\lambda+1}\mc{C}^\infty
    \end{align*}
    for all $\lambda \geq 4$, and therefore $P(g_{(\infty)}, A_{(\infty)}) \in \tau^\infty \mc{C}^\infty$. Similarly for $\deld$.
\end{proof}
   \subsection{Solving the nonlinear equations}
So far we have constructed, in \cref{prop:CorrectionOfObstructionError} and \cref{prop:CorrectionToInfiniteOrder}, formal solutions $g_{(\infty)}$, $A_{(\infty)}$ to the Einstein--Maxwell system. There remain $\mc{O}(\tau^\infty)$ errors to $\delta d$ and $P$ which we will now solve away by adding additional $\tilde{g}, \tilde{A} \in \tau^\infty \mc{C}^\infty$ corrections. \\
Following \cite[\S 2.2.]{HintzKerrNewmanDeSitter}, we will work in a DeTurck gauge \cite{DeTurck81,GrahamLee} for the metric and a Lorenz-gauge for the electromagnetic potential. Consider a background Lorentzian metric $t$ and electromagnetic potential $B$ on $M$. For an additional Lorentzian $0$-metric $g$ and potential $A$ we define the gauge $1$-form
\begin{equation}\label{eq:defUpsilon}
    \Upsilon(g) \coloneqq g t^{-1} \delta_g G_g t,
\end{equation}
where $gt^{-1}$ is the endomorphism acting on $T^*M$ via $gt^{-1}\omega_\mu = g_{\mu\nu}(t^{-1})^{\nu\rho}\omega_\rho$, and the gauge function
\begin{equation}\label{eq:defGaugeFunction}
    \Upsilon^M(g;A) \coloneqq \Tr_{g}\delta_t^* A - \Tr_{g}\delta_t^* B.
\end{equation}
We then also define the gauge-fixed version of the Einstein--Maxwell system
\begin{equation}\label{eq:defGaugeFixedEinstein-Maxwell}
\begin{split}
    P_{DT}(g;A) &\coloneqq 2(\Ric(g) - \Lambda g - 2T(g, A)- \delta_g^* \Upsilon(g)), \\
    P_L(g;A) &\coloneqq \delta_g \rm{d}A - \rm{d}\Upsilon^M(g;A).
\end{split}
\end{equation}
We view this as a coupled system of differential equations for the correction terms $g-t$ and $A-B$. The $\Upsilon$ terms transform it into a system of quasilinear wave equations in that they cancel second-order terms so that the principal symbol of $(P_{DT}, P_L)$ as an operator acting on $g-t, A-B$ is exactly given by the inverse metric function. In our case, we work with the background metric $t = g_{(\infty)}$ and the background potential $B = A_{(\infty)}$. We will show that the gauge-fixed system may be solved using correction terms $g-g_{(\infty)}$, $A-A_{(\infty)}$ vanishing to infinite order:
\begin{proposition}\label{prop:solutionToGaugeFixedEq}
    Let $g_{(\infty)},A_{(\infty)}$ be the formal solutions from \cref{prop:CorrectionToInfiniteOrder}. Then there exist $\tilde{g} \in \tau^\infty\mc{C}^\infty(M; S^2 \nT^*M)$ and $\tilde{A} \in \tau^\infty \mc{C}^\infty(M; \nT^*M)$, vanishing near $\bigcup_{i=1}^N\overline{V_{p_i}}$, such that $g \coloneqq g_{(\infty)} + \tilde{g}$ and $A \coloneqq A_{(\infty)} + \tilde{A}$ satisfy
    \begin{equation}
        P_{DT}(g;A) = 0, \quad P_L(g;A) = 0 \ \text{near} \ \tau=0.
    \end{equation}
\end{proposition}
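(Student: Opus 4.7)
The plan is to recast the gauge-fixed system as a coupled quasilinear wave equation for $(\tilde g, \tilde A)$ and invoke the backward solvability result for such equations on asymptotically de~Sitter spacetimes from \cite{asy-ds}.

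I would first observe that with the background data $t = g_{(\infty)}$ and $B = A_{(\infty)}$, the gauge one-form and gauge function vanish identically at the background: $\Upsilon(g_{(\infty)}) = \delta_{g_{(\infty)}} G_{g_{(\infty)}} g_{(\infty)} = 0$, since $\delta_g g = 0$ and $\Tr_g g$ is constant, and $\Upsilon^M(g_{(\infty)}; A_{(\infty)}) = 0$ directly from its definition since $A_{(\infty)} = B$. Consequently,
\begin{align*}
    P_{DT}(g_{(\infty)}; A_{(\infty)}) &= P(g_{(\infty)}, A_{(\infty)}) \in \tau^\infty \mc{C}^\infty, \\
    P_L(g_{(\infty)}; A_{(\infty)}) &= \deld(g_{(\infty)}, A_{(\infty)}) \in \tau^\infty \mc{C}^\infty,
\end{align*}
by \cref{prop:CorrectionToInfiniteOrder}. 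So the gauge-fixed system is satisfied to infinite order at $\del M$.

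Next I would verify that $(\tilde g, \tilde A) \mapsto (P_{DT}(g_{(\infty)} + \tilde g; A_{(\infty)} + \tilde A), P_L(g_{(\infty)} + \tilde g; A_{(\infty)} + \tilde A))$ is a quasilinear wave operator. This is the whole point of the two gauges. Linearizing in $\tilde g$, the DeTurck term $-2\delta_g^*\Upsilon(g)$ cancels the $-2\delta_g^*\delta_g\sfG_g$ piece appearing in the linearization $L_{0,g} = \Box_g - 2\delta_g^*\delta_g\sfG_g + 2\sR_g - 2\Lambda$, leaving the tensor wave operator $\Box_g$ as the principal part on symmetric $2$-tensors. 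Similarly, subtracting $\rm{d}\Upsilon^M$ from $\delta_g \rm{d} A$ completes $\delta_g \rm{d}$ to the Hodge Laplacian $\delta\rm{d} + \rm{d}\delta$ on $1$-forms, whose principal part is again $\Box_g$. The coupled system is therefore a quasilinear wave system whose principal symbol on each component is (a scalar multiple of) the dual metric $g^{-1}$.

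With these ingredients in hand, the statement reduces to local solvability of a quasilinear wave system on an asymptotically de~Sitter background with source in $\tau^\infty \mc{C}^\infty$ and trivial data at the spacelike conformal boundary $\del M = \{\tau = 0\}$. This is precisely the setting treated in \cite{asy-ds}, which produces a solution $(\tilde g, \tilde A) \in \tau^\infty \mc{C}^\infty$ in some collar $\{\tau < \epsilon\}$. The support condition follows from finite propagation speed: on a neighborhood of each $\overline{V_{p_i}}$ we have $(g_{(\infty)}, A_{(\infty)}) = (g_{p_i,\mf{m}_i,Q_i}, A_{p_i,Q_i})$, which is an exact solution of the Einstein--Maxwell system, so both source terms vanish there, forcing $(\tilde g, \tilde A)$ to vanish on a (possibly smaller) neighborhood of each $\overline{V_{p_i}}$ as well. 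The main obstacle is the application of \cite{asy-ds} in the present $0$-geometry framework, which requires matching the function spaces and verifying that the principal symbol of the coupled system fits the hypotheses; once that is in place, the $\tau^\infty$ decay of the source propagates to $\tau^\infty$ decay of the correction essentially automatically.
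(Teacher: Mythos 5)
Your proposal is correct and follows essentially the same route as the paper: observe that the gauge terms vanish at the background so the source $P(0)$ is $\mc{O}(\tau^\infty)$, verify that the DeTurck and Lorenz gauge terms turn the coupled system into a quasilinear wave system with principal part $\Box_g$, and then invoke the backward solvability result of \cite{asy-ds} (stated in the paper as \cref{prop:UniquenessAndExistenceQuasilinearWave}), with the support condition following from the fact that the background is an exact solution near each $\overline{V_{p_i}}$. The only difference is one of detail: the step you flag as the ``main obstacle'' -- checking that the lower-order terms genuinely fit the form $P_1(\tau,x,u,{}^0\nabla u)$ in the $0$-geometry framework -- is precisely where the paper spends most of its effort, via an explicit coordinate computation of $\delta_g\rm{d}A$, $\rm{d}\Upsilon^M$, $\Ric(g)$ and $\delta_g^*\Upsilon(g)$ in the $0$-frame.
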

Using the identity $\delta_g^2=0$ on $P_L$ will then show that $\Upsilon^M(g;A)$ itself solves a wave equation with vanishing initial data. Uniqueness then implies $\Upsilon^M(g;A) = 0$, so the constructed solution will solve Maxwell's equations. Applying the second Bianchi identity to $P_{DT}$ will similarly show that $\Upsilon(g)$ solves a wave equation and vanishes, i.e. the inhomogenous Einstein equations are satisfied as well.
In the proof of the above proposition, we make use of results on quasilinear wave equations in asymptotically de Sitter spaces from \cite{asy-ds}. We state these now; see also \cite{vasy-waveeq, Zworski2016Dec}. We denote by $\underline{\R}^k \coloneqq M \times \R^k$ the trivial vector bundle on a manifold $M$.
\begin{proposition}\label{prop:UniquenessAndExistenceQuasilinearWave}
    Let $M=[0,1)_\tau \times X$, where $X$ is an $n$-dimensional smooth Riemannian manifold with metric $h$. Suppose $g_{\text{bg}}$ is a smooth Lorentzian $0$-metric, which in local coordinates $\tau\geq 0, x\in \R^n$ is of the form
    \begin{equation}\label{eq:backgroundmetricinasy-ds}
    g_{\text{bg}}(\tau,x,\rm{d}\tau,\rm{d}x) = \frac{3}{\Lambda}\frac{-\rm{d}\tau^2 + h(x,\rm{d}x)}{\tau^2} + \overline{g}(\tau,x,\rm{d}\tau,\rm{d}x),
    \end{equation}
    where $\overline{g} \in \tau^\eta \mc{C}^\infty(M; S^2 \nT^*M)$ for some $\eta > 0$.  Let $P$ be a quasilinear wave operator acting on sections of a smooth vector bundle $B$, such that for any local trivialization $\tau \geq 0, x \in \R^n, u = (u_1, \dots, u_k) \in \R^k$ for $B$ there exist smooth nonlinear bundle maps $G: \underline{\R}^k \rightarrow S^2 \nT^*M$ and $P_1: \underline{\R}^k \times \underline{\R}^{(n+1)k}\rightarrow\underline{\R}^k$ with $G(\tau,x,0) = g_{\text{bg}}\vert_{(\tau,x)}$ and
    \[
    P(u) \coloneqq \Box_{G(\tau,x,u)}u + P_1(\tau,x,u, {}^0\nabla u).
    \]
    Here ${}^0\nabla = (\tau\del_\tau, \tau\del_{x_1}, \dots, \tau\del_{x_n})$ is the $0$-gradient. Lastly, assume $P(0) \in \tau^\infty \mc{C}^\infty(M; B)$. Then there exists a neighborhood $\mc{U}$ of $\{0\}\times X$ such that there exists a unique solution $u \in \mc{C}^\infty(\mc{U}; B)$ to $P(u)=0$.
\end{proposition}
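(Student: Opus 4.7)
The plan is to reduce the quasilinear problem to a sequence of linear problems via a Picard iteration, solving each linear problem as a backward Cauchy problem from the conformal boundary $\{\tau = 0\}$. The key structural point is that to leading order $g_{\text{bg}} = (3/\Lambda)\tau^{-2}(-\rm{d}\tau^2 + h)$ makes $\{\tau=0\}$ spacelike and $-\tau \del_\tau$ timelike for small $\tau > 0$; thus the problem is a hyperbolic backward problem near the boundary. Since $P(0) \in \tau^\infty \mc{C}^\infty$ vanishes to infinite order, the natural initial data are trivial.

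For the linear theory I would consider the linearization $L_u \dot{u} \coloneqq D_v P(u+v)|_{v=0}\dot{u}$ at any candidate coefficient $u$; its principal part coincides with $\Box_{G(\tau,x,u)}$, a second-order $0$-differential operator whose principal symbol is of asymptotically de Sitter type. Using $-\tau\del_\tau$ as a multiplier and commuting with elements of $\mc{V}_0(M)$, one obtains energy estimates in weighted $0$-Sobolev spaces $\tau^\alpha H^m_0$, yielding solvability of $L_u \dot{u} = f$ in a neighborhood of $\{\tau = 0\}$, with $\dot{u} \in \tau^\infty \mc{C}^\infty$ whenever $f \in \tau^\infty \mc{C}^\infty$. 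Since all iterates below will differ from $g_{\text{bg}}$ only by $\tau^\eta\mc{C}^\infty$ perturbations of the principal coefficients plus $\tau^\infty$ corrections, the required estimates are uniform along the iteration.

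The iteration then proceeds by setting $u_0 = 0$ and defining $u_{n+1}$ by solving $L_{u_n}(u_{n+1}-u_n) = -P(u_n)$ with $u_{n+1}-u_n$ vanishing to infinite order at the boundary. Since $P(u) - P(u_n) - L_{u_n}(u - u_n)$ is quadratic in $u - u_n$ and in its $0$-derivatives, a tame estimate of the form $\|u_{n+1}-u_n\| \lesssim \|u_n - u_{n-1}\|^2$ in the appropriate weighted $0$-Sobolev norm yields convergence after shrinking the neighborhood of $\{\tau = 0\}$. Uniqueness follows by applying the same energy estimate to the difference of two solutions, propagating trivial data from $\tau = 0$.

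The main obstacle is extracting $\mc{C}^\infty$ regularity of the limit rather than merely conormal regularity, and handling the fact that the characteristic set of $\Box_{g_{\text{bg}}}$ degenerates at the conformal boundary. This requires propagation-of-regularity estimates at the radial points of the $0$-wave operator lying above $\{\tau=0\}$, which is the microlocal content of \cite{asy-ds}; once these are in place the Fredholm theory for $L_u$ provides the smooth solution sought, and the quadratic iteration converges to it in a full neighborhood of the boundary.
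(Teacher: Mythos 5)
The paper does not actually prove this proposition: it is imported wholesale from \cite{asy-ds} (with further pointers to \cite{vasy-waveeq, Zworski2016Dec}), and the text only verifies afterwards that the gauge-fixed Einstein--Maxwell system fits its hypotheses. So there is no in-paper argument to compare yours against; what can be assessed is whether your sketch would stand as a proof of the cited result. It is a reasonable reconstruction of the standard strategy, but two points need repair. First, the Newton-type scheme $L_{u_n}(u_{n+1}-u_n)=-P(u_n)$ with $\norm{u_{n+1}-u_n}\lesssim\norm{u_n-u_{n-1}}^2$ does not close in a fixed weighted $0$-Sobolev space: for a quasilinear operator, solving $L_{u_n}\dot u=f$ costs a derivative of the coefficients relative to what the energy estimate returns, so quadratic convergence requires either a genuinely tame estimate plus Nash--Moser or, more simply, replacement by the usual linear iteration $\Box_{G(\tau,x,u_n)}u_{n+1}+P_1(\tau,x,u_n,{}^0\nabla u_n)=0$ with uniform high-norm bounds and contraction in a low norm. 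As written, the convergence claim is asserted rather than obtained.

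Second, the hard analytic content is exactly the step you defer back to \cite{asy-ds}: unique solvability of the linear backward problem from $\{\tau=0\}$ with solution in $\tau^\infty\mc{C}^\infty$. Note that $\{\tau=0\}$ is the conformal boundary, not a Cauchy hypersurface of $(M,g_{\text{bg}})$, and the multiplier estimate with $-\tau\del_\tau$ does not close for an arbitrary weight $\tau^{-N}$: the indicial roots of $\Box_{g_{\text{bg}}}$ produce nontrivial homogeneous solutions with only finite-order vanishing, which is precisely why the hypothesis $P(0)\in\tau^\infty\mc{C}^\infty$ is indispensable -- it is what permits taking $N$ arbitrarily large at every stage and what makes both existence of a solution smooth up to the boundary and its uniqueness true. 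Your sketch should say explicitly where this hypothesis enters; without it the statement is false. Since the radial-point/propagation estimates and the passage from conormal to smooth regularity are also deferred to \cite{asy-ds}, the proposal is in the end an expanded gloss on the same citation the paper relies on, rather than an independent proof.
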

We will proceed to show that the differential operator $(P_{DT}, P_L)$, acting on $(\tilde{g},\tilde{A}) \in S^2 \nT^*M \oplus \nT^*M$, fits the assumptions of \cref{prop:UniquenessAndExistenceQuasilinearWave}, with $G(\tau,x,\tilde{g},\tilde{A}) = g\vert_{(\tau,x)} = g_{(\infty)} + \Tilde{g}\vert_{(\tau,x)}$. We take $g_{\text{bg}} \coloneqq g_{(\infty)}$ as the background metric. Observe that indeed $g_{\text{bg}} - g_{dS} \in \tau^3\mc{C}^\infty(M; S^2 \nT^*M)$, so it is of the form required in \eqref{eq:backgroundmetricinasy-ds}.

Starting with $P_L$, by choice of $\Upsilon^M(g;A)$, $P_L$ does \emph{not} involve any second derivatives of $\tilde{g}$. 
Now,
\begin{equation}\label{eq:expansiondelta_gdA_nu}
\begin{split}
\delta_g \rm{d}A_\nu = - g^{\mu\alpha}(\del_\alpha\del_\mu A_\nu - \del_\alpha \del_\nu A_\mu &- \Gamma(g)\indices{^\beta_{\alpha\mu}} \del_\beta A_\nu  
+\Gamma(g)\indices{^\beta_{\alpha\mu}} \del_\nu A_\beta \\&- \Gamma(g)\indices{^\beta_{\alpha\nu}}  \del_\mu A_\beta + \Gamma(g)\indices{^\beta_{\alpha\nu}} \del_\beta A_\mu).
\end{split}
\end{equation}
Any of the terms in $\tilde{A}$ of order lower than $2$ may be put into $P_1$, by the same procedure as in \cite[\S 4.1]{asy-ds}. Namely, let us denote with an overline coefficients in the $0$-frame $\tau\del_\tau,\tau\del_{x_i}$ and $0$-coframe $\rm{d}\tau/\tau, \rm{d}_{x_i}/\tau$, so for example $\tau^2 g^{\overline{\mu}\overline{\nu}} = g^{\mu\nu}$. Then one of the first order terms appearing in $\delta_g \rm{d}A_{\overline{\nu}}$ is (this one coming from the third term in \eqref{eq:expansiondelta_gdA_nu})
\begin{align*}
    \tau \frac{1}{2}g^{\mu\alpha}g^{\beta\gamma}(\del_\mu g_{\alpha\gamma})(\del_\beta \tilde{A}_\nu) &= \tau \frac12\tau^2g^{\overline{\mu\alpha}}\tau^2g^{\overline{\beta\gamma}}(\del_\mu \tau^{-2}g_{\overline{\alpha\gamma}})(\del_\beta \tau^{-1}\tilde{A}_{\overline{\nu}}) \\
    &= \frac12g^{\overline{\beta\gamma}}g_{\overline{\alpha\gamma}}(\tau \tau\del_\beta \tau^{-1}\tilde{A}_{\overline{\nu}})(\tau^2 \tau \del_\mu \tau^{-2}g_{\overline{\alpha\gamma}}).
\end{align*}
The differential operators appearing here are of the form $\tau^n \tau\del_\nu \tau^{-n} = \tau\del_\nu + \tau^{n}[\tau\del_\nu,\tau^{-n}]$. As the commutator is $-n$ if $\nu = \tau$ and $0$ otherwise, the term will hence be smooth, because $g,A$ are smooth as $0$-sections. Any other of the lower order terms may be treated in the same manner. Similarly
\begin{equation}\label{eq:expansionUpsilonM}
\begin{split}
    -\rm{d}\Upsilon^M(g;A)_\nu &= -\del_\nu \Tr_g \delta_{g_{(\infty)}}^*\tilde{A}
    \\&= -\del_\nu (g^{\mu\alpha} \ {}^{g_{(\infty)}}\nabla_\alpha\tilde{A}_\mu)
    \\&= -\del_\nu (g^{\mu\alpha}(\del_\alpha \tilde{A}_\mu - A_\beta \Gamma(g_{(\infty)})\indices{^\beta_{\alpha\mu}})
    \\&= - g^{\mu\alpha}\del_\nu \del_\alpha \tilde{A}_\mu + \text{l.o.t.}
\end{split}
\end{equation}
All lower order terms may be analyzed as before. Adding \eqref{eq:expansiondelta_gdA_nu} and \eqref{eq:expansionUpsilonM} one indeed sees that $P_L$ is in the required form
\[
P_L(g;A) = \Box_g \Tilde{A} + P_1(\tau,x,\Tilde{g},\Tilde{A})
\]

We turn towards $P_{DT}$. We may once more put the $2(-\Lambda g - 2T(g,A))$ terms into the $P_1$ term, as they involve no second derivatives of $\tilde{g}$ or $\tilde{A}$ and are, as argued before for the terms in $P_L$, smooth. The other two terms $2(\Ric(g) - \delta_g^* \Upsilon(g))$ involve second derivatives of $\tilde{g}$, and we need to show that these second derivatives are exactly given by $\Box_g$, with the first order terms fitting into $P_1$.
One may calculate that (where we again refer to \cite[\S 4.1]{asy-ds} for details)
\begin{alignat*}{2}
    &2 \delta_g^* \Upsilon(g)_{\overline{\mu\nu}} &&= \tau^2(g)^{\gamma\rho}\left(\del_\mu\del_\gamma {\tilde{g}}_{\nu\rho} + \del_\nu\del_\gamma {\tilde{g}}_{\mu\rho} - \del_\mu\del_\nu {\tilde{g}}_{\gamma\rho}\right) + \text{l.o.t.} \\
    &2\Ric(g)_{\overline{\mu\nu}} &&= \tau^2(g)^{\gamma\rho}(\del_\mu\del_\gamma {\tilde{g}}_{\nu\rho}  + \del_\nu\del_\gamma {\tilde{g}}_{\mu\rho} - \del_\mu\del_\nu {\tilde{g}}_{\rho\gamma}  - \del_\gamma\del_\rho {\tilde{g}}_{\mu\nu}) + \text{l.o.t.}
\end{alignat*}
Combining this, we see that the second derivatives in $\tilde{g}$ in $P_{DT}$ indeed cancel, except for the ones in $\Box_g \Tilde{g}$. \\
In conclusion, we may apply \cref{prop:UniquenessAndExistenceQuasilinearWave} to the system of quasilinear wave equations
\[
\begin{pmatrix}
    P_{DT} \\
    P_L
\end{pmatrix}(g;A) = 0.
\]
This proves \cref{prop:solutionToGaugeFixedEq}.

It remains to show that the solutions $g,A$ satisfy the Einstein--Maxwell system. Apply $\delta_g$ to the equation $P_L(g;A) = 0$. Because $\delta_g^2=0$, the quantity $\Upsilon^M(g;A)$ itself solves
\begin{equation}
    \delta_g \rm{d}\Upsilon^M(g;A) = 0.
\end{equation}
It is easy to see by the previous methods that this is a quasilinear, in fact, linear wave equation in the form of \cref{prop:UniquenessAndExistenceQuasilinearWave} for $\Upsilon^M(g;A)$.
Therefore, as long as $\Upsilon(g;A) \in \tau^\infty\mc{C}^\infty$, the uniqueness part yields $\Upsilon^M(g;A) = 0$. But this is the case, since $A - A_{(\infty)} \in \tau^\lambda \mc{C}^\infty$ for all $\lambda$ implies
\begin{equation}\label{eq:UpsilonMCinfty}
    \begin{split}
    \Upsilon^M(g;A) &\equiv \Upsilon^M(g; A_{(\infty)}) \mod \tau^{2\lambda}\mc{C}^\infty
    \\ &= 0.
    \end{split}
\end{equation}
In conclusion Maxwell's equations are satisfied. By the same calculations as in \cref{lemma:OrderOfdeltaT}, we see that this also implies that $\delta_gG_g T(g,A) = 0$. Applying $\delta_gG_g$ to $P_{DT}$ hence yields, by the second Bianchi identity,
\begin{equation}
    \delta_g G_g \delta_g^* \Upsilon(g) = 0.
\end{equation}
This can again be seen to be a quasilinear wave equation for $\Upsilon(g)$. But as $\Upsilon(g) \in \tau^\infty\mc{C}^\infty$ by a calculation analogous to \eqref{eq:UpsilonMCinfty} another application of \cref{prop:UniquenessAndExistenceQuasilinearWave} shows $\Upsilon(g) = 0$. Thus Einstein's field equations are satisfied too. This concludes the construction.

   \section{Additional remarks}
    In the source material, the domain of existence for small masses and charges, the necessity of the balance conditions \cite[\S 3.4]{blackholegluing}, and gluing with non-compact spatial topology was discussed \cite[\S 3.5]{blackholegluing}. These same considerations apply here without many changes, as we shall now briefly discuss. \\
    \begin{remark}[Domain of existence for small masses and charges]\label{remark:DomainOfExistence}
        One may consider the maximal globally hyperbolic development of the constructed solutions $g,A$. It was shown in \cite[\S 3.3]{blackholegluing} that for glued Schwarzschild--de Sitter black holes with very small masses, Cauchy stability implies that the cosmological horizons of at least two black holes must intersect non-trivially. The same holds here. Namely, we may consider, for $p_i,\mf{m}_i,Q_i$ satisfying the mass and charge balance conditions, the charged black hole metrics with parameters $p_i, \lambda \mf{m}_i, \lambda Q_i$, where $\lambda > 0$. Similar arguments as for Schwarzschild--de Sitter black holes show that the formal solutions $g_{(\infty)},A_{(\infty)}$ may be chosen such that $g_{(\infty)} - g_{(3),\lambda} \in \lambda \tau^3\mc{C}^\infty$ and $A_{(\infty)} - A_{(2),\lambda} \in \lambda \tau^2\mc{C}^\infty$, where $g_{(3),\lambda}, A_{(2),\lambda}$ are the naively glued metric and potential as in \eqref{eq:DefinitionMetricg_(3)}, \eqref{eq:DefinitionPotentialA_(2)}. \\
        As these formal solutions will converge to the de Sitter metric, respectively, the (zero) de Sitter potential when $\lambda \rightarrow 0$, the same geometrical considerations as in \cite{blackholegluing} work to show that the cosmological horizons must intersect.
    \end{remark}

    To discuss the necessity of the balance condition, we recall, for $\alpha \in \R$, the functions \emph{conormal} relative to $\tau^\alpha L^\infty(M)$. This is the space
    \[
    \mc{A}^\alpha \coloneqq \{u \in \mc{C}^\infty(M^\circ): P(\tau^{-\alpha} u) \in L^\infty(M) \ \forall \ P \in \Diff_b(M)\},
    \]
    where $\Diff_b(M)$ is the space of $b$-differential operators on $M$. These are finite linear combinations of compositions of $b$ vector fields, which, in turn, are sections of the vector bundle with local frames given by $\tau\del_\tau, \del_{x^i}$. \\
    The space $\mc{A}^\alpha$ is a $\mc{C}^\infty(M)$ module, and we have for $f \in \mc{A}^\alpha, g \in \mc{A}^\beta$ that $fg \in \mc{A}^{\alpha +\beta}$. Therefore, \cref{lemma:StabilityOfLinearization} and \cref{lemma:DifferentialOperatorsUpToLinearization} hold with all instances of $\tau^k\mc{C}^\infty$ replaced by $\mc{A}^k$. \\
    Note also that $\tau^\alpha (\log \tau)^l \in \mc{A}^{\alpha-\epsilon}$ for all $\epsilon >0$ and $l>0$.
    
    \begin{theorem}[Necessity of the balance conditions, {\cite[Theorem 3.4]{blackholegluing}}]\label{theorem:NecessityOfBalanceConditions}
        Let $(p_i , \mf{m}_i, Q_i) \in \sph^3 \times \R \times \R$ (with pairwise distinct $p_i$) and assume $g,A$ satisfy (1) - (4) of \cref{theorem:BlackHoleGluing}. If, for some $\epsilon > 0$, we have 
        \begin{align*}
        g-g_{dS} &\in \tau^3(\log \tau)\mc{C}^\infty + \tau^3\mc{C}^\infty + \mc{A}^{3+\epsilon}(M; S^2 \nT^*M) \\
        A &\in \tau^2(\log \tau)\mc{C}^\infty + \tau^2\mc{C}^\infty + \mc{A}^{2+\epsilon}(M; \nT^*M)
        \end{align*}
        then the $(p_i , \mf{m}_i, Q_i)$ satisfy the charge and mass balance conditions.
    \end{theorem}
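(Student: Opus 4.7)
The plan is to reverse-engineer the obstructions from the gluing construction: given that $P(g,A)=0$ and $\deld(g,A)=0$, the leading-order boundary data of $g$ and $A$ satisfy exactly the underdetermined elliptic PDEs whose solvability in the construction required the balance conditions.

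The first step is to translate the regularity hypothesis into polyhomogeneous expansions on $\del M \setminus \{p_1,\dots,p_N\}$:
\[
g = g_{dS} + \tau^3 g_3 + \tau^3\log\tau\, g_{3,\log} + r_g, \qquad A = \tau^2 A_2 + \tau^2\log\tau\, A_{2,\log} + r_A,
\]
with $r_g \in \mc{A}^{3+\epsilon}$, $r_A \in \mc{A}^{2+\epsilon}$. Property~(2) of \cref{theorem:BlackHoleGluing} forces $(g,A)=(g_{p_i,\mf{m}_i,Q_i}, A_{p_i,Q_i})$ near each $p_i$; since the RNdS expansion contains no $\log$ terms, $g_{3,\log}$ and $A_{2,\log}$ vanish near each $p_i$, while $g_3$ and $A_2$ acquire Coulomb-type singularities of strengths $\mf{m}_i$ and $Q_i$ after converting $\tau_s$ to the global boundary defining function $\tau$ via \cref{remark:MetricsAtConformalBoundary} and the stereographic projection \cref{eq:StereographicProjection}.

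The next step is to substitute these expansions into $P=0$, $\deld=0$ and read off indicial coefficients using \cref{eq:linearizationP} and \cref{eq:linearizationdeltad}. A careful bookkeeping, in which both pure-power and $\log\tau$ coefficients contribute (the latter also through $\lambda$-derivatives of the indicial families acting on $\tau^\lambda\log\tau$), produces two clean identities on $\sph^3\setminus\{p_1,\dots,p_N\}$:
\[
\delta_h (A_2)_T = 0, \qquad \delta_h (g_3)_{TT0} = 0.
\]
The first follows from the $\tau^3\log\tau$ and $\tau^3$ normal components of $\deld$. The second requires the $\tau^3\log\tau$ NN equation (yielding $(g_{3,\log})_{NN}=3(g_{3,\log})_{TT1}$) and the $\tau^3$ NN equation (coupling $(g_3)_{NN}-3(g_3)_{TT1}$ to $(g_{3,\log})_{TT1}$); combined with the $\tau^4\log\tau$ NT equation ($\delta_h(g_{3,\log})_{TT0}=0$), the $g_{3,\log}$ contributions in the $\tau^4$ NT equation cancel exactly, leaving only $\delta_h(g_3)_{TT0}=0$.

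With these two divergence equations in hand, Stokes' theorem on $\sph^3\setminus\bigsqcup_{i=1}^N B_\varepsilon(p_i)$ concludes the proof. For charge balance, $\star_h (A_2)_T$ is a closed $2$-form on $\sph^3\setminus\{p_i\}$ whose Coulomb residue at each $p_i$ gives $\int_{\del B_\varepsilon(p_i)} \star_h(A_2)_T \to c\,Q_i$ for some universal $c\neq 0$; Stokes then forces $\sum_i Q_i=0$. For mass balance, pair the divergence equation with a conformal Killing $1$-form $V_q$ on $\sph^3$ corresponding to $q\in\R^4$: since $(g_3)_{TT0}$ is traceless and $\delta_h^* V_q$ is pure trace, the bulk integral vanishes and only boundary terms around the $p_i$ survive. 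Using the explicit form of $\gamma_3$ in \cref{eq:gamma3andgamma4} and the identification of CKVs of $\sph^3$ with $\R^4$, these boundary terms evaluate in the limit to a constant multiple of $\langle q, \sum_i \mf{m}_i p_i\rangle_{\R^4}$, exactly recovering the pairing of \cref{eq:MassBalanceCondtionPrelim}. Since $q\in\R^4$ is arbitrary, $\sum_i \mf{m}_i p_i = 0$.

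The principal technical difficulty is the log-coefficient bookkeeping: one must track contributions from $g_{3,\log}$ and $A_{2,\log}$ not only via the indicial families $I(\cdot,\lambda)$ but also through the $\lambda$-derivative terms that appear in the expansion of $L(\tau^\lambda\log\tau\,\cdot)$ and $K(\tau^\lambda\log\tau\,\cdot)$, and verify the cancellations claimed above in precisely the slots used for the Stokes arguments. The hypothesis of at most a single power of $\log\tau$ (no $\log^2$) with smooth Taylor coefficients is essential for these cancellations to close, and the $\mc{A}^{k+\epsilon}$ remainders, being strictly below the indicial orders in question, contribute only $o(1)$ to the Stokes limit and so do not affect the identification of the residues with $Q_i$ and $\mf{m}_i$.
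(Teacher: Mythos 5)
Your overall strategy---read off the indicial equations of $P=0$ and $\deld=0$ order by order, extract divergence-type constraints on the boundary coefficients, and pair against the cokernel (constants for the charge, conformal Killing fields for the mass)---is the same as the paper's. The organizational difference is that the paper subtracts the naively glued $g_{(3)},A_{(2)}$ rather than $g_{dS}$, so that the correction coefficients are globally smooth on $\sph^3$ and the entire obstruction sits in the precomputed error terms $\rm{Err}_{\deld}$, $\rm{Err}_{P_0}$, whose integrals were already identified with $\sum_iQ_i$ and $\sum_i\mf{m}_ip_i$ in \cref{lemma:chargebalancecondition} and its mass analogue; your version keeps coefficients that are singular at the $p_i$ and recovers the same quantities as fluxes through small spheres. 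The two are equivalent by Stokes' theorem, but your route obliges you to actually compute the Coulomb residue of $\star_h(A_2)_T$ and the conformal-Killing boundary pairing of $(g_3)_{TT0}$ at each $p_i$ in the global boundary defining function and round metric, which you assert rather than carry out. (Your log bookkeeping, on the other hand, does close: the $\tau^3$ NN/TT1 equations give $(g_3)_{NN}-3(g_3)_{TT1}=(g_{3,\log})_{TT1}$, and the resulting $2\rm{d}_X(g_{3,\log})_{TT1}$ term in the $\tau^4$ NT slot cancels against the $\lambda$-derivative contribution of $g_{3,\log}$, so $\delta_h(g_3)_{TT0}=0$ does follow---but this deserves to be displayed, since $\int_{\sph^3}V(\rm{d}_Xf)\,|\rm{d}h|$ does not vanish for a general conformal Killing field $V$ and these terms would otherwise contaminate the Stokes pairing.)

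The genuine gap is in the mass part: you substitute into $P=0$ using only the linearization $I(L_{0,g_{dS},0},\cdot)$ from \eqref{eq:linearizationP}. But $P$ is quadratic in $A$ through the energy--momentum tensor, and by \cref{lemma:DifferentialOperatorsUpToLinearization} the quadratic remainder is only $\mc{O}(\tau^{2\min\{3,2\}})=\mc{O}(\tau^4)$: the term $-4T(g_{dS},A)$, built from the square of the $\tau^2$ coefficient of $\rm{d}A$, contributes at exactly the order where the mass obstruction lives. Your argument as written would apply verbatim to the vacuum Einstein equations and never engages with this coupling, which is precisely the new feature of the charged case. To close the gap you must (i) split $P=P_0-4T$ and linearize only $P_0$, since the quadratic-error lemma applied to the full $P$ at $(g_{dS},0)$ is too weak at order $\tau^4$, and (ii) verify that $(T(g_{dS},A)_4)_{NT}=0$, which the paper does by noting that the $\tau^2$ part of $\rm{d}A$ is purely normal--tangential (\cref{lemma:splittingdeltagd}) so that its square has no NT component, together with the observation that the $\log\tau$ and $\log^2\tau$ multiples of $\tau^4$ generated by $T$ likewise have vanishing NT part. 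Only then does the $\tau^4$ NT equation reduce to the divergence identity on which your Stokes argument rests. The charge part is unaffected, since $\deld$ is linear in $A$ and its quadratic remainder is $\mc{O}(\tau^5)$.
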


    \begin{proof}
        Let $g_{(3)}$ and $A_{(2)}$ be the naively glued metrics as in \eqref{eq:DefinitionMetricg_(3)} and \eqref{eq:DefinitionPotentialA_(2)}. Then the assumption gives us 
        \begin{align*}
            g - g_{(3)} &= \tau^3(\log \tau)g_l + \tau^3 g_3 + \Tilde{g} \\
            A - A_{(2)} &= \tau^2(\log \tau)A_l + \tau^2 A_2 + \Tilde{A}
        \end{align*}
        for some $g_l, g_3 \in \mc{C}^\infty(\del M; S^2 \nT^*M)$, $\Tilde{g}\in \mc{A}^{3 + \epsilon}(M; S^2 \nT^*M)$,
        $A_l, A_2 \in \mc{C}^\infty(\del M; \nT^*M)$ and $\Tilde{A}\in \mc{A}^{2 + \epsilon}(M; \nT^*M)$. \\
        We first show the necessity of the charge balance condition.
        By \cref{lemma:StabilityOfLinearization}, \cref{lemma:DifferentialOperatorsUpToLinearization} and because $\deld(g,A) = 0$, we have
            \begin{multline}\label{eq:ErrorDeltaDNecessity}
            K_{0,g_{dS},0}(0,\tau^2(\log \tau)A_l + \tau^2 A_2 + \Tilde{A}) + \tau^3\rm{Err}_{\deld} \\ = K_{0,g_{dS},0}(g-g_{(3)}, A-A_{(2)}) + \tau^3\rm{Err}_{\deld} \in \mc{A}^{4- \delta}
            \end{multline}
        for all $\delta > 0$. Here we have again used that $K_{0,g_{dS},0} $ vanishes when evaluated on symmetric $2$-tensors. \\
        Now, because $I(K_{0,g_{dS},0}, \lambda)_N = 0$, we may write $(K_{0,g_{dS},0})_N = \tau K'$, where $K' \in \Diff_b(M)$. This shows that $(K_{0,g_{dS},0} \Tilde{A})_N \in \mc{A}^{3+\epsilon}$.
        The order $\tau^3$ part of the normal component of \eqref{eq:ErrorDeltaDNecessity} is therefore
        \begin{align*}
            0 &= (\rm{Err}_{\deld})_N + (\del_\lambda I(K_{0,g_{dS},0}[\tau], \lambda)\vert_{\lambda = 2}A_l)_N + (I(K_{0,g_{dS},0}[\tau], 2)A_{(2)})_N \\
            &=(\rm{Err}_{\deld})_N -\delta_h (A_l)_N -\delta_h (A_{(2)})_N
        \end{align*}
        The two last terms of this vanish when integrating, so we are once again led to the charge balance condition. \\
        As in the proof of \cref{prop:CorrectionOfObstructionError}, we write $P = P_0 - 4T$, so we treat the $T$ terms separately. The reason for this is that because $A-A_{(2)} \in \tau^2\mc{C}^\infty$, $P$ only agrees with its linearization up to order $\tau^4$. We get
        \begin{align*}
            0 \equiv L_{0,g_{dS}}(\tau^3(\log \tau)g_l + \tau^3(\log \tau)g_3 + \Tilde{g}) + \tau^4 \rm{Err}_{P_0} - 4T(g,A) \mod \mc{A}^{6-\delta} \\
            \equiv L_{0,g_{dS}}(\tau^3(\log \tau)g_l + \tau^3(\log \tau)g_3 + \Tilde{g}) + \tau^4 \rm{Err}_{P_0} - 4T(g_{dS},A) \mod \mc{A}^{5}
        \end{align*}
        for all $\delta > 0$. Now, the $-4T(g,A)$ term is made up of terms in $\tau^4(\log \tau) \mc{C}^\infty$, $\tau^4(\log^2 \tau) \mc{C}^\infty$, $\mc{A}^{4+\epsilon}$ et cetera, coming from the fact that $T$ is quadratic in $A$. As in the part after \cref{lemma:CalculationErrP0s}, the normal-tangential part of the order $\tau^4$ component of this vanishes, so the $T$ terms do not contribute there. For the rest of the proof, one may therefore follow \cite[\S 3.5]{blackholegluing} ad verbatim.
    \end{proof}

    \begin{remark}[Gluing with noncompact spatial topology]\label{remark:NoncompactTopology}
        The two balance conditions are a topological artefact of the conformal boundary of de Sitter space. That is, the requirement for the vanishing of the integral of the error to $\deld$ in \eqref{eq:chargeconditionOne} and the orthogonality to the cokernel of the error to $P$ in \eqref{eq:masscondition} stem from the conformal boundary $\sph^3$ being compact. 
        If we instead use the upper half-space coordinates $M_u \coloneqq [0,\infty)_{\tau'} \times \R^3_x$ from \eqref{deSitter:UpperHalfSpace}, the conformal boundary is $\R^3$ with metric $\frac{3}{\Lambda} dx^2$. The point $-p_0 = (-1,0,0,0)$, or some point $p_\infty$ after pulling back along a rotation $R \in \mc{SO}(4)$, is not covered. Then a correction to the potential $A$ for $\deld$ \emph{always} exists by the vanishing of the $n$-th cohomology group of noncompact connected oriented manifolds (\cite[Theorem 17.32]{LeeManifolds}). Indeed, by this vanishing, we may choose an $(n-1)$-form $\omega$ to cancel the error as before. While this $\omega$ does not necessarily vanish near the $\overline{V_{p_i}}$, we can make sure it does by adding further exterior derivatives of $(n-2)$-forms (supported on a small neighborhood of $\overline{V_{p_i}}$) to cancel it there. The existence of these further corrections in turn follows from the Poincaré lemma with compact support \cite[Lemma 17.27]{LeeManifolds}. Of course, the resulting potential correction may be singular at infinity, i.e. at the point $p_\infty$. 
        Similarly, as argued in \cite[\S 3.5]{blackholegluing}, the correction to the metric will always exist as well if one allows such behavior at infinity. In summary:
    \end{remark}
    \begin{theorem}\label{theorem:BlackHoleGluingNonCompact}
        Let $N \in \N$ and let $p_i \in \R^3, \mf{m}_i,Q_i \in \R$ for $1 \leq i \leq N$. Let $V_{p_i} \subset \sph^3 = \del M$ be a neighborhood of $p_i$ with $p_i$ removed and assume that $\overline{V_{p_i}} \cap \overline{V_{p_j}} = 0$ for all $i \neq j$. Then there exists a neighborhood $U$ of $\del M_u \setminus \{p_1, \dots, p_N\}$, a Lorentzian $0$-metric $g \in \mc{C}^\infty(U;S^2 \nT_U^*M_u)$ and a vector potential $A \in \mc{C}^\infty(U;\nT^*M_u)$ with the following properties:
        \begin{enumerate}
            \item $g$ and $A$ satisfy the Einstein--Maxwell equations $\Ric(g) - \Lambda g = 2T$, $\delta_g \rm{d} A = 0$,
            \item near $V_{p_i}$ we have $g = g_{p_i,\mf{m_i},Q_i}$, $A = A_{p_i,Q_i}$,
            \item $g-g_{dS} \in \tau^3\mc{C}^\infty(U; S^2 \nT_U^*M_u)$,
            \item $A \in \tau^2 \mc{C}^\infty(U; \nT_U^*M_u)$.
        \end{enumerate}
    \end{theorem}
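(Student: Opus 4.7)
The plan is to mirror the three-step proof of \cref{theorem:BlackHoleGluing}: naive gluing via partition of unity, leading-order correction, and then the full iterative Taylor-series construction followed by the gauge-fixed quasilinear wave solve. The only substantive change is in the leading-order step, where the balance-condition obstructions that arose on the compact boundary $\sph^3$ must be cleared by a different mechanism on the noncompact conformal boundary $\R^3$ of $M_u$.

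Choose cutoff functions $\chi_i$ with $\chi_i \equiv 1$ near $\overline{V_{p_i}}$ and mutually disjoint supports, and define $g_{(3)}, A_{(2)}$ by \eqref{eq:DefinitionMetricg_(3)} and \eqref{eq:DefinitionPotentialA_(2)}, using $\tau'$ as the boundary defining function. Since \cref{lemma:ErrorDeltaD} and \cref{lemma:CalculationErrP0s} are local computations at the conformal boundary, they give $\deld(g_{(3)},A_{(2)}) \in \tau^3 \mc{C}^\infty$ with the same expression for $(\rm{Err}_{\deld})_N$ and $P(g_{(3)},A_{(2)}) \in \tau^4 \mc{C}^\infty$ with the analogous error, both compactly supported away from the $p_i$. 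For the $\deld$-correction we seek $\tilde\alpha$ supported away from the $\overline{V_{p_i}}$ solving $-\delta_h \tilde\alpha = -(\rm{Err}_{\deld})_N$, equivalently a compactly supported $2$-form $\omega = \star_h \tilde\alpha$ on $\R^3$ with $\dd\omega = -(\rm{Err}_{\deld})_N\,\dd h$. Because $H^3(\R^3) = 0$ as the top cohomology of a noncompact connected oriented $3$-manifold vanishes (\cite[Theorem 17.32]{LeeManifolds}), such an $\omega$ exists with no integral constraint. A priori $\omega$ need not vanish near the $\overline{V_{p_i}}$, but $\dd\omega = 0$ there; shrinking the support via the Poincar\'e lemma with compact support (\cite[Lemma 17.27]{LeeManifolds}) applied in contractible neighborhoods of the $\overline{V_{p_i}}$, we subtract local exact forms to make $\omega$ vanish near each $\overline{V_{p_i}}$ at the price of allowing singular behavior at $p_\infty$. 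Setting $A_{(3)} \coloneqq A_{(2)} + \tau^2 \tilde\alpha$ as in the proof of \cref{prop:CorrectionOfObstructionError} cancels the order $\tau^3$ error, and the identical argument of that proof shows $\deld(g_{(3)}, A_{(3)}) \in \tau^4 \mc{C}^\infty$.

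For the metric correction we proceed as in the $P$-part of \cref{prop:CorrectionOfObstructionError}: the $T$-type contribution to $(\rm{Err}_P)_4$ lies in the image of $I(L_{0,g_{dS},0},4)$ directly; the normal-tangential Schwarzschild-type piece must be cleared by a traceless $k$ on the boundary solving an underdetermined divergence equation $\delta_h k = f$. On the noncompact $\R^3$ with its flat conformal metric there is no conformal-Killing-field cokernel obstruction, because one may use Delay's construction \cite{Delay2012} on a connected open set $\Omega$ disjoint from the $\overline{V_{p_i}}$ (now possibly containing $p_\infty$ in its closure): the same exterior-derivative / compact-support cohomology manipulations described above, applied to the underdetermined elliptic system as in \cite[\S 3.5]{blackholegluing}, produce a $k$ with support disjoint from the $\overline{V_{p_i}}$ and possible singularities only at $p_\infty$. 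The resulting $g_{(4)} \coloneqq g_{(3)} + \tau^3 k + \tau^4 k'$ then satisfies $P(g_{(4)}, A_{(3)}) \in \tau^5 \mc{C}^\infty$ and $\deld(g_{(4)}, A_{(3)}) \in \tau^4 \mc{C}^\infty$, as in \cref{prop:CorrectionOfObstructionError}.

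From here, \cref{prop:correctionDeltaD}, \cref{prop:correctionP} and \cref{prop:CorrectionToInfiniteOrder} carry over verbatim to produce $g_{(\infty)}, A_{(\infty)}$ on a neighborhood of $\del M_u \setminus \{p_1,\dots,p_N, p_\infty\}$: the exactness of the indicial sequences \eqref{eq:sequenceK_0}, \eqref{eq:sequenceL_0} is a pointwise algebraic fact, and Borel summation works as before. The final quasilinear wave solve via \cref{prop:UniquenessAndExistenceQuasilinearWave} then produces the true solution $(g,A)$ on a neighborhood $U$ of $\del M_u \setminus \{p_1,\dots, p_N\}$ (tacitly permitting $U$ to exclude $p_\infty$, where the construction may be singular). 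The main obstacle is the careful bookkeeping that the compact-support Poincar\'e lemma corrections can be arranged to leave the metric and potential untouched in punctured neighborhoods of each $p_i$ while keeping the singularity concentrated at the single point $p_\infty$; this is precisely the topological content that replaces the balance conditions and is handled exactly as in \cite[\S 3.5]{blackholegluing}.
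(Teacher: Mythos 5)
Your proposal is correct and follows essentially the same route as the paper: the paper's argument for this theorem is exactly the content of \cref{remark:NoncompactTopology} (vanishing of the top de Rham cohomology of the noncompact boundary $\R^3$ to kill the integral obstruction for the potential, the compactly supported Poincar\'e lemma to localize the correction away from the $\overline{V_{p_i}}$ at the cost of singular behaviour at $p_\infty$, and a deferral to \cite[\S 3.5]{blackholegluing} for the analogous metric correction), followed by the unchanged iterative and gauge-fixed steps. The only difference is presentational: you spell out how each step of the proof of \cref{theorem:BlackHoleGluing} is re-run, whereas the paper leaves that implicit.
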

   \section{Gluing rotating charged black holes}\label{sec:RotatingChargedBlackHoles}
Rotating charged black holes are modeled by the Kerr-Newman-de Sitter (KNdS) metric and electromagnetic potential. In Boyer-Lindquist coordinates (see for example \cite[\S 3.2]{HintzKerrNewmanDeSitter}), the metric for a given mass $\mf{m}$, charge $Q$, and angular momentum $\mathbf{a}$ is given by
\begin{equation}\label{eq:KNdSmetricBL}
    g_{\mf{m},Q,\mathbf{a}} = - \frac{\Delta_r}{\rho^2} \Big( \rm{d}{t_0} - \frac{\mathbf{a} \sin^2 \theta_0}{\Delta_0} \rm{d}{\phi_0}\Big)^2 + \frac{\rho^2}{\Delta_r} \dl[2]{r_0} + \frac{\rho^2}{\Delta_\theta} \dl[2]{\theta_0} + \sin^2 \theta_0 \frac{\Delta_\theta}{\rho^2}\Big(\mathbf{a} \rm{d}{t_0} - \frac{r_0^2 + \mathbf{a}^2}{\Delta_0}\rm{d}{\phi_0}\Big)^2,
\end{equation}
where, for $\lambda = \sqrt{\Lambda/3}a$,
\begin{align*}
    &\Delta_0 = 1 + \lambda^2,& &\rho^2 = r_0^2  + \mathbf{a}^2\cos^2 \theta_0, \\
    &\Delta_r = (r_0^2 + \mathbf{a}^2)\Big( 1- \frac{\Lambda r_0^2}{3}\Big) - 2 \mf{m}r_0 + \Delta_0 Q^2,& &\Delta_\theta = 1 + \lambda^2 \cos^2 \theta_0.
\end{align*}
The electromagnetic potential is
\begin{gather}\label{KNdSpotentialBL}
    A_{Q, \mathbf{a}} = -\frac{Q r_0}{\rho^2}(\rm{d}{t_0} - \mathbf{a} \sin^2 \theta_0 \rm{d}{\phi_0}),
\end{gather}
As in \cite[Appendix B]{Schlue2015Mar} (or again \cite[\S 4.1]{blackholegluing}) one may take comoving coordinates
\begin{align*}
    &t = t_0,& &\phi = \phi_0 - \frac{\Lambda}{3} a t_0, \\
    &r^2 = \frac{1}{\Delta_0}(r_0^2 \Delta_\theta + \mathbf{a}^2\sin^2 \theta_0),& &r \cos\theta = r_0 \cos \theta_0,
\end{align*}
to view the Kerr-Newman-de Sitter spacetime as a perturbation of de Sitter space. Indeed, transforming the de Sitter metric in the upper half space coordinates \ref{deSitter:UpperHalfSpaceMetric} according to these rules it takes the form of \ref{eq:KNdSmetricBL} with $\mf{m} = Q = 0$, so
\begin{align*}
    g_{\mf{m},Q,\mathbf{a}} =  g_{dS} + c_{\mf{m}, Q, \mathbf{a}},
\end{align*}
where
\begin{equation}\label{eq:cmqa}
    c_{\mf{m}, Q, \mathbf{a}} = \frac{2\mf{m}r_0 - \Delta_0 Q^2}{\rho^2}\Big(\rm{d}{t_0} - \frac{\mathbf{a}\sin^2 \theta_0}{\Delta_0}\rm{d}{\phi_0}\Big)^2 + \frac{(2\mf{m}r_0 - \Delta_0 Q^2)\rho^2}{\Delta_r\vert_{\mf{m}= 0, Q=0} \Delta_r}\dl[2]{r_0}
\end{equation}

Transforming the KNdS metric and potential back to the upper half space coordinates $(\tau,x)$ \eqref{deSitter:UpperHalfSpace} in turn, one sees as before that the metric is defined on a neighborhood of $\sph^3 \setminus \{p_0,-p_0\} \subset \R^4$, where $p_0 = (1,0,0,0)$ is the north pole. The rotation is a rotation around the axis $\mathbf{a}_0 = (0,0,0,1)$ in $p_0^\bot$, i.e. along the 3-spheres' Killing vector field
\[
\mathfrak{a}_0 = \begin{pmatrix}
    0 & 0 & 0 & 0 \\
    0 & 0 & \mathbf{a} & 0 \\
    0 & -\mathbf{a} & 0  & 0 \\
    0 & 0 & 0 & 0
\end{pmatrix} \in \mathfrak{so}_4.
\]
As discussed in \cite[\S 4.1]{blackholegluing}, one obtains KNdS black holes on other points $p_i$, rotating in $p_i^\bot$ around other axes $\mathbf{a}_i$ (i.e. vector fields $\mathfrak{a}_i \in \mathfrak{so}_4$ with $\mathfrak{a}_ip_i = 0$) by pulling back along a rotation $R$ with $Rp_i = p_0$ and $R\mathbf{a}_i = \mathbf{a}_0$. We call so obtained metrics $g_{p_i, \mathfrak{m}_i, Q_i, \mathfrak{a}_i}$ and potentials $A_{p_i,, Q_i, \mathfrak{a}_i}$. Recall also the inner product defined on $\mathfrak{so}_4$ via
\[
\left<\mathfrak{a}_1,\mathfrak{a}_2 \right> = \sum_{i < j} (\mathfrak{a}_1)_{ij}(\mathfrak{a}_2)_{ij}.
\]
We show

\begin{theorem}\label{theorem:blackholegluingKNdS}
Let $N \in \N$ and let $p_i \in \sph^3, \mf{m}_i,Q_i \in \R, \mathfrak{a}_i \in \mathfrak{so}_4$ for $1 \leq i \leq N$ be such that $\mathfrak{a}_i p_i = 0$. Let $\lambda_i = \sqrt{\Lambda/3}|\mf{a}_i|$ and define the \emph{effective charge and mass} as
\begin{equation}
\begin{split}
    Q_{\mathrm{eff},i} &= Q_i(1 - \lambda_i \arctan \lambda_i), \\
    m_{\mathrm{eff},i} &= \frac{m_i}{1 + \lambda_i^2}.
\end{split}
\end{equation}
Assume that they satisfy the charge balance condition
\begin{equation}
    \sum_{i=1}^N Q_{\mathrm{eff},i} = 0
\end{equation}
and the mass and rotation balance conditions
\begin{align}
    &\sum_{i=1}^N m_{\mathrm{eff},i}p_i = 0 \label{eq:massbalanceKNdS}\\
    &\sum_{i=1}^N  m_{\mathrm{eff},i}\mathfrak{a}_i = 0 \label{eq:rotationbalanceKNdS}
\end{align}
Let $V_{p_i} \subset \sph^3 = \del M$ be a neighborhood of $p_i$ with $p_i$ removed and assume that $\overline{V_{p_i}} \cap \overline{V_{p_j}} = 0$ for all $i \neq j$. Then there exists a neighborhood $U$ of $\del M \setminus \{p_1, \dots, p_N\}$, a Lorentzian $0$-metric $g \in \mc{C}^\infty(U;S^2 \nT_U^*M)$, and a vector potential $A \in \mc{C}^\infty(U;\nT^*M)$ with the following properties:
        \begin{enumerate}
            \item $g$ and $A$ satisfy the Einstein-Maxwell equations $\Ric(g) - \Lambda g = 2T$, $\delta_g \rm{d}A = 0$,
            \item near $V_{p_i}$ we have $g = g_{p_i,\mf{m_i},Q_i, \mathfrak{a}_i}$, $A = A_{p_i,Q_i, \mathfrak{a}_i}$,
            \item $g-g_{dS} \in \tau^3\mc{C}^\infty(U; S^2 \nT_U^*M)$,
            \item $A \in \tau^2 \mc{C}^\infty(U; \nT_U^*M)$
        \end{enumerate}
\end{theorem}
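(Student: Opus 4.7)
The plan is to replay the entire three-step strategy used for Reissner--Nordström--de Sitter gluing -- naive gluing, formal solution via the exactness argument of \cref{lemma:SequencesAreExact}, and solution of the nonlinear system via DeTurck and Lorenz gauges. The only genuinely new work is at the level of the leading-order expansion at the conformal boundary: we must compute how the rotation parameter modifies the coefficients $\gamma_3,\gamma_4,\alpha_2$ from \eqref{eq:gamma3andgamma4} and, correspondingly, the solvability conditions for the error correction.

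First, I would change from Boyer--Lindquist to the comoving coordinates $(t,r,\theta,\phi)$ defined after \eqref{KNdSpotentialBL}, then pass to upper-half-space coordinates \eqref{deSitter:UpperHalfSpace} and finally to $\tau_s=1/r$. A direct inspection of the correction tensor $c_{\mf{m},Q,\mathbf{a}}$ in \eqref{eq:cmqa} shows that $c_{\mf{m},Q,\mathbf{a}} \in \tau_s^3\mc{C}^\infty$ and $A_{Q,\mathbf{a}} \in \tau_s^2\mc{C}^\infty$, as in \eqref{eq:gdS-gRNdSistau3}, \eqref{eq:AdS-A_Qistau2}. Writing
\[
g_{\mf{m},Q,\mathbf{a}} - g_{dS} = \tau_s^3 \gamma_3^{\mathbf{a}} + \tau_s^4 \gamma_4^{\mathbf{a}} + \mc{O}(\tau_s^5),\qquad A_{Q,\mathbf{a}} = \tau_s^2 \alpha_2^{\mathbf{a}} + \mc{O}(\tau_s^3),
\]
the calculation should yield, modulo pieces that lie in the image of the indicial families and hence do not contribute to any obstruction, that $\gamma_3^{\mathbf{a}}$ equals the Schwarzschild--de Sitter coefficient of \eqref{eq:gamma3andgamma4} with $\mf{m}$ replaced by $m_{\mathrm{eff}} = \mf{m}/(1+\lambda^2)$, plus a piece driven by the rotation axis $\mathfrak{a}$; similarly $\alpha_2^{\mathbf{a}}$ reduces to the RNdS potential coefficient with $Q$ replaced by $Q_{\mathrm{eff}} = Q(1-\lambda\arctan\lambda)$. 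The factor $1-\lambda\arctan\lambda$ arises naturally from integrating the polar angular dependence of the potential $A_{Q,\mathbf{a}}$ after the comoving change of coordinates, while $1+\lambda^2$ is the Jacobian factor $\Delta_0$ appearing in the definition of $r$.

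With these expansions in hand, the obstruction analysis of \cref{lemma:chargebalancecondition} and \cref{lemma:CalculationErrP0s} goes through verbatim. The normal part of the $\tau^3$ error to $\deld$ integrates to $-(\Lambda/3)\mathrm{vol}(\sph^2)\sum_i Q_{\mathrm{eff},i}$, giving the charge balance condition. The normal-tangential part of the $\tau^4$ error to $P$ splits into two pieces: the pure mass piece, which pairs with the conformal Killing vector fields on $\sph^3$ coming from translations in $\R^4$ to give $\sum_i m_{\mathrm{eff},i}p_i = 0$; and a rotation piece, which pairs with the genuine Killing vector fields of $\sph^3$ (i.e.\ $\mathfrak{so}_4$) to give $\sum_i m_{\mathrm{eff},i}\mathfrak{a}_i = 0$ via the inner product on $\mathfrak{so}_4$ defined just before the theorem. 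The computation of these pairings mirrors the argument leading to \eqref{eq:MassBalanceCondtionPrelim}, now with both families of conformal Killing vector fields contributing independent constraints.

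Once the balance conditions \eqref{eq:ChargeBalanceCondition}, \eqref{eq:massbalanceKNdS}, \eqref{eq:rotationbalanceKNdS} are shown to be equivalent to the vanishing of the obstructions, Delay's theorem and the cohomological argument used in the proof of \cref{prop:CorrectionOfObstructionError} produce the order-one correction, \cref{prop:correctionDeltaD} and \cref{prop:correctionP} together with Borel summation produce a formal solution $(g_{(\infty)},A_{(\infty)})$ to infinite order, and the gauge-fixed wave equation \cref{prop:solutionToGaugeFixedEq} yields the true solution; none of these steps reference the internal structure of the RNdS model. The main obstacle I anticipate is the explicit identification of $\gamma_3^{\mathbf{a}}$ and $\alpha_2^{\mathbf{a}}$ -- in particular tracking the angular structure carefully enough that the rotation piece of $\gamma_3^{\mathbf{a}}$ pairs cleanly with $\mathfrak{a}$ under the inner product on $\mathfrak{so}_4$, rather than producing additional, non-balancable obstructions. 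Verifying the two exact factors $1-\lambda\arctan\lambda$ and $(1+\lambda^2)^{-1}$ will require careful bookkeeping of the comoving coordinate Jacobian, but everything downstream is purely formal.
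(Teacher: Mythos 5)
Your proposal follows essentially the same route as the paper: reduce to the leading-order Taylor coefficients of $g_{\mf{m},Q,\mathbf{a}}-g_{dS}$ and $A_{Q,\mathbf{a}}$ in the comoving/$\tau_s$ coordinates, observe that the $Q$-dependence enters the metric only at order $\tau_s^4$ so the $P$-obstruction is literally the Kerr--de Sitter one (giving \eqref{eq:massbalanceKNdS}, \eqref{eq:rotationbalanceKNdS}), extract $Q_{\mathrm{eff}}$ by integrating the angular dependence $\sqrt{\Delta_\theta/\Delta_0}\,(1-\lambda^2\sin^2\theta_0)$ of the $\deld$-error against $\rm{d}h_s$, and then run the unchanged downstream machinery (\cref{prop:CorrectionOfObstructionError}, \cref{prop:correctionDeltaD}, \cref{prop:correctionP}, \cref{prop:solutionToGaugeFixedEq}). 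The identifications you flag as the remaining work -- the exact factors $1-\lambda\arctan\lambda$ and $(1+\lambda^2)^{-1}$ and the clean pairing of the rotation piece with $\mathfrak{so}_4$ -- are precisely the computations the paper carries out (or cites from the Kerr--de Sitter case), so the sketch is correct and not materially different.
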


\begin{remark}
    The scaling factor $1-\lambda \arctan \lambda$ in the effective charge is somewhat surprising in that it differs from the factor of effective mass, and may be zero or even negative. It is positive for subextremal Kerr-Newman-de Sitter at least, as there $0 \leq \lambda \leq 2 - \sqrt{3}$; see, for example, \cite[Figure 2]{davey_strong_2024}.
\end{remark}
To prove this theorem we, as in the RNdS case, start by first naively gluing KNdS black holes into the conformal future and calculate the resulting errors to $P$ and $\deld$ defined in \eqref{eq:definitionPanddeltad}.

For this we again set $\tau_s = 1/r$, so that $g_{\mf{m},Q,\mathbf{a}}$ and $A_{Q,\mathbf{a}}$ become $0$-tensors smooth up to $\tau_s = 0$. We transform the KNdS potential back to the coordinates $(t,r,\theta,\phi)$. This yields (using $\rm{d}t_0 = \rm{d}t, \rm{d}\phi_0 = \rm{d}\phi +(\Lambda/3) \mathbf{a}\rm{d}t$)
\[
\begin{split}
A_{Q, \mathbf{a}} &= - \frac{Q r_0}{\rho^2}\Big(\Big(1-\frac{\Lambda}{3}\mathbf{a}^2\sin^2\theta_0\Big) \rm{d}t - \mathbf{a}\sin^2 \theta_0 \rm{d}\phi\Big) \\
&\equiv - Q\sqrt{\frac{\Delta_\theta}{\Delta_0}}\tau_s^2\Big(\Big(1-\frac{\Lambda}{3}\mathbf{a}^2\sin^2\theta_0\Big) \frac{\rm{d}t}{\tau_s} - \mathbf{a}\sin^2 \theta_0 \frac{\rm{d}\phi}{\tau_s}\Big) \mod \tau_s^3\mc{C}^\infty
\end{split}
\]
Here we have used that $r/r_0 = \sqrt{\Delta_\theta/\Delta_0} + \mc{O}(\tau_s^2)$ and $r_0/\rho^2 = 1/r_0 + \mc{O}(\tau_s^3)$.
This vanishes at the conformal boundary to the order $\tau_s^2$, as in the RNdS case. \\
For lowest order terms of $c_{\mf{m},Q,\mathbf{a}}$ on the other hand, observe that the additional terms involving $Q$ that we get compared to the Kerr-de Sitter case do \emph{not} change the lowest order terms in its Taylor expansion as a $0$-tensor. This is because the $\Delta_0Q^2$ term in $2\mf{m}r_0 - \Delta_0 Q^2$ is one order higher in $\tau_s$ than $2\mf{m}r_0$; compare with the Taylor expansion in the Reissner-Nordström case in \eqref{eq:gamma3andgamma4}. These lowest order terms were calculated in \cite[Lemma 4.1]{blackholegluing}; $c_{\mf{m},Q,\mathbf{a}}$ vanishes to order $\tau_s^3$ at the conformal boundary, as in the RNdS case. \\
Recall from \cref{prop:CorrectionOfObstructionError} and its proof that only the lowest order terms of the difference compared to de Sitter of the potential and metric mattered for finding the obstructions and resulting balance conditions. Also, the resulting balance conditions on the charge and mass were independent of each other, since the lowest order term of the metric perturbations' Taylor series did not involve any $Q$ terms and vice versa for the potential. As the same is the case here, there are therefore no additional conceptual difficulties in the KNdS case.

We get the following error to $\deld$ if we try to glue a single KNdS black hole into de Sitter space:
\begin{lemma}
    Let $\chi \in \mc{C}^\infty(\R_t)$ be a smooth cutoff function that is $1$ for large $t$ and let 
    \[
    \begin{split}
    g_{(3)} &= \chi(t)g_{\mf{m},Q,\mathbf{a}} + (1-\chi(t))g_{dS} \\
    A_{(2)} &= \chi(t)A_{Q,\mathbf{a}} 
    \end{split}
    \]
    be the naively glued metric and potential. Then
    \begin{equation}
        \deld(g_{(3)}, A_{(2)}) \equiv \tau_s^3 \mathrm{Err}_{\deld,s} \mod \tau_s^4 \mc{C}^\infty,
    \end{equation}
    where $\mathrm{Err}_{\deld, s} = - \frac{3Q}{\Lambda} \sqrt{\frac{\Delta_\theta}{\Delta_0}} \Big(1 - \lambda^2 \sin^2\theta\Big) \chi'(t) \frac{\rm{d}\tau_s}{\tau_s}$.
\end{lemma}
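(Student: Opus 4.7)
The plan is to follow the RNdS argument of \cref{lemma:ErrorDeltaD} line-by-line, only substituting the KNdS metric and potential for their RNdS analogues. First, \cref{lemma:DifferentialOperatorsUpToLinearization} gives
\[
\deld(g_{(3)},A_{(2)}) \equiv K_{0,g_{dS},0}\bigl(\chi(g_{\mf{m},Q,\mathbf{a}}-g_{dS}),\, \chi A_{Q,\mathbf{a}}\bigr) \mod \tau_s^5\mc{C}^\infty,
\]
and since $(g_{\mf{m},Q,\mathbf{a}}, A_{Q,\mathbf{a}})$ solves the Einstein--Maxwell system, the right-hand side vanishes identically when $\chi \equiv 1$. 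Collecting Taylor coefficients in $\tau_s$ therefore kills the $\tau_s^2$ contribution and reduces the $\tau_s^3$ error to
\[
[I(K_{0,g_{dS},0}[\tau_s],2),\chi](0,\alpha_2),
\]
with $\alpha_2$ the order-$\tau_s^2$ coefficient of $A_{Q,\mathbf{a}}$, exactly as in the RNdS computation.

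The new input is the expansion of $A_{Q,\mathbf{a}}$ recorded immediately above the lemma, which gives
\[
\alpha_2 = \Bigl(0,\; -Q\sqrt{\Delta_\theta/\Delta_0}\,\bigl((1-\lambda^2\sin^2\theta)\dl{t} - \mathbf{a}\sin^2\theta\,\dl{\phi}\bigr)\Bigr)
\]
in the splitting~\eqref{eq:splittingsymmetric}, replacing the RNdS value $(0,-Q\,\dl{t})$. By the explicit form~\eqref{eq:linearizationdeltad} of the subleading indicial family at $\lambda=2$, together with the fact that $\chi$ depends only on $t$, the commutator reduces to $-(\Lambda/3)[\delta_{h_s},\chi]$ acting on the tangential part of $\alpha_2$, i.e.\ to contraction with $\nabla^{h_s}\chi = (9/\Lambda^2)\chi'(t)\del_t$, with $h_s = (\Lambda^2/9)\dl{t:2}+(\Lambda/3)g_{\sph^2}$ the boundary metric in the cosmological region. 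This contraction annihilates the $\dl{\phi}$ component of $\alpha_2$ and extracts the $\dl{t}$ coefficient, yielding
\[
-\frac{3Q}{\Lambda}\sqrt{\Delta_\theta/\Delta_0}\,(1-\lambda^2\sin^2\theta)\chi'(t)
\]
as the normal coefficient, which is the claimed $\mathrm{Err}_{\deld,s}$.

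There is no substantive obstacle: the argument is a direct adaptation of the RNdS case, and since the indicial-family computation localizes at the conformal boundary, the rotation parameter $\mathbf{a}$ only enters through the coefficient $\alpha_2$. In particular, the $\dl{\phi}$-component of $\alpha_2$ never contributes because the cutoff is radial in $t$; the sole effect of rotation is the factor $\sqrt{\Delta_\theta/\Delta_0}(1-\lambda^2\sin^2\theta)$ that comes straight out of the $\tau_s^2$-expansion of $A_{Q,\mathbf{a}}$.
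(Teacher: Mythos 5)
Your proposal is correct and follows exactly the route the paper intends: the paper's entire proof of this lemma is the remark ``To show this, one may follow the proof of \cref{lemma:ErrorDeltaD},'' and your line-by-line adaptation --- reducing to the commutator $[I(K_{0,g_{dS},0}[\tau_s],2),\chi](0,\alpha_2)$, substituting the KNdS value of $\alpha_2$, and observing that $\iota_{\partial_t}$ kills the $\dl{\phi}$ component so only the $\dl{t}$ coefficient $-Q\sqrt{\Delta_\theta/\Delta_0}(1-\lambda^2\sin^2\theta)$ survives --- is precisely that adaptation, with the signs and the factor $\tfrac{\Lambda}{3}\cdot\tfrac{9}{\Lambda^2}\chi'(t)$ worked out correctly.
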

To show this, one may follow the proof of \cref{lemma:ErrorDeltaD}. \\
As in \cref{prop:correctionDeltaD}, the obstruction to solving away this error is given by the integral of the normal component of the error over the boundary $(\del M_s, h_s) = (\R_t \times \sph^2, \Lambda^2/9 \rm{d}t^2 + \Lambda/3 g_{\sph^2}$). As calculated in \cite[\S 4.1]{blackholegluing} we have
\[
dh_s = \frac{\Lambda^2}{9} \sqrt{\frac{\Delta_0}{\Delta_\theta}} \Delta_\theta^{-1} \rm{d}t_0 \sin \theta_0 \rm{d}\theta_0 \rm{d}\phi_0,
\]
so the obstruction is
\begin{equation}
\begin{split}
    \int_{\del M_s} (\mathrm{Err}_{\deld,s})_N dh_s 
    &= - \frac{3Q}{\Lambda} \int_{0}^{2\pi} \int_0^\pi \int_{-\infty}^\infty \frac{1 - \lambda^2 \sin^2 \theta_0}{1 + \lambda^2 \cos^2 \theta_0}\chi'(t)  \rm{d}t_0 \sin \theta_0 \rm{d}\theta_0 \rm{d}\phi_0 \\
    &= -\frac{12 \pi Q}{\Lambda} (1 - \lambda \arctan \lambda) \\
    &= -\frac{12 \pi}{\Lambda} Q_{\mathrm{eff}}.
\end{split}
\end{equation}
We hence get the following analogue of \cref{lemma:chargebalancecondition}
\begin{lemma}
    Suppose we have tuples $(p_i, \mathfrak{m_i}, Q_i, \mathfrak{a}_i) \in \sph^3 \times \R \times \R \times \mathfrak{so}_4$ with $\mathfrak{a}_i p_i = 0$. Let $\chi_i$ be (arbitrary) cutoff functions on $\sph^3$ which are identically $1$ near $p_i$ and $0$ near $-p_i$. Set $\rm{Err}_{\deld, p_i, Q_i, \mathfrak{a}_i} = \tau^{-3} \deld(\chi_i g_{p_i,\mf{m}_i, Q_i,\mathfrak{a}_i} + (1-\chi_i) g_{dS}, \chi_i A_{Q_i,\mathfrak{a}_i})(\tau\del \tau)\vert_{\tau = 0}$ and $\rm{Err}_{\deld} = \sum_{i=1}^N \rm{Err}_{\deld,p_i, Q_i,\mathfrak{a}_i}$. Then $\rm{Err}_{\deld}$ is a normal $1$-form and 
    \[
    \int_{\sph^3} (\rm{Err}_{\deld})_N \rm{d}g_{\sph^3} = 0
    \]
    if and only if the charge balance condition
    $\sum_{i=1}^N Q_{\rm{eff},i} = 0$
    is satisfied.
\end{lemma}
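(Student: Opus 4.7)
The plan is to reduce to the single-black-hole radial-cutoff computation performed in the display immediately preceding the statement, exactly mirroring the proof of \cref{lemma:chargebalancecondition} in the non-rotating case. By linearity of both the naive gluing construction \eqref{eq:DefinitionMetricg_(3)}--\eqref{eq:DefinitionPotentialA_(2)} and the indicial families of $\deld$ at $(g_{dS},0)$ (see \cref{lemma:DifferentialOperatorsUpToLinearization} and \cref{lemma:StabilityOfLinearization}), the total error splits as $\rm{Err}_{\deld} = \sum_{i=1}^N \rm{Err}_{\deld, p_i, Q_i, \mathfrak{a}_i}$. Each summand is a normal $1$-form because, in the radial model, the preceding single-black-hole lemma gives $\rm{Err}_{\deld, s}$ as an explicit multiple of $\dd\tau_s/\tau_s$; normality is preserved under pullback by rotations $R \in SO(4)$.

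The second step handles arbitrary cutoffs $\chi_i$. Fix $i$, pull back to $p_0$ by a rotation sending $(p_i, \mathfrak{a}_i)$ to $(p_0, \mathfrak{a}_0)$, and decompose $\chi_i = \tilde{\chi}_i + (\chi_i - \tilde{\chi}_i)$ with $\tilde{\chi}_i = \tilde{\chi}_i(t)$ a radial cutoff on $\R_t \times \sph^2$ that is $1$ near $p_0$ and $0$ near $-p_0$. Repeating the commutator calculation from the preceding lemma, this time commuting $\chi_i - \tilde{\chi}_i$ past $I(K_{0, g_{dS}, 0}[\tau_s], 2)$, shows that the normal contribution of the difference term to $(\rm{Err}_{\deld, s})_N$ equals $-\delta_{h_s}\bigl[(\chi_i - \tilde{\chi}_i)(\alpha_2)_T\bigr]$, where $(\alpha_2)_T$ is the tangential part of the order-$\tau_s^2$ coefficient of $A_{p_0, Q_i, \mathfrak{a}_i}$. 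Since $\chi_i - \tilde{\chi}_i$ vanishes near both $\pm p_0$, this $1$-form extends smoothly by zero to the closed manifold $\sph^3$, so its divergence integrates to zero there by the divergence theorem.

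Combining these two reductions with the explicit single-black-hole evaluation
\[
\int_{\del M_s} (\rm{Err}_{\deld, s})_N \, dh_s = -\frac{12\pi}{\Lambda} Q_{\rm{eff}}
\]
yields $\int_{\del M_s} (\rm{Err}_{\deld})_N \, dh_s = -\frac{12\pi}{\Lambda}\sum_{i=1}^N Q_{\rm{eff}, i}$, which vanishes if and only if the charge balance condition $\sum_i Q_{\rm{eff},i} = 0$ holds. Equivalence with the vanishing of $\int_{\sph^3}(\rm{Err}_{\deld})_N \, dg_{\sph^3}$ comes from the conformal-invariance observation already made after \eqref{eq:chargeconditionOne}: under the change of boundary defining function $\tau_s \leftrightarrow \tau$, the normal error $(\rm{Err}_{\deld})_N$ and the boundary volume form scale by reciprocal factors, so the two integrals agree. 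The only genuinely new input beyond the RNdS case is the angular integration producing the factor $1 - \lambda\arctan\lambda$, which converts the bare charge into the effective charge; since this is precisely the computation already carried out in the preceding display, no new analytic obstacle appears, and the main substance of the proof is bookkeeping to align the naive-gluing expressions with the single-black-hole, radial-cutoff model.
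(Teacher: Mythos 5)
Your proposal is correct and follows essentially the same route as the paper, which proves this lemma implicitly by invoking the single-black-hole computation of $\int_{\del M_s}(\mathrm{Err}_{\deld,s})_N\,dh_s = -\tfrac{12\pi}{\Lambda}Q_{\mathrm{eff}}$ together with the argument of \cref{lemma:chargebalancecondition}: linearity over the $N$ black holes, the decomposition $\chi_i = \tilde{\chi}_i + (\chi_i-\tilde{\chi}_i)$ with the difference contributing an exact divergence $-\delta_{h_s}\bigl((\chi_i-\tilde{\chi}_i)(\alpha_2)_T\bigr)$ that integrates to zero by Stokes, and the conformal invariance of the resulting integral. No gaps.
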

Now, by the same considerations as in the RNdS case, the error to $P$ will be the same error as for Kerr-de Sitter black holes. This leads to the same balance for the masses and rotations conditions as for Kerr-de Sitter black holes. These were shown to be the ones in \eqref{eq:massbalanceKNdS} and \eqref{eq:rotationbalanceKNdS} in \cite[\S 4.1]{blackholegluing}. The rest of the gluing construction unfolds without issues, which finishes the proof of \cref{theorem:blackholegluingKNdS}. 

   \printbibliography
\end{document}